\newcommand{\E}{\ensuremath{\hbox{\textbf{E}}}}
\newcommand{\DR}{\ensuremath{\hbox{DR}}}
\newtheorem{theorem}{Theorem}
\newtheorem{lemma}{Lemma}
\newtheorem{remark}{Remark}
\newtheorem{definition}{Definition}
\newcommand{\beq}{\begin{equation}}
\newcommand{\eeq}{\end{equation}}
\newcommand{\bea}{\begin{array}}
\newcommand{\ena}{\end{array}}
\newcommand{\bds}{\begin {itemize}}
\newcommand{\eds}{\end {itemize}}
\newcommand{\bdf}{\begin{definition}}
\newcommand{\blm}{\begin{lemma}}
\newcommand{\edf}{\end{definition}}
\newcommand{\elm}{\end{lemma}}
\newcommand{\bthm}{\begin{theorem}}
\newcommand{\ethm}{\end{theorem}}
\newcommand{\bprp}{\begin{prop}}
\newcommand{\eprp}{\end{prop}}
\newcommand{\bcl}{\begin{claim}}
\newcommand{\ecl}{\end{claim}}
\newcommand{\bcr}{\begin{coro}}
\newcommand{\ecr}{\end{coro}}
\newcommand{\bquest}{\begin{question}}
\newcommand{\equest}{\end{question}}
\newcommand{\larrow}{{\larrow}}
\newcommand{\nin}{{\not \in}}
\def\urltilda{\kern -.15em\lower .7ex\hbox{\~{}}\kern .04em}
\begin{document}\title{Active Hypothesis Testing for Quickest \\Anomaly Detection}
\author{Kobi Cohen and Qing Zhao
\thanks{The authors are with the Department of Electrical and Computer Engineering, University of California, Davis. Email: $\left\{\mbox{yscohen, qzhao}\right\}$@ucdavis.edu}
\thanks{This work was supported by Army Research Lab under Grant W911NF1120086 and by National Science Foundation under Grants CCF-1320065 and CNS-1321115.}
\thanks{Part of this work was presented at the Information Theory and Applications (ITA) Workshop, San Diego, California, USA, Feb. 2014.}
}
\date{}
\maketitle
%
\begin{abstract}
\label{sec:abstract}
The problem of quickest detection of a single anomalous process among a finite number $M$ of processes is considered. At each time, a subset of the processes can be observed, and the observations from each chosen process follow two different distributions, depending on whether the process is normal or abnormal. The objective is a sequential search strategy that minimizes the expected detection time subject to an error probability constraint. This problem can be considered as a special case of active hypothesis testing first considered by Chernoff in 1959 where a randomized strategy, referred to as the Chernoff test, was proposed and shown to be asymptotically (as the error probability approaches zero) optimal. For the special case considered in this paper, we show that a simple deterministic test achieves asymptotic optimality and offers better performance in the finite regime.
We further extend the problem to the case where multiple anomalous processes are present. In particular, we examine the case where only an upper bound on the number of anomalous processes is known.
\end{abstract}
%
\def\keywords{\vspace{.5em}
{\bfseries\textit{Index Terms}---\,\relax%
}}
\def\endkeywords{\par}
\keywords
Sequential detection, anomaly detection, dynamic search, active hypothesis testing, controlled sensing.
\section{Introduction}
\label{sec:intro}

We consider the problem of detecting a single anomalous process among $M$ processes. Borrowing terminologies from target search, we refer to these processes as cells and the anomalous process as the target which can locate in any of the $M$ cells.
The decision maker is allowed to search for the target over $K$ cells at a time ($1\leq K\leq M$). The observations from searching a cell are i.i.d. realizations drawn from two different distributions $f$ and $g$, depending on whether the target is absent or present. The objective is a sequential search strategy that dynamically determines which cells to search at each time and when to terminate the search so that the expected detection time is minimized under a constraint on the probability of declaring a wrong location of the target.

The problem under study applies to intrusion detection in cyber-systems when an intrusion to a subnet has been detected and the objective is to locate the abnormal component in the subnet (since the probability of each component being compromised is small, with high probability, there is only one abnormal component). It also finds applications in target search, fraud detection, and spectrum scanning in cognitive radio networks.

\subsection{A Case of Active Hypothesis Testing}
\label{ssec:AHT}

The above problem is a special case of the sequential experiment design problem first studied by Chernoff in 1959~\cite{Chernoff_1959_Sequential}. Compared with the classic sequential hypothesis testing pioneered by Wald~\cite{Wald_1947_Sequential} where the observation model under each hypothesis is predetermined, the sequential design of experiments has a control aspect that allows the decision maker to choose the experiment to be conducted at each time. Different experiments generate observations from different distributions under each hypothesis. Intuitively, as more observations are gathered, the decision maker becomes more certain about the true hypothesis, which in turn leads to better choices of experiments. Chernoff focused on the case of binary hypotheses and showed that a \emph{randomized} strategy, referred to as the Chernoff test, is asymptotically optimal as the maximum error probability diminishes. Specifically, the Chernoff test chooses the current experiment based on a distribution that depends on past actions and observations.
Variations and extensions of the problem and the Chernoff test were studied in~\cite{Bessler_1960_Theory, Nitinawarat_2012_Controlled, Nitinawarat_2013_Controlled, Nitinawarat_2013_Controlled_cost, Naghshvar_2013_Active, Naghshvar_2013_Sequentiality}, where the problem was referred to as controlled sensing for hypothesis testing in~\cite{Nitinawarat_2012_Controlled, Nitinawarat_2013_Controlled, Nitinawarat_2013_Controlled_cost} and active hypothesis testing in~\cite{Naghshvar_2013_Active, Naghshvar_2013_Sequentiality} (see a more detailed discussion in Section~\ref{ssec:related}).

It is not difficult to see that the quickest anomaly detection problem considered in this paper is a special case of the active hypothesis testing problem considered in~\cite{Chernoff_1959_Sequential, Bessler_1960_Theory, Nitinawarat_2012_Controlled, Nitinawarat_2013_Controlled, Naghshvar_2013_Active, Naghshvar_2013_Sequentiality}. In particular, under each hypothesis that the target is located in a particular cell, the distribution (either $f$ or $g$) of the next observation depends on the cell chosen to be searched. The Chernoff test and its variations proposed in~\cite{Bessler_1960_Theory, Nitinawarat_2012_Controlled, Nitinawarat_2013_Controlled, Naghshvar_2013_Active, Naghshvar_2013_Sequentiality} thus directly apply to our problem. However, in contrast to the randomized nature of the Chernoff test and its variations, we show in this paper that a simple \emph{deterministic} test achieves asymptotic optimality and offers better performance in the finite regime.

\subsection{Main Results}
\label{ssec:main}

Similar to~\cite{Chernoff_1959_Sequential, Bessler_1960_Theory, Nitinawarat_2012_Controlled, Nitinawarat_2013_Controlled, Naghshvar_2013_Active}, we focus on asymptotically optimal policies in terms of minimizing the detection time as the error probability approaches zero. The asymptotic optimality of the Chernoff test as shown in~\cite{Chernoff_1959_Sequential} requires that under any experiment, any pair of hypotheses are distinguishable (i.e., has positive Kullback-Liebler (KL) divergence). This assumption does not hold in the anomaly detection problem considered in this paper. For instance, under the experiment of searching the $i^{th}$ cell, the hypotheses of the target being in the $j^{th}$ ($j\neq i$) and the $k^{th}$ ($k\neq i$) cells yield the same observation distribution $f$. Nevertheless, we show in Theorem~\ref{th:optimality_Chernoff} that the Chernoff test preserves its asymptotic optimality for the problem at hand even without this positivity assumption on all KL divergences. As a result, it serves as a bench mark for comparison.

The Chernoff test, when applied directly to the anomaly detection problem, leads to a randomized cell selection rule: the cells to be searched at the current time are drawn randomly according to a distribution determined by past observations and actions. The main result of this paper is to show that a simple deterministic policy offers the same asymptotic optimality yet with significant performance gain in the finite regime and considerable reduction in implementation complexity. Specifically, under the proposed policy, the selection rule $\phi(n)$ indicating which $K$ cells should be searched at time~$n$ is given by:
\begin{center}
$\bea{l}
\displaystyle \phi(n)=
\begin{cases} \left(m^{(1)}(n), m^{(2)}(n), ..., m^{(K)}(n)\right) \;\;, \vspace{0.1cm}\\ \hspace{2.0cm}
                                                \mbox{if \;$D(g||f)\geq\frac{D(f||g)}{(M-1)}$ or $K=M$}   \vspace{0.3cm}\\
              \left(m^{(2)}(n), m^{(3)}(n), ..., m^{(K+1)}(n)\right) \;\;,\vspace{0.1cm}\\ \hspace{2.0cm} \mbox{if \;$D(g||f)<\frac{D(f||g)}{(M-1)}$ and $K<M$}
\end{cases}
\ena$
\end{center}
where $m^{(i)}(n)$ denotes the cell index with the $i^{th}$ highest sum of log-likelihood ratio (LLR) collected from this cell up to time~$n$, and $D(\cdot||\cdot)$ is the KL divergence between two distributions. Since $D(g||f)$ is the key quantity in the cell selection rule, we refer to the proposed deterministic policy as the DGF policy.

This deterministic selection rule is intuitively satisfying. Consider, for example, $K=1$. In this case, the DGF policy selects, at each time, either the cell with the largest sum LLRs or the cell with the second largest sum LLRs, depending on the order of $D(g||f)$ and $D(f||g)/(M-1)$. The intuition behind this selection rule is that $D(g||f)$ and $D(f||g)/(M-1)$ determine, respectively, the rates at which the state of the cell with the target and the states of the $M-1$ cells without the target can be accurately inferred. Based on the order of these two rates, the DGF policy aims at identifying either the cell with the target or those $M-1$ cells without the target. The selection rule is thus clear by noticing that searching the cell with the second largest sum LLRs will lead to sufficient exploration of all $M-1$ cells without the target since the less explored cells tend to have higher sum LLRs among these $M-1$ cells. A more detailed discussion of the DGF policy and a rigorous proof of its asymptotic optimality are given in Section~\ref{sec:DGF}.

We then extend the problem to the case where multiple anomalous processes are present. In particular, we examine the case where only an upper bound on the number of anomalous processes is known. Interestingly, we show that the Chernoff test may not be practically appealing under the latter setting. We thus consider a modified Bayes risk that better captures the design objective of practical systems and develop a deterministic policy that is again asymptotically optimal.

\subsection{Related Work}
\label{ssec:related}

Chernoff's pioneer work on active hypothesis testing focuses on sequential binary composite hypothesis testing~\cite{Chernoff_1959_Sequential}. The extension to M-ary hypothesis was given by Bessler in~\cite{Bessler_1960_Theory}. In~\cite{Nitinawarat_2013_Controlled}, Nitinawarat et al. considered M-ary active hypothesis testing in both fixed sample size and sequential settings. Under the sequential setting, they developed a modified Chernoff test that is asymptotically optimal without the positivity assumption on all KL divergences as required in~\cite{Chernoff_1959_Sequential, Bessler_1960_Theory}. Furthermore, they examined the asymptotic optimality of the Chernoff test under constraints on decision risks, a stronger condition than the error probability, and developed a modified Chernoff test to meet hard constraints on the decision risks. In \cite{Nitinawarat_2013_Controlled_cost}, a more general model of Markovian Observations and non-uniform
control cost was considered. In~\cite{Naghshvar_2013_Active}, in addition to the asymptotic optimality adopted by Chernoff in~\cite{Chernoff_1959_Sequential}, Naghshvar and Javidi examined active sequential hypothesis testing under the notion of non-zero information acquisition rate by letting the number of hypotheses approach infinity and under a stronger notion of asymptotic optimality. They further studied in~\cite{Naghshvar_2013_Sequentiality} the roles of sequentiality and adaptivity in active hypothesis testing by characterizing the gain of sequential tests over fixed sample size tests and the gain of closed-loop policies over open-loop policies.

Target search or target whereabout problems have been widely studied under various scenarios. Results under the sequential setting can be found in~\cite{Zigangirov_1966_Problem, Klimko_1975_Optimal, Dragalin_1996_Simple, Stone_1971_Optimal}, all assuming single process observations (i.e., $K=1$). Specifically, optimal policies were derived in~\cite{Zigangirov_1966_Problem, Klimko_1975_Optimal, Dragalin_1996_Simple} for the problem of quickest search over Weiner processes. In~\cite{Stone_1971_Optimal}, an optimal search strategy was established under the constraint that switching to a new process is allowed only when the state of the currently probed process is declared. Optimal policies under general distributions or with general multi-process probing strategies remain an open question. In this paper we address these questions under the asymptotic regime as the error probability approaches zero.
Target search with a fixed sample size was considered in~\cite{Tognetti_1968_An, Kadane_1971_Optimal, Zhai_2013_Dynamic, Castanon_1995_Optimal}. In~\cite{Tognetti_1968_An, Kadane_1971_Optimal, Zhai_2013_Dynamic}, searching in a specific location provides a binary-valued measurement regarding the presence or absence of the target. Similar to this paper, Castanon considered in~\cite{Castanon_1995_Optimal} continuous observations: the observations from a location without the target and with the target have distributions $f$ and $g$, respectively. Different from this paper where we consider sequential settings and obtain an asymptotically optimal policy that applies to general distributions,~\cite{Castanon_1995_Optimal} focused on the fixed sample size setting and required a symmetry assumption on the distributions (specifically, $f(x)=g(b-x)$ for some $b$) for the optimality of the proposed policy. The problem of universal outlier hypothesis testing was studied in~\cite{Li_2013_Universal}. Under this setting, a vector of observations containing coordinates with an outlier distribution is observed at each given time. The goal is to detect the coordinates with the outlier distribution based on a sequence of $n$ i.i.d. vectors of observations.

Another set of related work is concerned with sequential detection over multiple independent processes~\cite{Zhao_2010_Quickest, Li_2009_Restless, Caromi_2013_Fast, Cohen_2013_Optimal_GlobalSIP, Cohen_2014_Optimal, Cohen_2014_Asymptotically, Lai_2011_Quickest, Malloy_2012_Sequential, Malloy_2012_Quickest, Tajer_2013_Quick, Hadjiliadis_2009_One, Raghavan_2010_Quickest, Bayraktar_2013_Byzantine, Zhang_2014_Quickest}. In particular, in~\cite{Lai_2011_Quickest}, the problem of identifying the first abnormal sequence among an infinite number of i.i.d. sequences was considered. An optimal cumulative sum (CUSUM) test was established under this setting. Further studies on this model can be found in~\cite{Malloy_2012_Sequential, Malloy_2012_Quickest, Tajer_2013_Quick}. While the objective of finding rare events or a single target considered in~\cite{Lai_2011_Quickest, Malloy_2012_Sequential, Malloy_2012_Quickest, Tajer_2013_Quick} is similar to that of this paper, the main difference is that in~\cite{Lai_2011_Quickest, Malloy_2012_Sequential, Malloy_2012_Quickest, Tajer_2013_Quick} the search is done over an infinite number of i.i.d processes, where the state of each process (normal or abnormal) is independent of other processes. Under this independence assumption, the structure of the solution is to perform an independent sequential test without memory for each process. At each time when the decision maker decides to switch to a different process, the new process is chosen arbitrarily, and a sequential test starts afresh. In this paper, however, the number of the processes is finite and the number of the abnormal ones is known (or an upper bound is known). As a result, the process states are correlated. Under this model, the selection rule that governs which process to observe at each time is crucial in minimizing the detection delay, whereas in~\cite{Lai_2011_Quickest, Malloy_2012_Sequential, Malloy_2012_Quickest, Tajer_2013_Quick} the order at which the processes are observed is irrelevant. Furthermore, in our model, the sequential tests for the processes have memory. When a process is revisited, all the observations obtained during previous visits are taken into consideration in decision making.

Another related problem considered recently deals with detecting the first disorder of a system involving multiple processes~\cite{Hadjiliadis_2009_One, Raghavan_2010_Quickest, Bayraktar_2013_Byzantine, Zhang_2014_Quickest}. In this problem, multiple sensors take observations sequentially from the environment and communicate with a fusion center, which determines whether there is a change in the statistical behavior of the observations. The asymptotic optimality of the multi-chart CUSUMs in detecting the first change-point was studied as the mean time between false alarms approaches to infinity. In~\cite{Hadjiliadis_2009_One}, asymptotic optimality was shown under one-shot schemes, in which the sensors communicate with the fusion center only when they signal an alarm. A Bayesian version of this problem was considered in~\cite{Raghavan_2010_Quickest} under the assumption that the fusion center has perfect information about the observations and a priori knowledge of the statistics of the change process. In~\cite{Bayraktar_2013_Byzantine}, the problem was examined for the case where an unknown subset of sensors are compromised and a fully distributed low complexity detection scheme was proposed to mitigate the performance degradation and recover the log scaling. In~\cite{Zhang_2014_Quickest}, asymptotic optimality of the multi-chart CUSUMs was shown under a coupled system, where observations in one sensor can affect the observations in another. In this paper, however, the goal is to detect the abnormal processes (and not a change point), where the process states are fixed during the detection process.

\subsection{Organization}
\label{ssec:organization}

In Section~\ref{sec:problem} we describe the system model and problem formulation. In Section~\ref{sec:DGF} we propose the deterministic DGF policy and establish its asymptotic optimality. We also provide a comparison of DGF with the randomized Chernoff test. In Section~\ref{sec:extension} we extend the problem to the case where multiple anomalous processes are present and consider both cases of known and unknown number of anomalous processes. In Section~\ref{sec:simulation} we provide numerical examples to illustrate the performance of the proposed policy as compared with the Chernoff test. Section~\ref{sec:conclusion} concludes the paper.

\section{System Model and Problem Formulation}
\label{sec:problem}

\subsection{System Model}
\label{ssec:system}

Consider the following anomaly detection problem. A decision maker is required to detect the location of a single anomalous object (referred as a target) located in one of $M$ cells. If the target is in cell $m$, we say that hypothesis $H_m$ is true. The \emph{a priori} probability that $H_m$ is true is denoted by $\pi_m$, where $\sum_{m=1}^{M}{\pi_m}=1$. To avoid trivial solutions, it is assumed that $0<\pi_m<1$ for all $m$.

At each time, only $K$ ($1\leq K\leq M$) cells can be observed. When cell $m$ is observed at time~$n$, an observation $y_m(n)$ is drawn independently from a distribution in a one-at-a-time manner. If hypothesis $m$ is false, $y_m(n)$ follows distribution $f(y)$; if hypothesis $m$ is true, $y_m(n)$ follows distribution $g(y)$. Let $\mathbf{P}_m$ be the probability measure under hypothesis $H_m$ and $\E_m$ the operator of expectation with respect to the measure $\mathbf{P}_m$.

We define the stopping rule $\tau$ as the time when the decision maker finalizes the search by declaring the location of the target.
Let $\delta\in\left\{1, 2, ..., M\right\}$ be a decision rule, where $\delta=m$ if the decision maker declares that $H_m$ is true.
Let $\phi(n)\in\left\{1, 2, ..., M\right\}^K$ be a selection rule indicating which $K$ cells are chosen to be observed at time~$n$. The time series vector of selection rules is denoted by $\boldsymbol\phi=(\phi(n), n=1, 2, ...)$.
Let $\mathbf{y}_{\phi(n)}(n)$ be the vector of observations obtain from cells $\phi(n)$ at time~$n$ and $\mathbf{y}(n)=\left\{\phi(t), \mathbf{y}_{\phi(t)}(t)\right\}_{t=1}^n$ be the set of all cell selections and observations up to time~$n$. A deterministic selection rule $\phi(n)$ at time~$n$ is a mapping from $\mathbf{y}(n-1)$ to $\left\{1, 2, ..., M\right\}^K$. A randomized selection rule $\phi(n)$ is a mapping from $\mathbf{y}(n-1)$ to probability mass functions over $\left\{1, 2, ..., M\right\}^K$. \vspace{0.2cm}
\begin{definition}
An admissible strategy $\Gamma$ for the sequential anomaly detection problem is given by the tuple $\Gamma=(\tau, \delta, \boldsymbol\phi)$.
\end{definition}

\subsection{Objective}
\label{ssec:objective}

Let $P_e(\Gamma)=\sum_{m=1}^{M}{\pi_m\alpha_m(\Gamma)}$ be the probability of error under strategy $\Gamma$, where $\alpha_m(\Gamma)=\mathbf{P}_m(\delta\neq m|\Gamma)$ is the probability of declaring $\delta\neq m$ when $H_m$ is true. Let $\mathbf{E}(\tau|\Gamma)=\sum_{m=1}^{M}{\pi_m\E_m(\tau|\Gamma)}$ be the average detection delay under $\Gamma$.

We adopt a Bayesian approach as in~\cite{Chernoff_1959_Sequential, Nitinawarat_2012_Controlled} by assigning a cost of $c$ for each observation and a loss of $1$ for a wrong declaration. The Bayes risk under strategy $\Gamma$ when hypothesis $H_m$ is true is given by:
\beq
\label{eq:Bayes_risk_m}
\displaystyle R_m(\Gamma)\triangleq\alpha_m(\Gamma)+c\E_m(\tau|\Gamma) \;.
\eeq
Note that $c$ represents the ratio of the sampling cost to the cost of wrong detections.\\
The average Bayes risk is given by:
\beq
\label{eq:Bayes_risk}
\displaystyle R(\Gamma)=\sum_{m=1}^{M}\pi_m R_m(\Gamma)=P_e(\Gamma)+c\mathbf{E}(\tau|\Gamma) \;.
\eeq

The objective is to find a strategy $\Gamma$ that minimizes the Bayes risk $R(\Gamma)$:
\beq\label{eq:Bayes_formulation1}
\displaystyle\inf_{\Gamma} \;\; R(\Gamma) \;.
\eeq

\subsection{Notations}
\label{ssec:notations}

Let $\mathbf{1}_m(n)$ be the indicator function, where $\mathbf{1}_m(n)=1$ if cell $m$ is observed at time~$n$, and $\mathbf{1}_m(n)=0$ otherwise.
Let
\beq
\label{eq:sum_LLR}
\displaystyle \ell_m(n)\triangleq\log \frac{g(y_m(n))}{f(y_m(n))} \;,
\eeq
and
\beq
\label{eq:sum_LLR}
\displaystyle S_m(n)\triangleq\sum_{t=1}^{n}{\ell_m(t)\mathbf{1}_m(t)}
\eeq
be the log-likelihood ratio (LLR) and the observed sum LLRs of cell $m$ at time~$n$, respectively. We then define $m^{(i)}(n)$ as the index of the cell with the $i^{th}$ highest observed sum LLRs at time~$n$.
Let
\beq
\label{eq:Delta_S}
\Delta S(n)\triangleq S_{m^{(1)}(n)}(n)-S_{m^{(2)}(n)}(n)
\eeq
denote the difference between the highest and the second highest observed sum LLRs at time~$n$.\\
Finally, we define
\beq
\bea{l}
\displaystyle I^*(M,K)\triangleq \vspace{0.2cm}\\ \hspace{0.5cm}
\begin{cases} \displaystyle D(g||f)+D(f||g)\;\;,\;\;\mbox{if \;$K=M$\;,} \vspace{0.3cm}\\
                                 \displaystyle\max\left[\frac{K D(f||g)}{M-1}, D(g||f)+\frac{(K-1)D(f||g)}{M-1}\right] \;\;,
                                                                    \vspace{0.2cm}\\ \hspace{6cm}
                                                             \mbox{if $K<M$\;.} \vspace{0.2cm}
        \end{cases}
\ena
\eeq
In subsequent sections we show that $I^*(M,K)$ plays the role of the rate function, which determines the asymptotically optimal performance of the test. Increasing $I^*(M,K)$ decreases the asymptotic lower bound on the Bayes risk. It is intuitive that $I^*(M,K)$ increases with the observation capability $K$ and decreases with the hypothesis size $M$.

\section{The Deterministic DGF Policy}
\label{sec:DGF}

In this section we propose a deterministic policy, referred to as the DGF policy, to solve (\ref{eq:Bayes_formulation1}). Theorem~\ref{th:optimality_policy1} shows that the DGF policy is asymptotically optimal in terms of minimizing the Bayes risk (\ref{eq:Bayes_risk}) as $c\rightarrow 0$.

\subsection{The DGF Policy}

At each time~$n$, the selection rule $\phi(n)$ of the DGF policy chooses cells according to the order of their sum LLRs. Specifically, based on the relative order of $D(g||f)$ and $D(f||g)/(M-1)$, either the $K$ cells with the top $K$ highest sum LLRs or those with the second to the $(K+1)^{th}$ highest sum LLRs are chosen, i.e.,\footnote{Cells with the same sum LLRs can be ordered arbitrarily.}
\beq
\label{eq:selection_policy1}
\displaystyle \phi(n)=
\begin{cases} \left(m^{(1)}(n), m^{(2)}(n), ..., m^{(K)}(n)\right) \;\;, \vspace{0.1cm}\\ \hspace{2.0cm}
                                                \mbox{if \;$D(g||f)\geq\frac{D(f||g)}{(M-1)}$ or $K=M$}   \vspace{0.3cm}\\
              \left(m^{(2)}(n), m^{(3)}(n), ..., m^{(K+1)}(n)\right) \;\;,\vspace{0.1cm}\\ \hspace{2.0cm} \mbox{if \;$D(g||f)<\frac{D(f||g)}{(M-1)}$ and $K<M$}
\end{cases} \;,
\eeq
The stopping rule and decision rule under the DGF policy are given by:
\beq
\label{eq:stopping_policy1}
\bea{l}
\displaystyle \tau= \inf \left\{n \; : \; \Delta S(n)\geq -\log c\right\}\;,
\ena
\eeq
and
\beq
\label{eq:decision_policy1}
\displaystyle\delta= m^{(1)}(\tau) \;.
\eeq

The deterministic selection rule of the DGF policy can be intuitively explained as follows. Consider the case where $K=1$. If cell $m^{(1)}(n)$ is selected at each given time~$n$, the asymptotic detection time approaches $-\log c/D(g||f)$ since the cell with the target (say $m$) is observed at each given time with high probability (in the asymptotic regime) and the test is finalized once sufficient information is gathered from this cell (for a detailed asymptotic analysis see Appendix~\ref{app:proof_policy1}). In this case, $D(g||f)$ determines the asymptotically optimal performance of the test since $\E_m(\ell_m)=D(g||f)$.
On the other hand, if cell $m^{(2)}(n)$ is selected at each given time~$n$, the asymptotic detection time approaches $-(M-1)\log c/D(f||g)$ since one of the $M-1$ cells without the target is observed at each given time with high probability and the test is finalized once sufficient information is gathered from all these cells. Since $\E_m(\ell_j)=-D(f||g)$ for all $j\neq m$, the asymptotically optimal performance of the test is determined by $D(f||g)/(M-1)$. Therefore, the selection rule selects the strategy that minimizes the asymptotic detection time according to $\max\left[D(g||f),D(f||g)/(M-1)\right]$. When $K>1$, the rates at which the state of cell $m$ and the states of the rest $M-1$ cells can be accurately inferred are given by $D(g||f)+\frac{(K-1)D(f||g)}{M-1}$ and $\frac{K D(f||g)}{M-1}$, respectively. Since $D(g||f)>D(f||g)/(M-1)$ is equivalent to $D(g||f)+\frac{(K-1)D(f||g)}{M-1}>\frac{K D(f||g)}{M-1}$, the selection rule of DGF is thus clear.

\subsection{Performance Analysis}
\label{sec:performance}

The following main theorem shows that the DGF policy is asymptotically optimal in terms of minimizing the Bayes risk as $c$ approaches zero:\vspace{0.2cm}

\begin{theorem}[asymptotic optimality of the DGF policy]
\label{th:optimality_policy1}
Let $R^*$ and $R(\Gamma)$ be the Bayes risks under the DGF policy and any other policy $\Gamma$, respectively. Then\footnote{The notation $f\sim g$ as $c\rightarrow 0$ refers to $\lim_{c\rightarrow 0}f/g=1$.},
\beq
\label{eq:asymptotic_performance}
R^* \;\sim\; \frac{-c\log c}{I^*(M,K)}\;\sim\;\inf_{\Gamma}\;{R(\Gamma)} \;\;\; \mbox{as} \;\;\; c\rightarrow 0 \;.
\vspace{0.2cm}
\eeq
\end{theorem}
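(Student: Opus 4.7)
The plan is to prove Theorem~\ref{th:optimality_policy1} by establishing an achievability bound for the DGF policy and a matching converse bound over all admissible strategies, corresponding to the two $\sim$'s in \eqref{eq:asymptotic_performance}.

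For achievability, I would first control the error. At the stopping time $\tau$, we have $\Delta S(\tau) \geq -\log c$, so the likelihood ratio of the declared hypothesis against any competitor is at least $1/c$. A standard change-of-measure (Wald-type) argument, applied to the deterministic selection rule, yields $\alpha_m \leq (M-1)c$ for every $m$, so the contribution of the error term to the Bayes risk is $O(c)$ and negligible against $-c\log c$. The real work is bounding $\E_m[\tau]$. I would analyze the asymptotic drift of $\Delta S(n)$ under $H_m$ by considering the two cases of the selection rule separately. In Case~1 ($D(g\Vert f)\geq D(f\Vert g)/(M-1)$ or $K=M$), I would show that after an $O(1)$ transient the target cell $m$ becomes $m^{(1)}(n)$ and remains so: its sum LLR has positive drift $D(g\Vert f)$ whenever observed, and non-target cells observed as $m^{(2)},\dots,m^{(K)}$ have drift $-D(f\Vert g)$; by a rotation/ergodic argument the $K-1$ non-target slots cycle through all $M-1$ wrong cells with long-run frequency $(K-1)/(M-1)$ each, giving $\Delta S(n)/n \to D(g\Vert f)+\frac{(K-1)D(f\Vert g)}{M-1}$ a.s. In Case~2 ($D(g\Vert f)<D(f\Vert g)/(M-1)$, $K<M$), after the transient, $m$ settles into the $m^{(1)}$ slot (unobserved, hence $S_m$ freezes), while the $K$ observed cells rotate through the $M-1$ non-target cells at long-run rate $K/(M-1)$, giving $\Delta S(n)/n \to KD(f\Vert g)/(M-1)$. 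In both cases the limiting drift equals $I^*(M,K)$. A Wald-type stopping argument then yields $\E_m[\tau]\leq -\log c / I^*(M,K)\,(1+o(1))$, and assembling \eqref{eq:Bayes_risk} gives $R^* \leq -c\log c/I^*(M,K)\,(1+o(1))$.

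For the converse, I would use an information-theoretic lower bound of the Chernoff/Nitinawarat type. For any admissible $\Gamma=(\tau,\delta,\boldsymbol\phi)$, fix a pair $(m,m')$ and let $q^\Gamma_\phi$ denote the long-run frequency (under $H_m$) of the decision maker selecting the $K$-subset $\phi$; then the expected log-likelihood-ratio process between $H_m$ and $H_{m'}$ accumulated by time $\tau$ has drift $\bar D(q^\Gamma;m,m')=\sum_\phi q^\Gamma_\phi\!\sum_{j\in\phi} D(P_{j\mid m}\Vert P_{j\mid m'})$, which equals $D(g\Vert f)\Pr_{q^\Gamma}(m\in\phi)+D(f\Vert g)\Pr_{q^\Gamma}(m'\in\phi)$. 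A standard data-processing / Wald identity argument then gives $\E_m[\tau]\cdot \bar D(q^\Gamma;m,m') \geq |\log\alpha_{m'}|(1-o(1))$. Taking the worst-case $m'\neq m$ and maximizing over $q$ converts this to $\E_m[\tau]\geq |\log\max_{m'\neq m}\alpha_{m'}|/I^*(M,K)\,(1-o(1))$, since the resulting minimax problem over $K$-subset distributions is solved exactly by the two extreme choices (always include $m$ or never include $m$), yielding the two terms inside the max defining $I^*(M,K)$. Substituting into $R(\Gamma)=\sum_m\pi_m[\alpha_m+c\E_m(\tau)]$ and optimizing the scalar trade-off $\alpha + c|\log\alpha|/I^*(M,K)$ at $\alpha=c/I^*(M,K)$ gives $\inf_\Gamma R(\Gamma)\geq -c\log c/I^*(M,K)\,(1+o(1))$, matching the upper bound.

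The main obstacle is the achievability argument, specifically justifying the ``stabilization'' of the ranking induced by the sum LLRs: unlike the Chernoff test, which samples from a distribution computed from the current posterior, the DGF rule commits to a deterministic ranking-based choice, so one must rule out pathological oscillations in which the target leaves the top slot (Case~1) or enters one of the observed slots (Case~2) too often. I would handle this by combining a strong law for the within-class sum LLRs (negative drift for non-targets, positive drift for the target) with a uniform concentration bound showing that the minimum gap between $S_m$ and the non-target LLRs grows without bound a.s., so excursions causing the target to leave its nominal position occur only finitely often by Borel--Cantelli. Once this is established, the rate calculation reduces to an elementary renewal/ergodic averaging over which non-target cell occupies the rotating slots, and the bound on $\E_m[\tau]$ follows from Wald's identity applied to $\Delta S$ at the stopping time. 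The full derivation is deferred to Appendix~\ref{app:proof_policy1}.
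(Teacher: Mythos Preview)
Your proposal is broadly on target and shares the paper's two-part structure (achievability for DGF plus a universal converse), but the technical routes differ in both halves, and one step in your converse is loose.

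\textbf{Achievability.} The paper also controls the error by the same change-of-measure argument you sketch, obtaining $\alpha_m\le (M-1)c$. For $\E_m[\tau]$, however, instead of arguing $\Delta S(n)/n\to I^*(M,K)$ a.s.\ and invoking Wald, the paper upper bounds $\tau$ by an explicit sum of three last-passage times $\tau_1+n_2+n_3$: $\tau_1$ is exactly your ``stabilization time'' (the last $n$ at which the target is not in the top slot), $n_2$ is the time to push one non-target below threshold, and $n_3$ is the additional time to push \emph{all} remaining non-targets below threshold. Each piece is shown to have exponentially decaying tails via Chernoff bounds, with $n_2$ carrying the dominant $-\log c/I^*(M,K)$ contribution and $\tau_1,n_3$ contributing $o(\log c)$. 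Your Borel--Cantelli/SLLN route would work, but the step you flag as the obstacle---that the non-target slots cycle through all $M-1$ wrong cells with long-run frequency $(K-1)/(M-1)$ or $K/(M-1)$---is precisely where the paper invests the most effort (a ``dynamic range'' lemma controlling $\max_{j\neq m}S_j-\min_{j\neq m}S_j$ at time $\tau_2$, then a separate bound on $n_3$). A bare ergodic claim does not quite suffice, since $\Delta S(n)=\min_{j\neq m}\Delta S_{m,j}(n)$ is a minimum over $M-1$ dependent processes, not a single random walk.

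\textbf{Converse.} Here the approaches genuinely diverge. The paper does \emph{not} use a Wald/data-processing identity on accumulated KL. Instead it shows (i) by change of measure that $\Delta S_m(\tau)\ge -(1-\epsilon)\log c$ with high probability whenever all $\alpha_j=O(-c\log c)$, and (ii) by a deterministic pathwise bound---comparing against the least-observed alternative $j^*(t)=\arg\min_{j\neq m}N_j(t)$ and using the pigeonhole inequality $N_{j^*(t)}(t)\le (Kt-N_m(t))/(M-1)$---that $\max_{t\le n}\Delta S_m(t)\le nI^*(M,K)+W_m^*(n)$ for a zero-mean fluctuation $W_m^*$. Combining (i) and (ii) forces $\tau\gtrsim -\log c/I^*(M,K)$. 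Your route via $\E_m[\tau]\cdot\bar D\ge|\log\alpha_{m'}|$ is a legitimate alternative, and your identification of the max--min over $K$-subset distributions as linear in $\Pr(m\in\phi)$ (hence optimized at the two extremes) is correct. The loose point is the phrase ``long-run frequency $q^\Gamma_\phi$'': for an arbitrary adaptive $\Gamma$ with a random stopping time, such frequencies need not exist. The fix is to replace $q^\Gamma_\phi$ by the well-defined occupation measure $\E_m[N_\phi(\tau)]/\E_m[\tau]$ and apply Wald's identity to the martingale $\Delta S_{m,m'}(n)-\sum_{t\le n}D(P_{\phi(t)}^m\|P_{\phi(t)}^{m'})$; the rest of your argument then goes through.
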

\begin{proof}
For a detailed proof see Appendix~\ref{app:proof_policy1}. We provide here a sketch of the proof. In App.~\ref{ssec:lower_bound}, we show that $\frac{-c\log c}{I^*(M,K)}$ is an asymptotic lower bound on the achievable Bayes risk.
Then, we show in App.~\ref{ssec:asymptotic_policy1} that the Bayes risk $R^*$ under the DGF policy approaches the asymptotic lower bound as $c\rightarrow 0$. Specifically, the asymptotic behavior of $R^*$ is established based on Lemma~\ref{lemma:expected_time_policy1} showing that the asymptotic expected detection time approaches $\frac{-\log c}{I^*(M,K)}$, while the error probability is $O(c)$ following Lemma~\ref{lemma:error_policy1}.

The basic idea in establishing the asymptotic expected detection time under DGF in Lemma~\ref{lemma:expected_time_policy1} is to upper bound the stopping time $\tau$ of DGF by analyzing three last passage times (given in Lemmas~\ref{lemma:tau_1_policy1},~\ref{lemma:tau_2_policy1} and~\ref{lemma:tau_3_policy1}). Specifically, if the stopping rule is disregarded and sampling is continued indefinitely, then three last passage times can be defined: $\tau_1, \tau_2, \tau_3$, where, roughly speaking, $\tau_1$ is the time when the sum LLRs of the true cell (say $m$) is the highest among all the cells for all $n\geq\tau_1$; $\tau_2$ is the time when sufficient information for distinguishing hypothesis $m$ from at least one false hypothesis has been gathered; $\tau_3$ is the time when sufficient information for distinguishing hypothesis $m$ from all false hypotheses has been gathered. It should be noted that $\tau_1, \tau_2, \tau_3$ are not stopping times and the decision maker does not know whether they have arrived (since the true cell is unknown and also $\tau_1, \tau_2, \tau_3$ depend on the future by definition). However, by the definition of $\tau_3$ (see Definition $7$ in Appendix~\ref{app:proof_policy1} for details) the actual stopping time $\tau$ under DGF is upper bounded by $\tau_3$ (i.e., the decision maker does know that for all $n<\tau$, $\tau_3$ surely has not arrived). As a result, $\E(\tau_3)$ is an upper bound of $\E(\tau)$.

To show the asymptotic behavior of $\E(\tau_3)$, define $n_2=\tau_2-\tau_1$ and $n_3=\tau_3-\tau_2$. Thus, $\tau_3=\tau_1+n_2+n_3$. Lemma~\ref{lemma:tau_2_policy1} shows that $\E(n_2)\sim-\log c/I^*(M,K)$ as $c\rightarrow 0$.
Lemma~\ref{lemma:tau_1_policy1} shows that $\E(\tau_1)/\E(n_2)\rightarrow 0$, i.e., $\tau_1$ does not affect the asymptotic detection time. Note that differing from~\cite{Nitinawarat_2013_Controlled}, where only polynomial decay of $\mathbf{P}_m(\tau_1>n)$ was shown under the extended Chernoff test developed to handle indistinguishable hypotheses under some actions, Lemma~\ref{lemma:tau_1_policy1} shows exponential decay of $\mathbf{P}_m(\tau_1>n)$ under DGF. Lemma~\ref{lemma:tau_3_policy1} shows that $\E(n_3)/\E(n_2)\rightarrow 0$.
Combining Lemmas~\ref{lemma:tau_1_policy1},~\ref{lemma:tau_2_policy1} and~\ref{lemma:tau_3_policy1}, we can conclude that $\E(\tau_3)\sim -\log c/ I^*(M,K)$. Since the error probability is $O(c)$ following Lemma~\ref{lemma:error_policy1}, the proof thus completes by noticing that the upper bound on $c\E(\tau)+P_e$ coincides with the lower bound on the achievable Bayes risk.
\end{proof}

\subsection{Comparison with the Chernoff Test}
\label{ssec:comparison}

Next, we analyze the classic randomized Chernoff test proposed in~\cite{Chernoff_1959_Sequential} when it is applied to the anomaly detection problem. We then compare the performance of the proposed DGF policy with the Chernoff test. \vspace{0.2cm}

\subsubsection{The Chernoff Test}
\label{sssec:Chernoff}

The Chernoff test has a randomized selection rule. Specifically, let $q=(q_1, ..., q_N)$ be a probability mass function over a set of $N$ available experiments $u=\left\{u_i\right\}_{i=1}^N$ that the decision maker can choose from, where $q_i$ is the probability of choosing experiment $u_i$. For a general M-ary active hypothesis testing problem, the action at time~$n$ under the Chernoff test is drawn from a distribution $q^*(n)=(q^*_1(n), ..., q^*_N(n))$ that depends on the past actions and observations:
\beq
\label{eq:selection_Chernoff}
\displaystyle q^*(n)=\arg\;\max_{q}\;\min_{j\in\mathcal{M}\setminus\left\{\hat{i}(n)\right\}}
\sum_{u_i}q_i D(p_{\hat{i}(n)}^{u_i}||p_j^{u_i})\;,
\eeq
where $\mathcal{M}$ is the set of the $M$ hypotheses, $\hat{i}(n)$ is the ML estimate of the true hypothesis at time~$n$ based on past actions and observations, and $p_j^{u_i}$ is the observation distribution under hypothesis $j$ when action $u_i$ is taken. The stopping rule and decision rule are given in (\ref{eq:stopping_policy1}), (\ref{eq:decision_policy1})

It can be shown that when applied to the anomaly detection problem, the Chernoff test works as follows. When $D(g||f)\geq D(f||g)/(M-1)$, the Chernoff test selects cell $m^{(1)}(n)$ and draws the rest $K-1$ cells randomly with equal probability from the remaining $M-1$ cells. When $D(g||f)<D(f||g)/(M-1)$, all $K$ cells are drawn randomly with equal probability from cells $\{m^{(2)}(n),m^{(3)}(n),\ldots, m^{(M)}(n)\}$ under the Chernoff test.

Even though the positivity assumption on KL divergences as required in the proof of the asymptotic optimality of the Chernoff test given in~\cite{Chernoff_1959_Sequential} no longer holds for the anomaly detection problem, we show in Theorem~\ref{th:optimality_Chernoff} below that the Chernoff test preserves its asymptotic optimality in this case. Note that in~\cite{Nitinawarat_2013_Controlled}, a modified Chernoff test was developed in order to handle indistinguishable hypotheses under some (but not all) actions. The basic idea of the modified test is to replace the action distribution given in (\ref{eq:selection_Chernoff}) with a uniform distribution for a subsequence of time instants that grows at a sublinear rate with time. This subsequence of arbitrary actions are independent of past observations and affects the finite-time performance. In Theorem~\ref{th:optimality_Chernoff} below we show that this modification is unnecessary for the anomaly detection problem.

\begin{theorem}
\label{th:optimality_Chernoff}
Let $R_{CT}$ and $R(\Gamma)$ be the Bayes risks under the Chernoff test and any other policy $\Gamma$, respectively. Then,
\beq
\label{eq:asymptotic_performance_Chenoff}
R_{CT} \;\sim\; \frac{-c\log c}{I^*(M,K)}\;\sim\;\inf_{\Gamma}\;{R(\Gamma)} \;\;\; \mbox{as} \;\;\; c\rightarrow 0 \;.
\vspace{0.2cm}
\eeq
\end{theorem}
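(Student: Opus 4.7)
The plan is to mirror the proof of Theorem~\ref{th:optimality_policy1}. Since the asymptotic lower bound $-c\log c/I^*(M,K)$ on the Bayes risk established in App.~\ref{ssec:lower_bound} already applies to every admissible strategy, it is enough to prove the matching upper bound $R_{CT}\lesssim -c\log c/I^*(M,K)$ by showing separately that $\E(\tau)\sim -\log c/I^*(M,K)$ and $P_e = O(c)$ under the Chernoff test.

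First I would verify the explicit form of the Chernoff selection rule claimed in Section~\ref{sssec:comparison} by directly evaluating (\ref{eq:selection_Chernoff}) for the anomaly detection structure. When probing a set $S$ of $K$ cells, the per-probe KL divergence between $H_m$ and $H_j$ ($m\neq j$) decomposes as $\mathbf{1}\{m\in S\}D(g\|f)+\mathbf{1}\{j\in S\}D(f\|g)$, since the cells in $S\setminus\{m,j\}$ have identical marginals $f$ under both hypotheses. Averaging over $q$, the objective becomes $\alpha_m D(g\|f)+\min_{j\neq m}\alpha_j D(f\|g)$, where $\alpha_i$ is the marginal probability that cell $i$ is included. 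Under the symmetry of the remaining cells and the constraint $\sum_i\alpha_i=K$, a linear programming argument yields $\alpha_m=1$, $\alpha_j=(K-1)/(M-1)$ when $D(g\|f)\geq D(f\|g)/(M-1)$, and $\alpha_m=0$, $\alpha_j=K/(M-1)$ otherwise. In both cases the optimal value equals $I^*(M,K)$, which is the asymptotic information rate collected at $\hat{i}(n)$.

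Next I would reuse the three--last--passage-time decomposition from App.~\ref{app:proof_policy1}, defining $\tau_1,\tau_2,\tau_3$ exactly as for DGF but for sample paths generated by the Chernoff rule, and upper bound $\E(\tau)\leq \E(\tau_1)+\E(n_2)+\E(n_3)$. For $n\geq\tau_1$, the ML estimate agrees with the true cell~$m$, so the Chernoff action distribution on each path coincides with the stationary one computed above. A Wald-type or martingale argument then delivers $\E(n_2)\sim -\log c/I^*(M,K)$, while a uniform-integrability/overshoot argument gives $\E(n_3)=o(\E(n_2))$, reproducing the counterparts of Lemmas~\ref{lemma:tau_2_policy1}--\ref{lemma:tau_3_policy1}. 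The error-probability bound $P_e=O(c)$ follows from the standard change-of-measure estimate applied to the stopping rule $\Delta S(n)\geq -\log c$, which does not depend on the particular selection rule and hence transfers verbatim from the DGF analysis.

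The main obstacle is the counterpart of Lemma~\ref{lemma:tau_1_policy1}: showing $\mathbf{P}_m(\tau_1>n)$ decays exponentially under a \emph{randomized} rule whose distribution depends on the possibly wrong estimate $\hat{i}(n)$, and without the positivity of all pairwise KL divergences. This is where the modified test in~\cite{Nitinawarat_2013_Controlled} had to inject a sublinear subsequence of uniform actions and only recovered polynomial decay. The key observation that avoids this modification here is structural: irrespective of which $\hat{i}(n)$ the Chernoff rule reacts to, under either regime of $D(g\|f)$ versus $D(f\|g)/(M-1)$ the cell containing the target is probed with probability at least $\min\{1,K/(M-1)\}>0$ at every step, while any wrong cell has strictly negative expected LLR increment whenever it is probed. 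Combining this uniform lower bound on the sampling frequency of the true cell with a Hoeffding/Cram\'er large-deviations estimate on the resulting sum LLRs and a Borel--Cantelli closure yields exponential tails for $\tau_1$ and hence $\E(\tau_1)=o(-\log c)$. Collecting the three bounds produces $\E(\tau)\sim -\log c/I^*(M,K)$, and together with $P_e=O(c)$ this matches the lower bound and completes the proof.
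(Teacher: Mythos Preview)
Your overall strategy coincides with the paper's: the paper states that the proof ``is based on the argument of~\cite{Nitinawarat_2013_Controlled} and the proof of Theorem~\ref{th:optimality_policy1}'', i.e., reuse the lower bound, the change-of-measure error bound, and the $\tau_1,\tau_2,\tau_3$ decomposition, with the only nontrivial point being the control of $\tau_1$ under the randomized rule. Your derivation of the Chernoff selection distribution and of $I^*(M,K)$ as its value is also exactly what the paper uses.

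However, your key structural claim for the $\tau_1$ step is not correct as stated. You assert that ``irrespective of which $\hat{i}(n)$ the Chernoff rule reacts to~\ldots\ the cell containing the target is probed with probability at least $\min\{1,K/(M-1)\}>0$ at every step''. This fails in both regimes. When $D(g\|f)\geq D(f\|g)/(M-1)$ and $K=1$, if $\hat{i}(n)=j\neq m$ the Chernoff test probes $j=m^{(1)}(n)$ deterministically and the target cell $m$ is probed with probability~$0$. When $D(g\|f)<D(f\|g)/(M-1)$, if $\hat{i}(n)=m$ the Chernoff test excludes $m$ entirely and draws only from $\{m^{(2)},\ldots,m^{(M)}\}$, so again the target is probed with probability~$0$. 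Thus the uniform lower bound on the target's sampling frequency, which you feed into a large-deviations bound, does not hold, and your exponential-tail argument for $\tau_1$ has a gap.

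The paper avoids this difficulty by aiming for less: it observes that polynomial decay of $\mathbf{P}_m(\tau_1>n)$ already suffices (since $\tau_1$ does not depend on $c$, summable tails give $\E_m(\tau_1)<\infty=o(-\log c)$). For $K=1$ and $D(g\|f)\geq D(f\|g)/(M-1)$, Chernoff and DGF coincide and Lemma~\ref{lemma:tau_1_policy1} gives exponential decay directly; for $D(g\|f)<D(f\|g)/(M-1)$ the paper invokes the polynomial-decay argument of~\cite{Nitinawarat_2013_Controlled}, and for $K>1$ it argues by monotonicity in $K$. If you wish to retain your self-contained exponential-decay route, the fixable version of your claim is that at every step either the target cell or the current (possibly wrong) ML cell $m^{(1)}(n)$ is probed with probability bounded away from zero, and in both cases the drift of $S_m(n)-S_{m^{(1)}(n)}(n)$ is strictly positive; but this is a different argument than the one you wrote.
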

\begin{proof}
The proof is given in Appendix~\ref{app:proof_Chernoff} and is based on the argument of~\cite{Nitinawarat_2013_Controlled} and the proof of Theorem~\ref{th:optimality_policy1} given in Appendix~\ref{app:proof_policy1}. \vspace{0.2cm}
\end{proof}

\subsubsection{Comparison}
\label{sssec:comparison}

Although both the Chernoff test and the DGF policy are asymptotically optimal, simulation results demonstrate significant performance gain of DGF over the Chernoff test in the finite regime (see Section~\ref{sec:simulation}). Next, we provide an intuition argument for the better finite-time performance of DGF by drawing an analogy between the anomaly detection problem and the makespan scheduling problem.

Consider the problem of scheduling $M$ jobs over $K$ parallel machines ($K\leq M$). Each job requires a deterministic processing time of $T_p$ time units. The objective is to minimize the makespan which is defined as the completion time of all $M$ jobs. Note that when $K>1$, processing a job continuously until it is completed can be highly suboptimal since a certain number of machines are left idle when there are less than $K$ unfinished jobs. Note also that keeping machines idle during the scheduling process increases the makespan for all $K\geq 1$. The optimal solution to this problem is given by the LRPT (the longest remaining processing time first) scheduler~\cite[Theorem 5.2.7]{Pinedo_2012_Scheduling} that schedules, at any time~$n$, the $K$ jobs with the longest remaining processing time.

The anomaly detection problem can be viewed as a problem of scheduling $M-1$ jobs (each being the detection process of distinguishing one of the $M-1$ false hypotheses from the true hypothesis) over $K$ machines (which is the number of cells that the decision maker can probe simultaneously). Consider first $D(g||f)<D(f||g)/(M-1)$. In this case, DGF probes cells $\left(m^{(2)}(n), m^{(3)}(n), ..., m^{(K+1)}(n)\right)$ at each time, while the Chernoff test selects $K$ cells randomly among the cells $\left(m^{(2)}(n), m^{(3)}(n), ..., m^{M}(n)\right)$. Both tests terminate once $\Delta S(n)\geq-\log c$ occurs.
Assume that hypothesis $H_m$ is true. Roughly speaking, following Lemma~\ref{lemma:error_policy1}, once $\Delta S_{m,j}\triangleq S_m(n)-S_j(n)>-\log c$, the decision maker has sufficient evidence to distinguish false hypothesis $H_j$ from the true hypothesis $H_m$. Except during an asymptotically insignificant initial stage of the detection process, cells $\left(m^{(2)}, ..., m^{(M)}(n)\right)$ are the cells without the target (see Lemma~\ref{lemma:tau_1_policy1} for a detailed analysis on the last passage time $\tau_1$ of cell $m^{(1)}(n)$ being the cell with the target for all $n\geq\tau_1$). In this case, cells
$\left(m^{(2)}, ..., m^{(K+1)}(n)\right)$ as selected by DGF can be viewed as the cells with the longest remaining processing times. The randomized Chenoff test, however, may lead to inefficient exploitation of the probing capacity, as explain above for the makespan scheduling problem.
Furthermore, randomly selecting $K$ cells from $\left(m^{(2)}, ..., m^{(M)}(n)\right)$ may result in probing a cell whose state can already be inferred with sufficient accuracy (i.e., $\Delta S_{m,j}>-\log c$ as detailed in Appendix~\ref{app:proof_policy1}), which can be viewed as scheduling a job that is already completed or equivalently, leaving a machine idle in the makespan problem. Such actions, however, will not occur under DGF.
The argument for the case of $D(g||f)>D(f||g)/(M-1)$ is similar by viewing the problem as scheduling $M-1$ jobs over $K-1$ machines. Note that both DGF and the Chernoff test dedicate one machine for probing cell $m^{(1)}(n)$ since under the condition of $D(g||f)>D(f||g)/(M-1)$, probing the cell with the target is preferred to accelerate the detection process.

\section{Extension to Multiple Anomalous Processes}
\label{sec:extension}

In this section we extend the results reported in previous sections to the case where multiple processes are abnormal. In Section~\ref{ssec:L} we consider the detection of $L$ abnormal processes, where $L$ is known. In Section~\ref{ssec:upper_bound_L} we consider the case where an unknown number $\ell\geq 1$ of abnormal processes are present and only an upper bound $\ell\leq L$ is known.

Throughout this section, we define $\mathcal{M}'$ as the set of all possible combinations of target locations, with cardinality $M'=|\mathcal{M}'|$ (i.e., a set of $M'$ hypotheses, $H_{m'}$, indicating that the locations of all targets are given by the $(m')^{th}$ set in $\mathcal{M}'$) and $\pi_{m'}$ as the \emph{a priori} probability that $H_{m'}$ is true. Here, the decision rule declares a set of target locations (i.e., hypothesis $H_{m'}$) and the error probability under policy $\Gamma$ is defined as $P_e(\Gamma)=\sum_{m'=1}^{M'}{\pi_{m'}\alpha_{m'}(\Gamma)}$, where $\alpha_{m'}(\Gamma)=\mathbf{P}_{m'}(\delta\neq H_{m'}|\Gamma)$ is the probability of declaring $\delta\neq H_{m'}$ when $H_{m'}$ is true.

\subsection{Known Number of Abnormal Processes}
\label{ssec:L}

Consider the case where $L$ abnormal processes are located among the $M$ cells and $L$ is known. In this case, the detection problem involves $M'=\binom{M}{L}$ hypotheses. We show below that a variation of the DGF policy, dubbed the DGF(L) policy, is asymptotically optimal under this setting.

The stopping rule and decision rule under the DGF(L) policy are similar to that under the DGF policy:
\beq
\label{eq:stopping_policy2}
\bea{l}
\displaystyle \tau= \inf \left\{n \; : \; \Delta_L S(n)\geq -\log c\right\}\;,
\ena
\eeq
where $\Delta_L S(n)\triangleq S_{m^{(L)}(n)}(n)-S_{m^{(L+1)}(n)}(n)$
and
\beq
\label{eq:decision_policy2}
\displaystyle\delta= (m^{(1)}(\tau), m^{(2)}(\tau), ..., m^{(L)}(\tau)) \;.
\eeq

The selection rule under the DGF(L) policy is more involved and depends on the relative order of $K$ and $L$ (or $M-L$). Specifically,
\beq
\label{eq:selection_policy2}
\displaystyle \phi(n)=
\begin{cases} \phi_g(n) \;\;, \;\;\mbox{if \;$\frac{D(g||f)}{L}\geq\frac{D(f||g)}{M-L}\;,$}   \vspace{0.1cm}\\
\phi_f(n) \;\;, \;\;\mbox{if \;$\frac{D(g||f)}{L}<\frac{D(f||g)}{M-L}\;,$}
\end{cases}
\eeq
where
\beq
\label{eq:selection_policy2g}
\displaystyle \phi_g(n)=
\begin{cases} \left(m^{(1)}(n), m^{(2)}(n), ..., m^{(K)}(n)\right) \;\;, \vspace{0.1cm}\\ \hspace{4cm}
            \mbox{if \;$K\geq L$\;,}   \vspace{0.1cm}\\
\left(m^{(L-K+1)}(n), m^{(L-K+2)}(n), ..., m^{(L)}(n)\right) \;\;, \vspace{0.1cm}\\ \hspace{4cm}
            \mbox{if \;$K<L$\;,}
\end{cases}
\eeq
and
\beq
\label{eq:selection_policy2f}
\displaystyle \phi_f(n)=
\begin{cases} \left(m^{(M-K+1)}(n), m^{(M-K+2)}(n), ..., m^{(M)}(n)\right) \;\;,\vspace{0.1cm}\\ \hspace{4cm}
\mbox{if \;$K> M-L$\;,}
            \vspace{0.3cm}\\
              \left(m^{(L+1)}(n), m^{(L+2)}(n), ..., m^{(L+K)}(n)\right) \;\;,\vspace{0.1cm}\\ \hspace{4cm}
              \mbox{if \;$K\leq M-L$\;.}
\end{cases}
\eeq
It is not difficult to see that when $L=1$, the DGF(L) policy degenerates to the DGF policy.

Next, we analyze the performance of the DGF(L) policy. Let
\beq
\bea{l}
\displaystyle I^*(M,K,L)\triangleq \vspace{0.2cm}\\ \hspace{0.5cm}
\begin{cases}
\displaystyle I_g^*(M,K,L)\;\;,\;\;\mbox{if \;$\frac{D(g||f)}{L}\geq\frac{D(f||g)}{M-L}$\;,}   \vspace{0.1cm}\\
\displaystyle I_f^*(M,K,L)\;\;,\;\;\mbox{if \;$\frac{D(g||f)}{L}<\frac{D(f||g)}{M-L}$\;,}
        \end{cases}
\ena
\eeq
where
\beq
\bea{l}
\displaystyle I_g^*(M,K,L)\triangleq \vspace{0.2cm}\\ \hspace{0.5cm}
\begin{cases}
\displaystyle D(g||f)+\frac{(K-L)D(f||g)}{M-L} \;\;, \vspace{0.1cm}\\ \hspace{4cm}
            \mbox{if \;$K\geq L$\;,}   \vspace{0.1cm}\\
\displaystyle \frac{K D(g||f)}{L} \;\;, \vspace{0.1cm}\\ \hspace{4cm}
            \mbox{if \;$K<L$\;,}
        \end{cases}
\ena
\eeq
and
\beq
\bea{l}
\displaystyle I_f^*(M,K,L)\triangleq \vspace{0.2cm}\\ \hspace{0.5cm}
\begin{cases}
\displaystyle D(f||g)+\frac{(K-M+L) D(g||f)}{L} \;\;,\vspace{0.1cm}\\ \hspace{4cm} \mbox{if \;$K> M-L$\;,}
            \vspace{0.3cm}\\
\displaystyle\frac{K D(f||g)}{M-L} \;\;,\vspace{0.1cm}\\ \hspace{4cm} \mbox{if \;$K\leq M-L$\;.}
        \end{cases}
\ena
\eeq

The following theorem shows the asymptotically optimal performance of the DGF(L) policy:\vspace{0.2cm}
\begin{theorem}
\label{th:optimality_policy2}
Let $R^*$ and $R(\Gamma)$ be the Bayes risks under the DGF(L) policy and any other policy $\Gamma$, respectively. Then,
\beq
\label{eq:asymptotic_performance_policy2}
R^* \;\sim\; \frac{-c\log c}{I^*(M,K,L)}\;\sim\;\inf_{\Gamma}\;{R(\Gamma)} \;\;\; \mbox{as} \;\;\; c\rightarrow 0 \;.
\vspace{0.2cm}
\eeq
\end{theorem}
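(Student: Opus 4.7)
The plan is to mirror the structure of the proof of Theorem~\ref{th:optimality_policy1}, adapting each ingredient to the multi-target setting with $M'=\binom{M}{L}$ hypotheses. There are three pieces to establish: (i) an asymptotic lower bound $-c\log c/I^*(M,K,L)$ on the Bayes risk over all admissible strategies, (ii) a bound $P_e=O(c)$ on the error probability under DGF(L), and (iii) an asymptotic upper bound $\mathbf{E}(\tau)\lesssim -\log c/I^*(M,K,L)$ on the expected detection time under DGF(L).

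For the lower bound, I would use a change-of-measure argument analogous to App.~\ref{ssec:lower_bound}: for any admissible $\Gamma$ achieving error probability $P_e$, standard data-processing/Wald-type inequalities imply $\mathbf{E}_{m'}(\tau|\Gamma)\cdot \max_q \min_{j\neq m'}\sum_i q_i D(p^{u_i}_{m'}\|p^{u_i}_j) \gtrsim -\log P_e$. The inner min--max reduces, because of the special structure of anomaly detection, to the four subcases defining $I^*(M,K,L)$: when the dominant direction is the $L$ target cells (Case~$g$), the worst pair of hypotheses differs by either swapping a declared target with a non-target or by swapping only within the target set (when $K<L$), and symmetrically for Case~$f$. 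Optimizing $q$ over allocations of the $K$ probes per slot yields exactly $I^*(M,K,L)$, so $c\mathbf{E}(\tau|\Gamma)+P_e \gtrsim -c\log c/I^*(M,K,L)$ by the same optimization used in the $L=1$ case.

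The error-probability bound follows essentially verbatim from Lemma~\ref{lemma:error_policy1}, since the stopping condition $\Delta_L S(n)\ge -\log c$ means that when the DGF(L) test stops, the ratio of the likelihood of the declared $L$-subset to that of every alternative $L$-subset is at least $e^{-\log c}=1/c$; a union bound over the $\binom{M}{L}-1$ false hypotheses then gives $P_e=O(c)$. For the expected detection time, I would introduce three last-passage times adapted to the $L$-target case: $\tau_1$ the last time after which the top-$L$ sum LLRs correspond exactly to the true target set, $\tau_2$ the last time after which $\Delta_L S(\cdot)$ exceeds the threshold needed to eliminate the easiest alternative, and $\tau_3$ the last time after which it exceeds $-\log c$ with enough margin to rule out every alternative. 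As in the proof of Theorem~\ref{th:optimality_policy1}, $\tau\le\tau_3$ almost surely, and one writes $\tau_3=\tau_1+(\tau_2-\tau_1)+(\tau_3-\tau_2)$.

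The main obstacle will be the $\tau_2-\tau_1$ step, and specifically the two "exploratory" subcases $K<L$ in Case~$g$ and $K\le M-L$ in Case~$f$, where not every relevant cell can be probed simultaneously. After $\tau_1$, the DGF(L) selection rule in Case~$g$ restricted to $K<L$ probes the cells indexed $m^{(L-K+1)},\dots,m^{(L)}$, i.e., the $K$ target cells with the currently smallest sum LLRs; this self-balancing rotation makes each of the $L$ target cells receive a $K/L$ fraction of the probes asymptotically, so the smallest target-cell sum LLR grows at rate $K D(g\|f)/L$, matching $I_g^*(M,K,L)$. When $K\ge L$, all targets are probed every slot and the extra $K-L$ probes rotate among non-targets, giving the rate $D(g\|f)+(K-L)D(f\|g)/(M-L)$. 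Case~$f$ is symmetric with the roles of targets and non-targets and of $D(g\|f)$ and $D(f\|g)$ interchanged. The rotation claim can be proved by the same argument as in Lemma~\ref{lemma:tau_2_policy1}: if any cell in the active set were probed a strictly smaller fraction of the time than the others, its sum LLR would drift away from the rest and it would eventually fall to the bottom of the active group, forcing DGF(L) to probe it, which is a contradiction. Once this rate is established, $\tau_1$ and $\tau_3-\tau_2$ are shown to be $o(-\log c)$ by extending Lemmas~\ref{lemma:tau_1_policy1} and~\ref{lemma:tau_3_policy1}; the exponential tail bound for $\tau_1$ requires a Chernoff/union bound over the $\binom{M}{L}-1$ wrong $L$-subsets and is the most technically involved carry-over, since "the true subset is the top-$L$" is more complex than "the true cell is the top one." Combining the three pieces, the upper bound on $c\mathbf{E}(\tau)+P_e$ matches the lower bound, completing the proof.
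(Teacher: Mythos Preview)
Your proposal is correct and follows essentially the same architecture as the paper's proof: establish the lower bound on the Bayes risk, show $P_e=O(c)$ via a change-of-measure argument on the stopping condition $\Delta_L S(n)\ge -\log c$, and show $\mathbf{E}(\tau)\lesssim -\log c/I^*(M,K,L)$ by decomposing $\tau$ into three last-passage times. The paper itself gives only a brief sketch (Appendix~\ref{app:proof_policy2}) that invokes the analogues of Lemmas~\ref{lemma:error_policy1} and~\ref{lemma:expected_time_policy1}, together with the informal observation that in the asymptotic regime the DGF(L) rule allocates probes so that each relevant cell sees the correct fraction of observations (e.g., all $L$ targets plus a fraction $(K-L)/(M-L)$ of non-targets when $K\ge L$ and Case~$g$ holds), yielding the claimed rate.

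One small clarification on your error-probability step: the log-likelihood ratio between the declared $L$-subset and an alternative $L$-subset that differs in $r\ge 1$ cells is a sum of $r$ terms each of which is at least $\Delta_L S(\tau)$, so the minimum over all alternatives is indeed achieved by a single swap of $m^{(L)}$ with $m^{(L+1)}$; your union bound then gives $P_e\le\bigl(\binom{M}{L}-1\bigr)c=O(c)$, matching the paper. On the lower bound, your plan uses the Chernoff-style $\max_q\min_j$ characterization directly, whereas the paper's Appendix~\ref{ssec:lower_bound} proves the analogue via a change-of-measure (Lemma~\ref{lemma:Delta_S_O}) plus a pathwise growth bound on $\Delta S_m$ (Lemma~\ref{lemma:Delta_S}); these are equivalent routes and your version is arguably cleaner.
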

%
\begin{proof}
See Appendix~\ref{app:proof_policy2}.
\end{proof}
\vspace{0.2cm}

Note that in the DGF(L) policy, all $L$ targets are declared simultaneously at the termination time of the detection. A modification to DGF(L) leads to a policy where abnormal processes are declared sequentially during the detection. Consider, for example, $K=1$ and $\frac{D(g||f)}{L}\geq\frac{D(f||g)}{M-L}$. It can be shown (with minor modifications to Theorem~\ref{th:optimality_policy2}) that an asymptotically optimal policy is to test the cell with the largest sum LLRs and declare the first target once the largest sum LLRs exceeds the threshold $-\log c$. The same procedure is then applied to the remaining $M-1$ cells. This repeats until $L$ abnormal processes have been declared, at which point, the detection terminates. The asymptotic expected termination time is given by $-L\log c/D(g||f)$ with $P_e=O(c)$. Even though the total detection time remains the same order as under the DGF(L) policy, this modified version may be more appealing from a practical point of view. In particular, actions can be taken to fix each abnormal process the moment it is identified; the total impact to the system by these $L$ abnormal processes can thus be reduced.
If $\frac{D(g||f)}{L}<\frac{D(f||g)}{M-L}$, it can be shown that an asymptotically optimal policy is to test the cell with the smallest sum LLRs and declare the first \emph{normal} process once the smallest sum LLRs drops below $\log c$. The same procedure is then applied to the remaining $M-1$ processes and is repeated until all $M-L$ objects are declared as normal (thus, the $L$ remaining ones are declared as abnormal). The asymptotic expected termination time is given by $-(M-L)\log c/D(f||g)$ with $P_e=O(c)$.
Even though in this case, the modified version also declares all $L$ targets simultaneously at the termination time of the detection, the difference is that this modified version incurs much few switchings among processes than the DGF(L) policy. This may be more advantageous in some practical scenarios when switching among tested processes results in additional cost or delay. To see that the modified version incurs few switchings, we note that the modified version tests the process that the decision maker is most sure to be normal based on past observations while DGF(L) tests the process that the decision maker is least sure to be normal except the $L$ processes currently considered as the targets (see the second line in (\ref{eq:selection_policy2f}) which shows that DGF(L) chooses the cell with the $(L+1)^{th}$ largest sum LLRs; the $L$ processes with larger sum LLRs are the current maximum likelihood of the target locations). It should be noted that those modified DGF(L) schemes are expected to achieve the same performance as DGF(L) in both the finite and asymptotic regimes when $K=1$ following a similar argument as in Section~\ref{ssec:comparison}.

\subsection{Unknown Number of Abnormal Processes}
\label{ssec:upper_bound_L}

In this section we consider the interesting case where the number $\ell$ of abnormal processes (or targets) is unknown. It is only known that $\ell$ is bounded by $1\leq\ell\leq L$. We consider the case where $K=1$.
We also assume that the number of cells satisfies:
\beq\label{eq:M}
M\geq\frac{L\left(D(g||f)+D(f||g)\right)}{D(g||f)}\;.
\eeq
Note that (\ref{eq:M}) implies $\frac{D(g||f)}{L}\geq\frac{D(f||g)}{M-L}$.

Throughout this section, we allow the decision maker to declare the target locations sequentially during the test (similar to the modified DGF(L) policy as discussed at the end of Section~\ref{ssec:L}). We refer to the detection time~$\tau_d$ as the time when the last target has been declared and to the termination time~$\tau$ as the time when the decision maker terminates the test. Note that $\tau=\tau_d$ when the number $\ell$ of targets is known (as discussed in previous sections). When $\ell$ is unknown, however, $\tau_d\leq\tau$ since the decision maker does not know whether it has already identified all targets at time~$\tau_d$. In general, the termination time~$\tau$ increases linearly with $M$ under any policy with $P_e=O(c)$ whenever $\ell<L$. This is due to the fact that even if the $\ell$ targets have been detected with sufficient reliability, the decision maker must verify whether there are other targets in the remaining $M-\ell$ cells before terminating the test. On the other hand, following the modified DGF(L) policy, one would expect to achieve a detection time~$\tau_d$ less than $-L\log c/D(g||f)$ for all $\ell\leq L$, which is independent of the number $M$ of total processes.

In scenarios with a large number of processes and $L<<M$, a policy that focuses on minimizing the termination time~$\tau$, which grows linearly with $M$, may not be practically appealing. It is desirable to have a policy that allows each abnormal process to be identified and fixed as quickly as possible \emph{during} the test. In other words, it is desirable to have a policy that minimizes the detection time~$\tau_d$ rather than the termination time~$\tau$. In this case, even though the test continues after $\tau_d$ to ensure there are no other targets, all abnormal processes have been fixed by the detection time~$\tau_d$ and cease to incur cost to the system. We thus modify the objective function to the following Bayes risk:
\beq
\label{eq:Bayes_risk_D}
\displaystyle R(\Gamma)\triangleq P_e(\Gamma)+c\mathbf{E}(\tau_d|\Gamma) \;,
\eeq
and we are interested in finding a strategy $\Gamma$ that minimizes the Bayes risk (\ref{eq:Bayes_risk_D}) This design objective is similar to that considered in~\cite{Cohen_2013_Optimal_GlobalSIP, Cohen_2014_Optimal, Cohen_2014_Asymptotically}.

Before presenting the desired solution for this case, we demonstrate with a specific example that even though the Chernoff test is asymptotically optimal in terms of minimizing the termination time~$\tau$, it is highly suboptimal in terms of minimizing the detection time~$\tau_d$.
Assume that $L=2$ is the upper bound on the number of targets, which can locate in any of $M=3$ cells. As a result, the detection problem includes $6$ hypotheses, $H_1=\left\{1\right\}, H_2=\left\{2\right\}, H_3=\left\{3\right\}, H_4=\left\{1,2\right\}, H_5=\left\{1,3\right\}, H_6=\left\{2,3\right\}$. The observation model under every hypothesis and cell selection is given in Table~\ref{tab:observation}.
\begin{table}
\caption{Observation Model}
	\centering
		\small\begin{tabular}{|c|c|c|c|}
			\hline  & cell $1$ & cell $2$ & cell $3$ \\  \hline
             $H_1=\left\{1\right\}$ & g & f & f   \\ \hline
             $H_2=\left\{2\right\}$ & f & g & f   \\ \hline
             $H_3=\left\{3\right\}$ & f & f & g \\ \hline
             $H_4=\left\{1,2\right\}$ & g & g & f \\ \hline
             $H_5=\left\{1,3\right\}$ & g & f & g \\ \hline
             $H_6=\left\{2,3\right\}$ & f & g & g \\
      \hline  			
    \end{tabular}
    \label{tab:observation}
\end{table}
Assume that hypothesis $H_1$ is true and that $\hat{H}(n)=H_1$, where $\hat{H}(n)$ is the ML estimate of the true hypothesis at time~$n$. Consider a deterministic policy that selects the cells according to the order of their sum LLRs and declares an object as target if $S_m(n)>-\log c$ or normal if $S_m(n)<\log c$. This policy achieves $\tau_d\sim-\log c/D(g||f)$ (since cell $1$ is first identified as a target with high probability) and $\tau\sim-\log c/D(g||f)-2\log c/D(f||g)$ (since the number of targets is unknown and $L=2$, thus the decision maker must continue testing the normal processes before terminating the test). On the other hand, the Chernoff test (which aims to minimize the termination time) will not select cell $1$ at time~$n$, since $H_4$ or $H_5$ minimizes (\ref{eq:selection_Chernoff}), i.e., $D(p_{H_1}^1||p_{H_4}^1)=D(p_{H_1}^1||p_{H_5}^1)=D(g||g)=0$. It can be verified that selecting randomly cells $2$ or $3$ with equal probability $1/(M-1)=1/2$ maximizes (\ref{eq:selection_Chernoff}) and achieves a rate function $D(f||g)/(M-1)=D(f||g)/2$, which results in $\tau=\tau_d\sim-2\log c/D(f||g)$, which is greater than the detection time under the above deterministic policy. Intuitively speaking, once cells $2, 3$ are identified as normal, cell $1$ is identified as abnormal (because at least $1$ target is present). Therefore, the Chernoff test observes cells $2, 3$ to minimize the termination time~$\tau$ (by not testing cell $1$), while increasing the detection time~$\tau_d$.

Next, we present a deterministic policy to minimize the Bayes risk (\ref{eq:Bayes_risk_D}). Let $\mathcal{T}(n)$ be the set of cells satisfying $S_m(n)\geq-\log c$ at time~$n$. Define
\beq
\tilde{m}^{(1)}(n)=\arg\;\max_{m\nin\mathcal{T}(n)}S_m(n) \;.
\eeq
The selection rule is given by:
\beq
\label{eq:selection_policy3}
\displaystyle \phi(n)=\tilde{m}^{(1)}(n) \;.
\eeq
The stopping rule and decision rule are given by:
\beq
\label{eq:stopping_policy3}
\bea{l}
\displaystyle \tau= \inf \left\{n \; : \; |S_m(n)|\geq -\log c \;\; \forall m\right\} \;,
\ena
\eeq
and
\beq
\label{eq:decision_policy3}
\displaystyle\delta= \mathcal{T}(\tau_d) \;.
\eeq
Note that $\delta$ denotes the target locations and the complete set is declared at time~$\tau_d$. Since the number of targets is unknown, the decision maker continues taking observations to verify that there is no other target. The test is terminated at time~$\tau$.

The following theorem shows the asymptotically optimal performance of the proposed policy:\vspace{0.2cm}

\begin{theorem}
\label{th:optimality_policy3}
Let $\ell\leq L$ be the number of targets, $K=1$ and assume that (\ref{eq:M}) holds. Let $R^*$ and $R(\Gamma)$ be the Bayes risks (\ref{eq:Bayes_risk_D}) under the proposed policy and any other policy $\Gamma$, respectively. Then,
\beq
\label{eq:asymptotic_performance_policy3}
R^* \;\sim\; \frac{-\ell c\log c}{D(g||f)}\;\sim\;\inf_{\Gamma}\;{R(\Gamma)} \;\;\; \mbox{as} \;\;\; c\rightarrow 0 \;.
\vspace{0.2cm}
\eeq
\end{theorem}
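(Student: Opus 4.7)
My plan is to establish Theorem~\ref{th:optimality_policy3} through matching asymptotic upper and lower bounds on the modified Bayes risk (\ref{eq:Bayes_risk_D}), adapting the last-passage-time machinery developed for Theorem~\ref{th:optimality_policy1} to the setting in which only the detection time $\tau_d$ is penalized and the target cardinality $\ell$ is unknown.

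For the lower bound, I would fix any hypothesis $H_{m'}$ with target set $\mathcal{A}$ of size $\ell$ and any admissible $\Gamma$ with $P_e(\Gamma)=O(c)$, and let $n_m(\tau_d)$ denote the number of probes of cell $m\in\mathcal{A}$ up to $\tau_d$. The alternative hypothesis $H_{m''}$ obtained from $H_{m'}$ by flipping cell $m$ from abnormal to normal differs only through the $g$-vs-$f$ distribution at cell $m$, so the change-of-measure / Wald--Kullback inequality used in App.~\ref{ssec:lower_bound} yields
\beq
\E_{m'}[n_m(\tau_d)]\,D(g||f)\;\ge\;(1-o(1))(-\log c).
\eeq
Since $K=1$ gives $\tau_d=\sum_{m=1}^{M} n_m(\tau_d)\ge\sum_{m\in\mathcal{A}}n_m(\tau_d)$, summing over $\mathcal{A}$ and averaging over the prior produces $c\,\mathbf{E}(\tau_d)\ge -(1-o(1))\ell c\log c/D(g||f)$, hence $R(\Gamma)\ge -(1-o(1))\ell c\log c/D(g||f)$.

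For the upper bound, I would reuse the sequential last-passage-time technique of Lemmas~\ref{lemma:tau_1_policy1}--\ref{lemma:tau_3_policy1}, applied successively to each of the $\ell$ targets. The structural observation is that, by the SLLN and Chernoff-type tail bounds on the LLR partial sums, after a last passage time with exponentially decaying tail every target cell has a positive sum LLR and every normal cell has a negative sum LLR. Since $\tilde{m}^{(1)}(n)$ selects the undeclared cell with the largest sum LLR, from this point on the policy probes only cells in $\mathcal{A}$, and a Wald-type overshoot bound with positive drift $D(g||f)$ shows that each target $m$ is absorbed into $\mathcal{T}(n)$ after at most $(1+o(1))(-\log c)/D(g||f)$ additional probes. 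Chaining the $\ell$ target-identification phases gives $\E(\tau_d)\le(1+o(1))\,\ell(-\log c)/D(g||f)$. Condition (\ref{eq:M}), equivalent to $D(g||f)/L\ge D(f||g)/(M-L)$, is used here to ensure that the per-probe rate $D(g||f)$ at which targets accrue evidence dominates the per-normal-cell rate $D(f||g)/(M-L)$, so that no normal cell surfaces above the undeclared targets during the detection phase. Finally, $P_e=O(c)$ is standard: at the random time a cell is added to $\mathcal{T}(n)$ its sum LLR exceeds $-\log c$, so a Wald likelihood-ratio bound per cell plus a union bound over the $M$ cells gives $P_e=O(c)$. Combining with the lower bound closes the proof.

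The main obstacle is the last step sketched above: uniformly controlling the chance that normal cells interrupt the sequential declaration of targets. Unlike the $L=1$ analysis of DGF, here the eligible pool contains every normal cell throughout the detection phase, since the selection rule removes a cell only when it crosses the \emph{upper} threshold $-\log c$, not the lower one. One must therefore bound, uniformly over $n\le(1+o(1))\ell(-\log c)/D(g||f)$, the probability that some $S_j$ with $j\in\mathcal{A}^c$ exhibits an upward fluctuation large enough to overtake the smallest undeclared-target sum LLR. Hoeffding/Chernoff tail bounds applied to partial sums with negative drift $-D(f||g)$, together with the drift separation provided by (\ref{eq:M}), give the required exponential control and keep the expected slack at $o(-\log c)$, completing the matching bound and yielding $R^*\sim -\ell c\log c/D(g||f)$.
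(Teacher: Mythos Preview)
Your proposal is essentially correct and tracks the paper's own proof, which is itself only a sketch: the paper obtains the lower bound by appealing to the genie-aided (known $\ell$) case via Theorem~\ref{th:optimality_policy2}, and obtains the upper bound by invoking the Lemma~\ref{lemma:tau_1_policy1}-type last passage time to argue that the $\ell$ targets are probed before the $M-\ell$ normal cells with high probability, after which the error bound of Lemma~\ref{lemma:error_policy1} applies. Your write-up simply supplies more of the mechanics.

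Two small points deserve attention. First, in your lower bound, the alternative $H_{m''}$ obtained by ``flipping cell $m$ from abnormal to normal'' lies in $\mathcal{M}'$ only when $\ell\ge 2$; when $\ell=1$ you must instead swap the lone target to another cell (so the measures differ at two cells) or invoke Theorem~\ref{th:lower_bound} directly, which under (\ref{eq:M}) gives $I^*(M,1)=D(g||f)$ and hence the same bound. Second, your attribution of (\ref{eq:M}) to the upper bound is off: once the last passage time has elapsed, normal cells are never probed (their sum LLRs are already below every undeclared target's), so they cannot ``surface'' regardless of (\ref{eq:M}). The condition is needed on the \emph{lower bound} side: it guarantees $D(g||f)/\ell\ge D(f||g)/(M-\ell)$ for every $\ell\le L$, so that $I^*(M,1,\ell)=D(g||f)/\ell$ and the genie-aided bound from Theorem~\ref{th:optimality_policy2} coincides with $-\ell c\log c/D(g||f)$. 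Without (\ref{eq:M}), the known-$\ell$ optimum would instead probe normal cells and the two sides would not match.
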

\begin{proof}
See Appendix~\ref{app:proof_policy3}.
\end{proof}
\vspace{0.2cm}

\section{Numerical Examples}
\label{sec:simulation}

In this section we present numerical examples to illustrate the performance of the proposed deterministic policy as compared to the Chernoff test. We simulated a single anomalous object (i.e., target) located in one of $M$ cells with the following parameters: The \emph{a priori} probability that the target is present in cell $m$ was set to $\pi_m=1/M$ for all $1\leq m\leq M$.
When cell $m$ is observed at time~$n$, an observation $y_m(n)$ is independently drawn from a distribution $f\sim\exp(\lambda_f)$ or $g\sim\exp(\lambda_g)$, depending on whether the target is absent or present, respectively. It can be verified that:
\begin{center}
$\bea{l}
\displaystyle D(g||f)=\log(\lambda_g)-\log(\lambda_f)+\frac{\lambda_f}{\lambda_g}-1\;, \vspace{0.1cm} \\
\displaystyle D(f||g)=\log(\lambda_f)-\log(\lambda_g)+\frac{\lambda_g}{\lambda_f}-1 \;.
\ena$
\end{center}
Let $R_{DGF}, R_{Ch}$ be the Bayes risks under the DGF policy and the Chernoff test, respectively. Let $R_{LB}=\frac{-c\log c}{I^*(M,K)}$ be the asymptotic lower bound on the Bayes risk as $c\rightarrow 0$. We define:
\begin{center}
$\bea{l}
\displaystyle L_{DGF}\triangleq \frac{R_{DGF}-R_{LB}}{R_{LB}} \;,\vspace{0.2cm}\\
\displaystyle L_{Ch}\triangleq \frac{R_{Ch}-R_{LB}}{R_{LB}} \;.
\ena$
\end{center}
as the relative loss in terms of Bayes risk under the DGF policy and the Chernoff test, respectively, as compared to the asymptotic lower bound. Following Theorems~\ref{th:optimality_Chernoff},~\ref{th:optimality_policy1}, we expect both $L_{DGF}$ and $L_{Ch}$ to approach $0$ as $c\rightarrow 0$. $L_{DGF}$ and $L_{Ch}$ serve as performance measures of the tests in the finite regime.

First, we consider the case where $M=5$ and $K=1$. Note that when $K=1$ and $D(g||f)\geq D(f||g)/(M-1)$, the Chernoff test coincides with the DGF policy: they both select cell $m^{(1)}(n)$. When $D(g||f)<D(f||g)/(M-1)$, however, the proposed policy selects cell $m^{(2)}(n)$, while the Chernoff test selects cell $j\neq m^{(1)}(n)$ randomly at each given time~$n$. We set $\lambda_f=0.5, \lambda_g=10$ and obtain $D(g||f)\approx 2.05, D(f||g)/(M-1)\approx 4$. As a result, the Chernoff test and the DGF policy have different cell selection rules.
The performance of the Algorithms is presented in Fig.~\ref{fig:fig2a},~\ref{fig:fig2b}, were $10^7$ trials were performed. In Fig.~\ref{fig:fig2a}, the asymptotic lower bound on the expected sample size and the average sample sizes achieved by the algorithms are presented as a function of $c$ (log-scale). In Table~\ref{tab:std_confidnce} we present the sample standard deviations $\sigma$ and the standard deviation multipliers $r$ for a 95\% confidence intervals $[\bar{\tau}-r\sigma,\bar{\tau}+r\sigma]$, where $\bar{\tau}$ is the average detection delay. In Fig.~\ref{fig:fig2b}, $L_{DGF}$ and $L_{Ch}$ are presented as a function of $c$. Although both schemes approach the asymptotic lower bound as $c\rightarrow 0$, it can be seen that the DGF policy significantly outperforms the Chernoff test in the finite regime for all values of $c$.
\begin{figure}[t] 
\begin{center}
    \subfigure[Average sample sizes and the asymptotic lower bound as a function of the cost per observation.]{\scalebox{0.45}
    {
      \label{fig:fig2a}{\epsfig{file=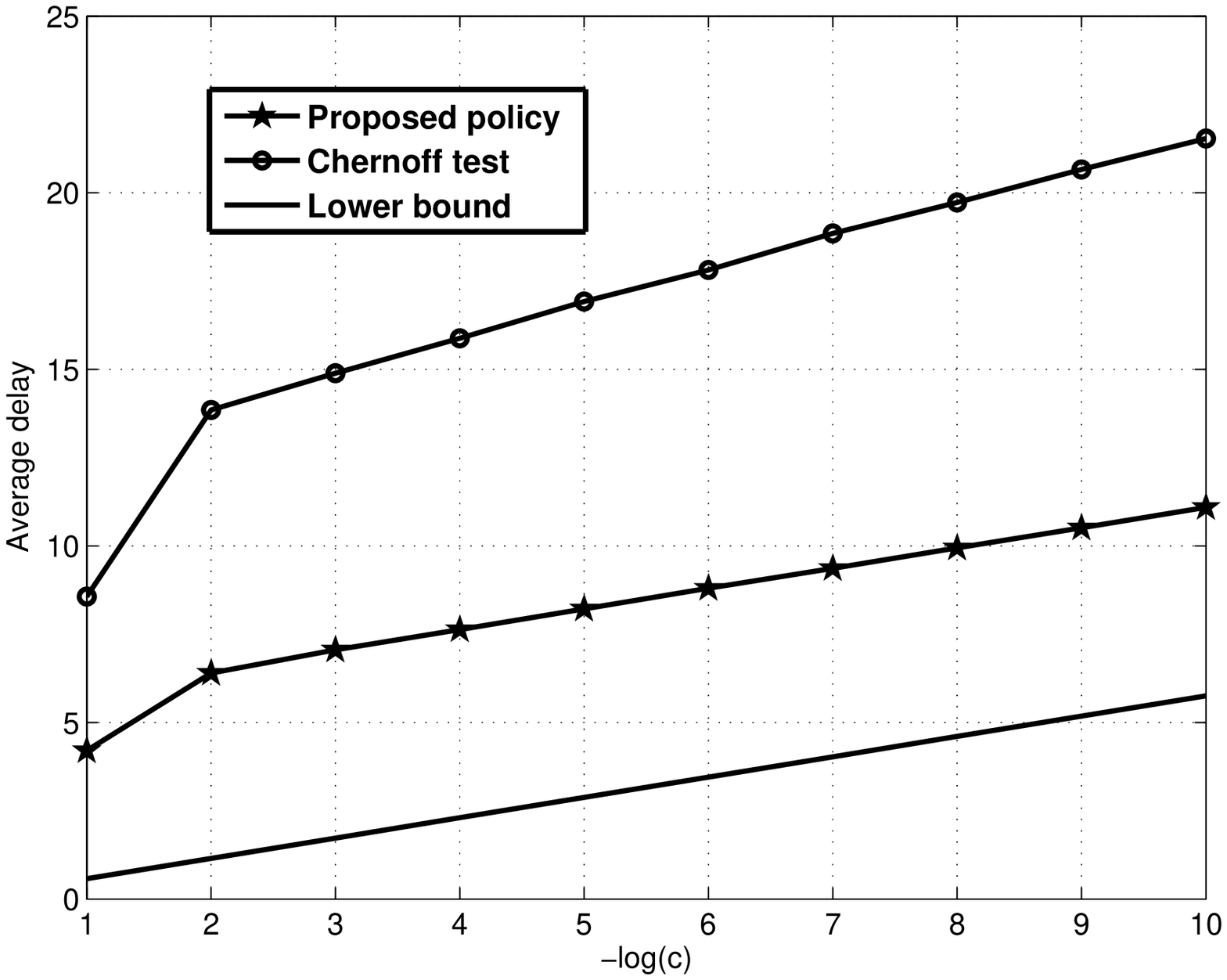}}
    }}
    \subfigure[The loss in terms of Bayes risk under the DGF policy and the Chernoff test as compared to the asymptotic lower bound. $L_{DGF}, L_{Ch}$ approach $0$ as $c\rightarrow 0$]{\scalebox{0.45}
    {
      \label{fig:fig2b}{\epsfig{file=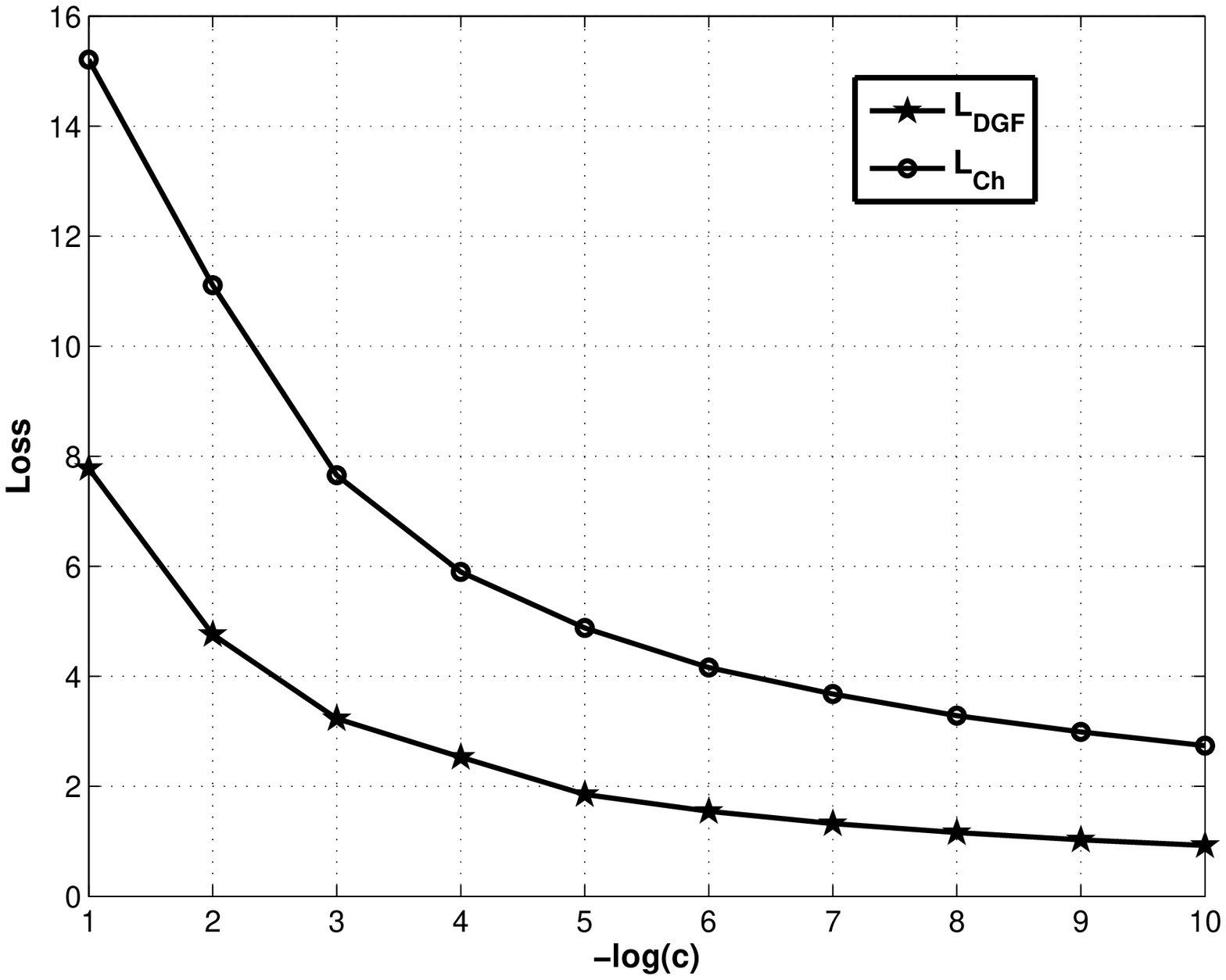}}
    }}
   \caption{Performance comparison for $M=5$, $K=1$, $\lambda_f=0.5, \lambda_g=10$}
  \label{fig:fig2}
\end{center}
  \end{figure}

\begin{table}
\caption{Values for 95\% Confidence Level}
	\centering
		\small\begin{tabular}{|c|c|c|}
	\hline $-\log c$ & DGF        & Chernoff Test  \\  \hline
         $1$ & $\sigma=1.84\;,\;r=1.60$ & $\sigma=6.75\;,\;r=1.90$  \\ \hline
         $3$ & $\sigma=2.00\;,\;r=1.80$ & $\sigma=7.17\;,\;r=1.95$
         \\  \hline
         $5$ & $\sigma=2.24\;,\;r=1.90$ & $\sigma=7.75\;,\;r=1.95$
         \\  \hline
    \end{tabular}
    \label{tab:std_confidnce}
\end{table}

Next, we consider the case where $M=5$ and $K=2$ (i.e., two cells are observed at a time). In this case, the DGF policy selects cells $m^{(1)}(n)$ and $m^{(2)}(n)$ at each given time~$n$ only if $D(g||f)\geq D(f||g)/(M-1)$. Otherwise, it selects cells $m^{(2)}(n)$ and $m^{(3)}(n)$. The Chernoff test selects cells $m^{(1)}(n)$ and $j\neq m^{(1)}(n)$ (randomly) at each given time~$n$ only if $D(g||f)\geq D(f||g)/(M-1)$. Otherwise, it selects cells $i, j\neq m^{(1)}(n)$ randomly. First, we set $\lambda_f=2, \lambda_g=10$ and obtain $D(g||f)\approx 0.8, D(f||g)/(M-1)\approx 0.6$. The performance of the algorithms is presented in Fig.~\ref{fig:fig3a},~\ref{fig:fig3b}. Next, we set $\lambda_f=0.5$, $\lambda_g=10$ and obtain $D(g||f)\approx 2.05, D(f||g)/(M-1)\approx 4$. The performance of the algorithms in this case is presented in Fig.~\ref{fig:fig4a},~\ref{fig:fig4b}. In Fig.~\ref{fig:fig3a},~\ref{fig:fig4a}, the asymptotic lower bound on the expected sample size and the average sample sizes achieved by the algorithms are presented as a function of the cost per observation $c$. In Fig.~\ref{fig:fig3b},~\ref{fig:fig4b}, $L_{DGF}$ and $L_{Ch}$ are presented as a function of $c$.
\begin{figure}[t] 
\begin{center}
    \subfigure[Average sample sizes and the asymptotic lower bound as a function of the cost per observation.]{\scalebox{0.45}
    {
      \label{fig:fig3a}{\epsfig{file=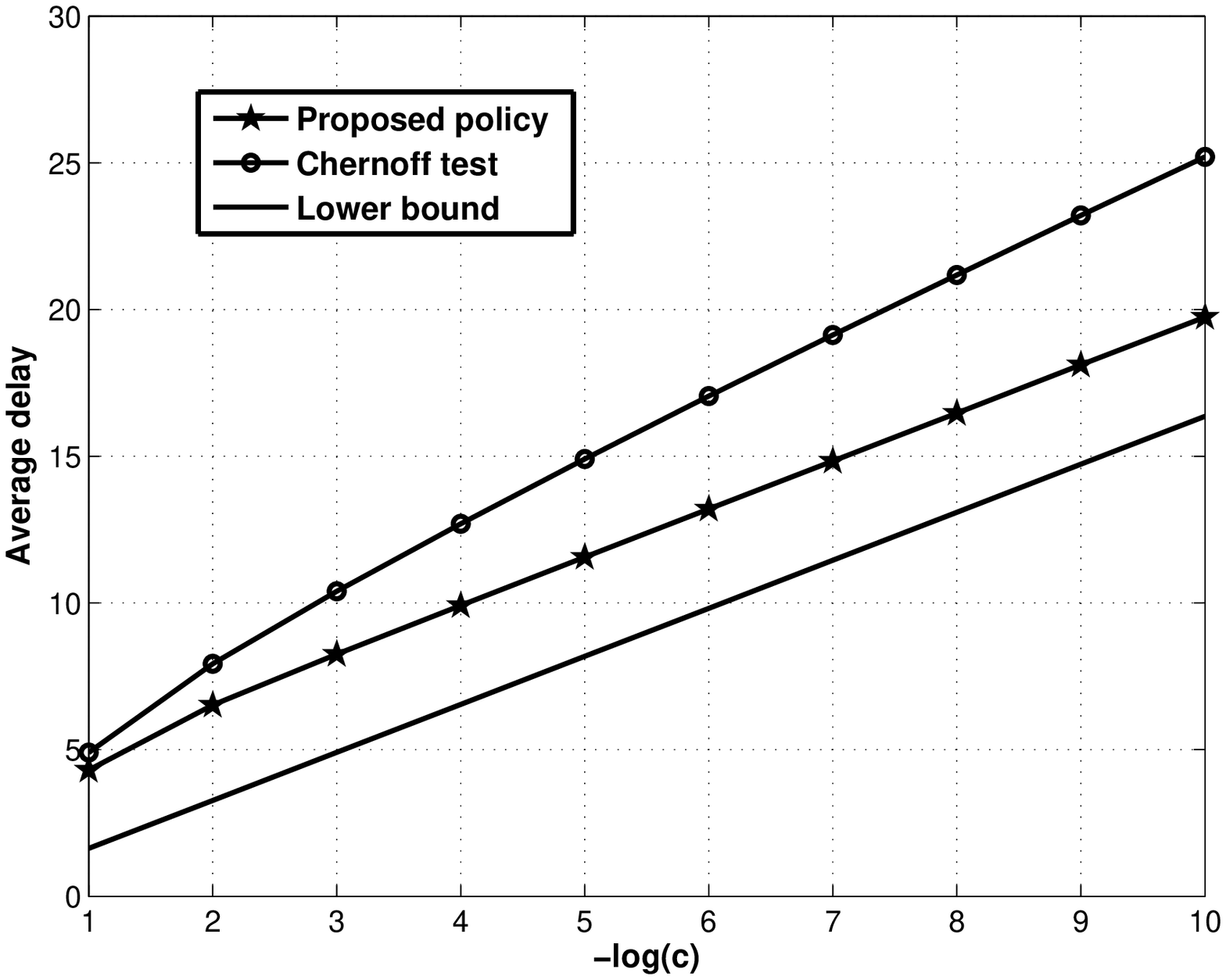}}
    }}
    \subfigure[The loss in terms of Bayes risk under the DGF policy and the Chernoff test as compared to the asymptotic lower bound. $L_{DGF}, L_{Ch}$ approach $0$ as $c\rightarrow 0$]{\scalebox{0.45}
    {
      \label{fig:fig3b}{\epsfig{file=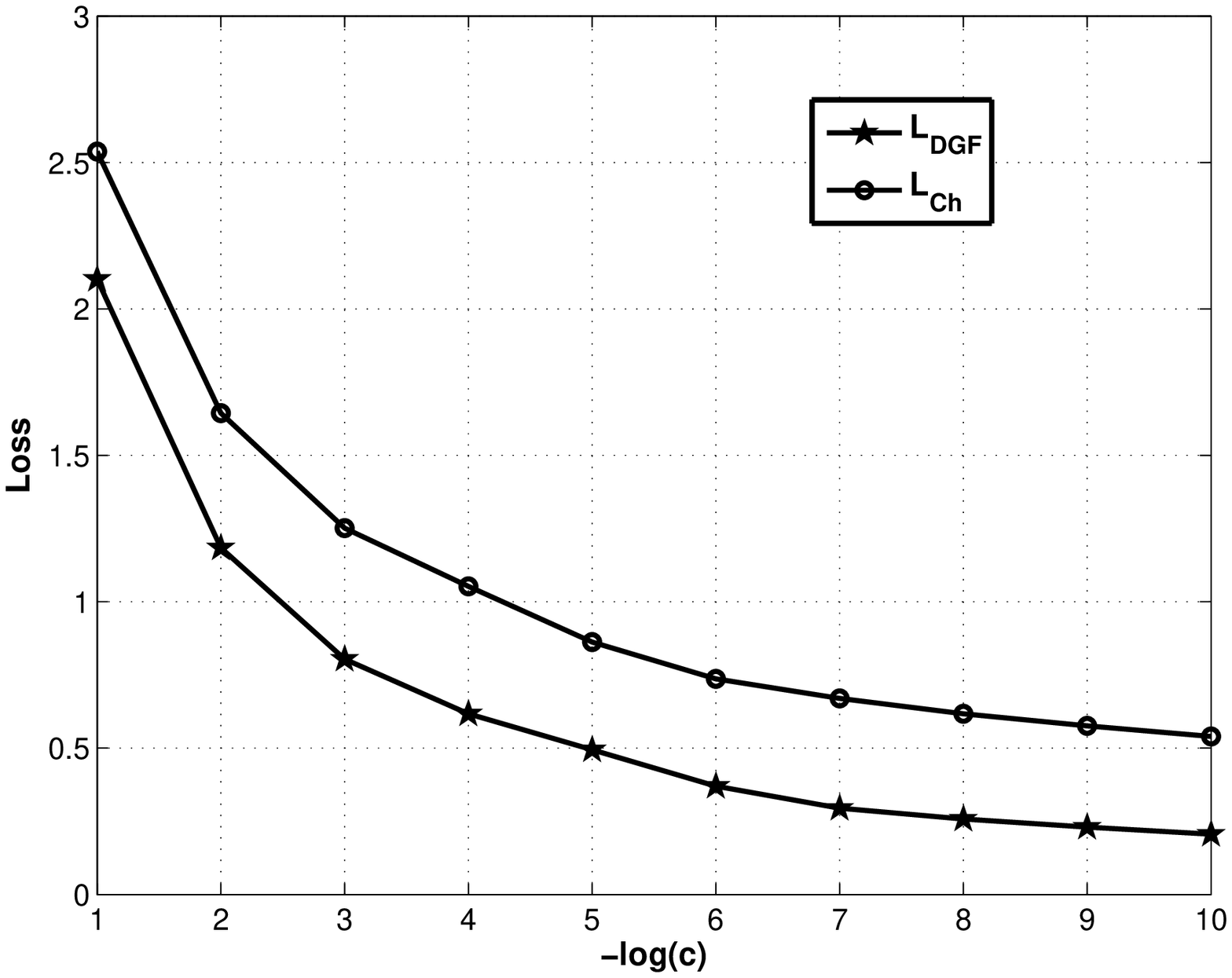}}
    }}
   \caption{Performance comparison for $M=5$, $K=2$, $\lambda_f=2, \lambda_g=10$}
  \label{fig:fig3}
\end{center}
  \end{figure}
It can be seen that the DGF policy significantly outperforms the Chernoff test in the finite regime for all values of $c$ under all cases. These results demonstrate the advantage of using the deterministic selection rule applied by the DGF policy instead of the randomized Chernoff test for the anomaly detection problem.
\begin{figure}[t] 
\begin{center}
    \subfigure[Average sample sizes and the asymptotic lower bound as a function of the cost per observation.]{\scalebox{0.45}
    {
      \label{fig:fig4a}{\epsfig{file=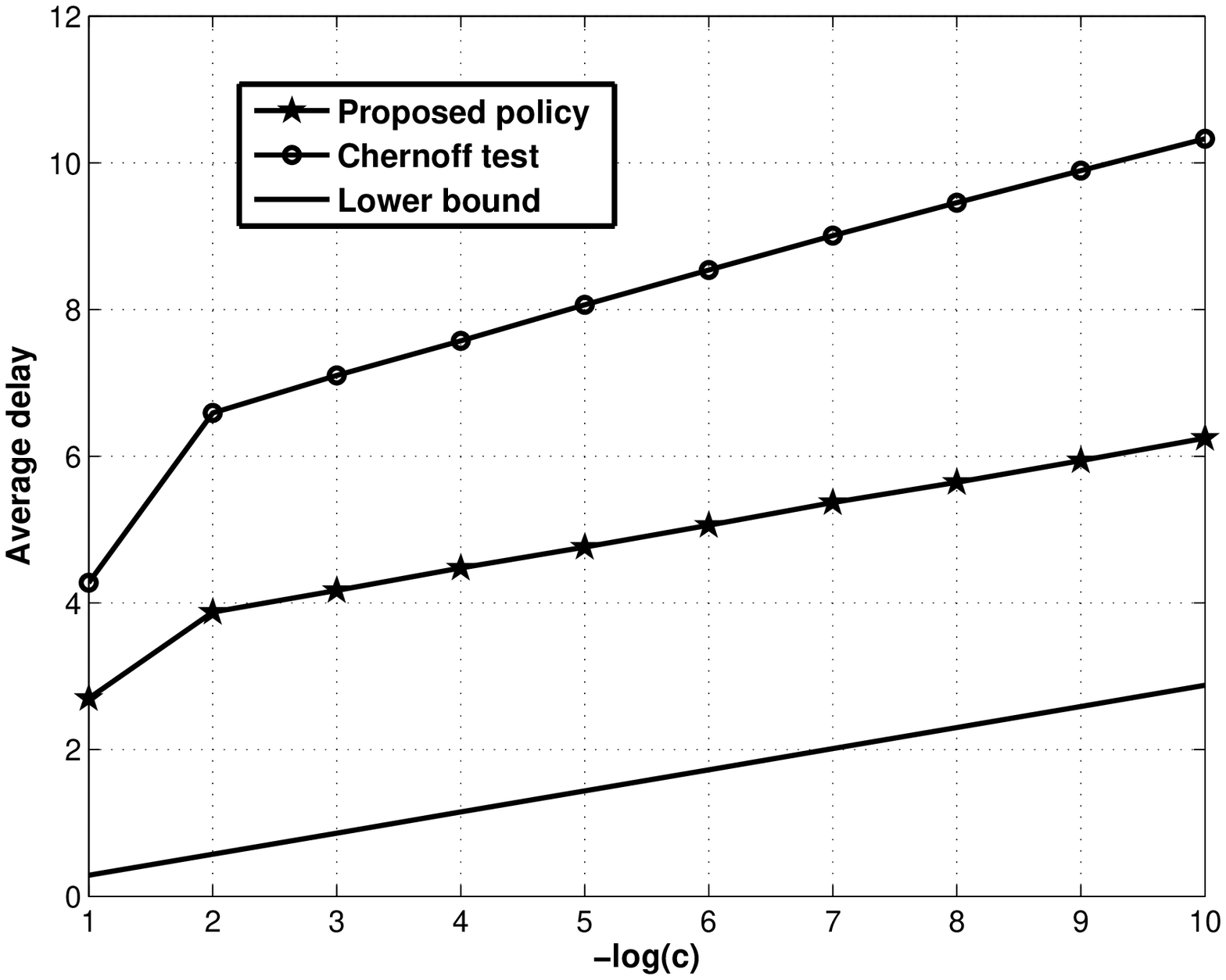}}
    }}
    \subfigure[The loss in terms of Bayes risk under the DGF policy and the Chernoff test as compared to the asymptotic lower bound. $L_{DGF}, L_{Ch}$ approach $0$ as $c\rightarrow 0$]{\scalebox{0.45}
    {
      \label{fig:fig4b}{\epsfig{file=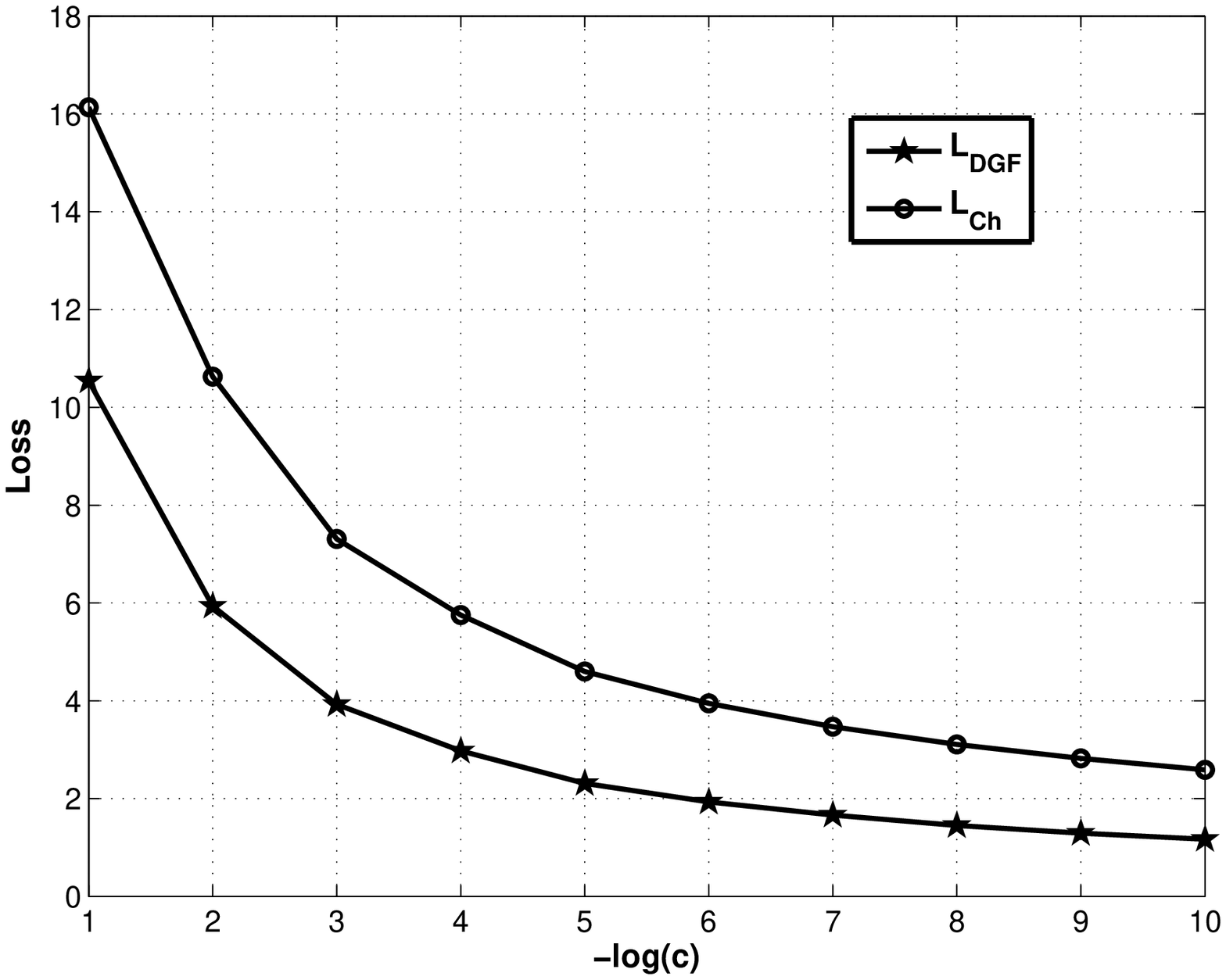}}
    }}
   \caption{Performance comparison for $M=5$, $K=2$, $\lambda_f=0.5, \lambda_g=10$}
  \label{fig:fig4}
\end{center}
  \end{figure}

\section{Conclusion}
\label{sec:conclusion}

The problem of quickest detection of an anomalous process (i.e., target) among $M$ processes (i.e., cells) was investigated. Due to resource constraints, only a subset of the cells can be observed at a time, The objective is a search strategy that minimizes the expected search time subject to an error probability constraint. The observations from searching a cell are realizations drawn from two different distributions $f$ or $g$, depending on whether the target is absent or present, respectively. A simple deterministic policy was established to solve the Bayesian formulation of the search problem, where a cost of $c$ per observation and a loss of $1$ for wrong decisions are assigned. It is shown that the proposed index policy is asymptotically optimal in terms of minimizing the Bayes risk as $c$ approaches zero.

The problem was further extended to handle the case where multiple anomalous processes are present. In particular, the interesting case where only an upper bound on the number of anomalous processes is known was considered. We showed that existing methods may not be practically appealing under the latter setting. Hence, we proposed a modified optimization problem for this case. Asymptotically optimal deterministic policies were developed for these cases as well.

\section{Appendix}
\label{app}

\subsection{Proof of Theorem~\ref{th:optimality_policy1}}
\label{app:proof_policy1}

In this appendix we prove the asymptotic optimality of the DGF policy as $c\rightarrow 0$. In App.~\ref{ssec:lower_bound}, we show that $\frac{-c\log c}{I^*(M,K)}$ is an asymptotic lower bound on the Bayes risk that can be achieved by any policy $\Gamma$. Then, we show in App.~\ref{ssec:asymptotic_policy1} that the Bayes risk $R^*$ under the DGF policy, approaches the asymptotic lower bound as $c\rightarrow 0$. Specifically, the asymptotic optimality property of DGF is based on Lemma~\ref{lemma:expected_time_policy1}, showing that the asymptotic expected search time approaches $\frac{-\log c}{I^*(M,K)}$, while the error probability is $O(c)$ following Lemma~\ref{lemma:error_policy1}.

Throughout the appendix we use the following notations: Let
\beq
N_j(n)\triangleq\sum_{t=1}^{n}{\mathbf{1}_j(t)}
\eeq
be the number of times that cell $j$ has been observed up to time~$n$.\\
We define
\beq
\label{eq:Delta_S_m_j}
\Delta S_{m,j}(n)\triangleq S_m(n)-S_j(n) \;,
\eeq
as the difference between the observed sum of LLRs of cells $m$ and $j$. Let
\beq
\label{eq:Delta_S_m}
\Delta S_m(n)\triangleq\min_{j\neq m} \Delta S_{m,j}(n) \;.
\eeq
Thus,
\beq
\label{eq:Delta_S}
\Delta S(n)= S_{m^{(1)}(n)}(n)-S_{m^{(2)}(n)}(n)=\max_{m} \Delta S_m(n) \;.
\eeq

Without loss of generality we prove the theorem when hypothesis $m$ is true. For convenience, we define
\beq
\displaystyle \tilde{\ell}_k(i)=
\begin{cases} \ell_k(i)-D(g||f) \;,\;
                                                \mbox{if $k=m$,}   \vspace{0.3cm}\\
              \ell_k(i)+D(f||g) \;,\;
              \mbox{if $k\neq m$.}
\end{cases}
\eeq
Note that $\tilde{\ell}_k(i)$ is a zero-mean r.v under hypothesis $H_m$. \vspace{0.2cm}

\subsubsection{The Asymptotic Lower bound on the Bayes risk}
\label{ssec:lower_bound}
\hspace{0.0cm}\vspace{0.2cm}\\
The asymptotic lower bound on the Bayes risk is shown in Theorem~\ref{th:lower_bound} below and is mainly based on Lemmas~\ref{lemma:Delta_S_O},~\ref{lemma:Delta_S}, provided below. Throughout this section, $\tau$ denotes a generic stopping time that can be determined by any policy $\Gamma$. In Section~\ref{ssec:asymptotic_policy1}, however, we will refer to $\tau$ as the specific stopping time under the DGF policy.
Lemma~\ref{lemma:Delta_S_O} shows that under $\mathbf{P}_m$, $\Delta S_m(\tau)$, defined in (\ref{eq:Delta_S_m}), must be large enough to obtain a sufficiently small error $\alpha_m$. Lemma~\ref{lemma:Delta_S} implies that $\tau$ must be large enough to obtain a sufficiently large $\Delta S_m(\tau)$.
 \vspace{0.2cm}
\begin{lemma}
\label{lemma:Delta_S_O}
Assume that $\alpha_j(\Gamma)=O(-c\log c)$ for all $j=1, ..., M$. Let $0<\epsilon<1$. Then:
\beq
\mathbf{P}_m\left(\Delta S_m(\tau)< -\left(1-\epsilon\right)\log c\;|\;\Gamma\right)=O(-c^{\epsilon}\log c) \;,
\eeq
for all $m=1, ..., M$.
\vspace{0.2cm} \\
\end{lemma}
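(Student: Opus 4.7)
The plan is to bound the probability by a union bound over $j\neq m$ combined with a change-of-measure argument. Since $\Delta S_m(\tau) = \min_{j\neq m} \Delta S_{m,j}(\tau)$, the event in question is the union over $j\neq m$ of $\{\Delta S_{m,j}(\tau) < -(1-\epsilon)\log c\}$, so by a union bound it suffices to show that $\mathbf{P}_m(\Delta S_{m,j}(\tau) < -(1-\epsilon)\log c \mid \Gamma) = O(-c^{\epsilon}\log c)$ for each fixed $j\neq m$.

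The key observation is that $\Delta S_{m,j}(n)$ is exactly the log-likelihood ratio of $H_m$ against $H_j$ based on all observations up to time $n$: under $H_m$ and $H_j$ the observation densities coincide for every cell other than $m$ and $j$, while the contributions from cells $m$ and $j$ are $\ell_m(t)\mathbf{1}_m(t)$ and $-\ell_j(t)\mathbf{1}_j(t)$ respectively, summing to $S_m(n)-S_j(n)$. Because the (possibly randomized) selection rule $\phi(t)$ depends only on past observations and exogenous randomization, the action kernels cancel in the Radon--Nikodym derivative between $\mathbf{P}_m$ and $\mathbf{P}_j$, yielding Wald's likelihood-ratio identity: for any $A\in\mathcal{F}_\tau$,
\begin{equation*}
\mathbf{P}_m(A\mid\Gamma) \;=\; \E_j\!\bigl[\exp(\Delta S_{m,j}(\tau))\,\mathbf{1}_A \,\big|\, \Gamma\bigr].
\end{equation*}

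Next I would split the probability according to the terminal decision $\delta$. The portion on $\{\delta\neq m\}$ is trivially at most $\mathbf{P}_m(\delta\neq m\mid\Gamma)=\alpha_m(\Gamma)=O(-c\log c)$. For the portion on $\{\delta=m\}$, apply the identity above with $A=\{\Delta S_{m,j}(\tau)<-(1-\epsilon)\log c\}\cap\{\delta=m\}$; on this event $\exp(\Delta S_{m,j}(\tau))<c^{-(1-\epsilon)}$, so pulling this deterministic bound out of the expectation gives
\begin{equation*}
\mathbf{P}_m(A\mid\Gamma) \;\leq\; c^{-(1-\epsilon)}\,\mathbf{P}_j(\delta=m\mid\Gamma) \;\leq\; c^{-(1-\epsilon)}\,\alpha_j(\Gamma) \;=\; O\bigl(-c^{\epsilon}\log c\bigr),
\end{equation*}
using the hypothesis $\alpha_j(\Gamma)=O(-c\log c)$. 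Summing over the finitely many $j\neq m$ and noting that $-c\log c=O(-c^{\epsilon}\log c)$ for $0<\epsilon<1$ completes the argument.

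The only subtlety I expect is justifying Wald's identity in this controlled-sensing setup, but this is a routine consequence of the fact that the policy-induced conditional distribution of $\phi(t)$ given $\mathbf{y}(t-1)$ is hypothesis-free, so the likelihood ratio on $\mathcal{F}_\tau$ reduces to a telescoping product of cell-$m$ and cell-$j$ density ratios collapsing to $\exp(\Delta S_{m,j}(\tau))$. The rest is arithmetic bookkeeping of the exponents of $c$; no concentration or martingale estimate beyond Markov's inequality in disguise (exponential tilting) is needed.
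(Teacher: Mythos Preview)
Your proposal is correct and is essentially the paper's own argument: the paper also splits on $\{\delta=m\}$ versus $\{\delta\neq m\}$, applies the change of measure between $\mathbf{P}_m$ and $\mathbf{P}_j$ (written out as an explicit density computation citing Chernoff's Lemma~4, which is exactly your Wald identity), bounds the exponential factor by $c^{-(1-\epsilon)}$ on the event, and then takes the union bound over $j\neq m$. The only cosmetic difference is that the paper performs the union bound last while you do it first.
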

\begin{proof}
Note that:
\beq
\bea{l}
\mathbf{P}_m\left(\Delta S_m(\tau)< -\left(1-\epsilon\right)\log c|\Gamma\right) \vspace{0.2cm} \\
=\mathbf{P}_m\left(\Delta S_m(\tau)< -\left(1-\epsilon\right)\log c \;,\; \delta=m|\Gamma\right) \vspace{0.2cm} \\
+\mathbf{P}_m\left(\Delta S_m(\tau)< -\left(1-\epsilon\right)\log c \;,\; \delta\neq m|\Gamma\right) \vspace{0.2cm} \\
\leq\mathbf{P}_m\left(\Delta S_m(\tau)< -\left(1-\epsilon\right)\log c \;,\; \delta=m|\Gamma\right)+\alpha_m(\Gamma) \vspace{0.2cm} \\
\ena
\eeq
Note that $\alpha_m(\Gamma)=O(-c\log c)$ as conditioned by the Lemma. Next, we upper bound the term \\$\mathbf{P}_m\left(\Delta S_m(\tau)< -\left(1-\epsilon\right)\log c \;,\; \delta=m|\Gamma\right)$ By changing the measure, as in~\cite[Lemma $4$]{Chernoff_1959_Sequential}.

Let $R_{\tau}$ be the subset of the sample space, in which $\Delta S_{m,j}(\tau)<-(1-\epsilon)\log c$ for some $j\neq m$ and $H_m$ is accepted at time $\tau$.
Let $y_k(i)$ be the observation collected from cell $k$ at time $i$ (note that only $K$ observations are obtained at a time. An observation is meaningful only when the process is probed. Otherwise, we can set an arbitrary value). Let $y(\tau)=\left\{y_1(i), ..., y_M(i)\right\}_{i=1}^{\tau}$ be the set of all the observations up to time $\tau$.
Let $\mathcal{N}_k(y(\tau))$ be the set of time indices for the observations $y(\tau)$, containing the time indices in which cell $k$ was probed.
Thus, for all $j\neq m$ there exists $G>0$ such that:
\beq
\bea{l}
\hspace{-0.5cm}-Gc\log c \geq\mathbf{P}_j\left(\delta\neq j|\Gamma\right)\geq\mathbf{P}_j\left(\delta=m|\Gamma\right)  \vspace{0.2cm} \\
\hspace{-0.5cm}\geq \mathbf{P}_j\left(\Delta S_{m,j}(\tau)\leq -(1-\epsilon)\log c\;,\;\delta=m|\Gamma\right) \vspace{0.2cm} \\
\hspace{-0.5cm}=\displaystyle\sum_{\tau=1}^{\infty}\int_{R_{\tau}} 
\displaystyle\left[\prod_{i\in \mathcal{N}_m(y(\tau))}{\hspace{-0.2cm}f(y_m(i))}\prod_{i\in \mathcal{N}_{j}(y(\tau))}{\hspace{-0.2cm}g(y_j(i))}\times
\vspace{0.2cm}\right.\\\left. \hspace{2cm} \displaystyle
\prod_{k\neq m, j}\;\prod_{i\in \mathcal{N}_k(y(\tau))}{\hspace{-0.3cm}f(y_k(i))}\right]d\mu(y(\tau)) \vspace{0.2cm} \\
\hspace{-0.6cm}=\displaystyle\sum_{\tau=1}^{\infty}\int_{R_{\tau}} 
\displaystyle\left[\prod_{i\in \mathcal{N}_m(y(\tau))}{\frac{f(y_m(i))}{g(y_m(i))}}\prod_{i\in \mathcal{N}_{j}(y(\tau))}{\frac{g(y_j(i))}{f(y_j(i))}}\right]\times\vspace{0.2cm} \\  \hspace{1.6cm}
\hspace{-0.6cm}\displaystyle\left[\prod_{i\in \mathcal{N}_m(y(\tau))}{\hspace{-0.2cm}g(y_m(i))}\prod_{i\in \mathcal{N}_{j}(y(\tau))}{\hspace{-0.2cm}f(y_j(i))}
\times
\vspace{0.2cm}\right.\\\left. \hspace{2cm} \displaystyle
\prod_{k\neq m, j}\;\prod_{i\in \mathcal{N}_k(y(\tau))}{\hspace{-0.3cm}f(y_k(i))}\right]
d\mu(y(\tau)) \vspace{0.2cm} \\
\hspace{-0.6cm}=\displaystyle\sum_{\tau=1}^{\infty}\int_{R_{\tau}} 
\displaystyle\exp\left\{-\Delta S_{m,j}(\tau)\right\}\times \vspace{0.2cm} \\  \hspace{1.6cm}
\hspace{-0.6cm}\displaystyle\left[\prod_{i\in \mathcal{N}_m(y(\tau))}{\hspace{-0.2cm}g(y_m(i))}\prod_{i\in \mathcal{N}_{j}(y(\tau))}{\hspace{-0.2cm}f(y_j(i))}
\times
\vspace{0.2cm}\right.\\\left. \hspace{2cm} \displaystyle
\prod_{k\neq m, j}\;\prod_{i\in \mathcal{N}_k(y(\tau))}{\hspace{-0.3cm}f(y_k(i))}\right]
d\mu(y(\tau)) \vspace{0.2cm} \\
\hspace{-0.6cm}\geq c^{1-\epsilon}\displaystyle\mathbf{P}_m\left(\Delta S_{m,j}(\tau)< -\left(1-\epsilon\right)\log c \;,\; \delta=m|\Gamma\right) \;.
\ena \vspace{0.2cm}
\eeq
Thus,
\beq
\bea{l}
\displaystyle\mathbf{P}_m\left(\Delta S_{m,j}(\tau)< -\left(1-\epsilon\right)\log c \;,\; \delta=m|\Gamma\right) \vspace{0.2cm}\\\hspace{3cm}
=O\left(-c^{\epsilon}\log c\right) \;\;\;\forall j\neq m\;. \vspace{0.2cm}
\ena
\eeq
As a result, by (\ref{eq:Delta_S_m})\vspace{0.2cm}
\beq
\bea{l}
\displaystyle\mathbf{P}_m\left(\Delta S_m(\tau)< -\left(1-\epsilon\right)\log c \;,\; \delta=m|\Gamma\right) \vspace{0.2cm}\\
\leq\displaystyle\sum_{j\neq m}\mathbf{P}_m\left(\Delta S_{m,j}(\tau)< -\left(1-\epsilon\right)\log c \;,\; \delta=m|\Gamma\right) \vspace{0.2cm}\\ \hspace{4cm}
=O\left(-c^{\epsilon}\log c\right) \;. \vspace{0.3cm}
\ena
\eeq
Finally, \vspace{0.2cm}
\beq
\bea{l}
\mathbf{P}_m\left(\Delta S_m(\tau)< -\left(1-\epsilon\right)\log c|\Gamma\right) =O\left(-c^{\epsilon}\log c\right) .
\ena
\eeq
\vspace{0.2cm}
\end{proof}
%
\begin{lemma}
\label{lemma:inequality}
Assume that $K<M$ and \vspace{0.1cm}
\begin{center}
$I^*(M,K)=D(g||f)+\frac{(K-1)D(f||g)}{M-1}$. \vspace{0.1cm}
\end{center}
Define the following function:
\beq
\label{eq:func}
\bea{l}
\displaystyle d(t)\triangleq t\left[D(g||f)+\frac{K\frac{n}{t}-1}{M-1}D(f||g)\right]  \;.
\ena
\eeq
Then, $d(t)$ is monotonically increasing with $t$.
\vspace{0.1cm}
\end{lemma}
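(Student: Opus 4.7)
The plan is to show that $d(t)$ is actually an affine function of $t$, so that monotonicity reduces to checking the sign of a single constant, and then to use the hypothesis on $I^*(M,K)$ to pin down that sign.

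First I would algebraically simplify $d(t)$ by distributing the factor of $t$. The term $\frac{K n/t - 1}{M-1} D(f\|g)$ multiplied by $t$ becomes $\frac{K n - t}{M-1} D(f\|g)$. Hence
\begin{equation*}
d(t) \;=\; t\,D(g\|f) \;+\; \frac{K n}{M-1}D(f\|g) \;-\; \frac{t}{M-1}D(f\|g)
\;=\; \frac{K n}{M-1}D(f\|g) \;+\; t\left[D(g\|f) - \frac{D(f\|g)}{M-1}\right].
\end{equation*}
So $d(t)$ is affine in $t$ with slope $D(g\|f) - D(f\|g)/(M-1)$.

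Next I would use the standing hypothesis. By the definition of $I^*(M,K)$ in the case $K<M$, we have
\begin{equation*}
I^*(M,K) \;=\; \max\left[\frac{K D(f\|g)}{M-1},\; D(g\|f)+\frac{(K-1)D(f\|g)}{M-1}\right].
\end{equation*}
The lemma assumes $I^*(M,K) = D(g\|f) + \tfrac{(K-1)D(f\|g)}{M-1}$, which means the second expression dominates the first, i.e.
\begin{equation*}
D(g\|f) + \frac{(K-1)D(f\|g)}{M-1} \;\geq\; \frac{K D(f\|g)}{M-1}.
\end{equation*}
Subtracting $(K-1)D(f\|g)/(M-1)$ from both sides collapses this to $D(g\|f) \geq D(f\|g)/(M-1)$, so the slope of $d$ is nonnegative.

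This yields the monotonicity. There is no real obstacle here beyond bookkeeping; the only subtlety is recognizing that, despite the superficial appearance of $n/t$ in the bracket, the function is in fact affine in $t$ once $t$ is distributed. The hypothesis on $I^*(M,K)$ is exactly the condition needed to make the affine slope nonnegative, which closes the argument.
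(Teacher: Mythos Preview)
Your proposal is correct and essentially the same as the paper's proof: the paper differentiates $d(t)$ to obtain $\partial d(t)/\partial t = D(g\|f) - D(f\|g)/(M-1)$, whereas you expand $d(t)$ to see it is affine with that same slope, and both then use the hypothesis on $I^*(M,K)$ to conclude the slope is nonnegative.
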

\begin{proof}
Note that $I^*(M,K)=D(g||f)+\frac{(K-1)D(f||g)}{M-1}$ implies:
\begin{center}
$\bea{l}
\displaystyle D(g||f)+\frac{(K-1)D(f||g)}{M-1}\geq \frac{K D(f||g)}{M-1} \vspace{0.2cm} \\
\displaystyle \iff D(g||f)\geq\frac{D(f||g)}{M-1} \;.
\ena$
\end{center}
Differentiation $d(t)$ with respect to $t$ yields:
\begin{center}
$\displaystyle \frac{\partial d(t)}{\partial t}=D(g||f)-\frac{D(f||g)}{M-1}\geq 0 \;,$
\end{center}
which completes the proof. \vspace{0.3cm}
\end{proof}
%
For the next lemma we define
\beq
\label{eq:j_star}
j^*(t)\triangleq\arg\min_{j\neq m}N_j(t)
\eeq
as the cell (except cell $m$) which has been observed the lowest number of times up to time~$t$ and
\beq
\label{eq:W_star}
\bea{l}
\displaystyle W_m^*(t)\triangleq\sum_{i=1}^{t}\tilde{\ell}_m(i)\mathbf{1}_m(i)
-\sum_{i=1}^{t}\tilde{\ell}_{j^*(t)}(i)\mathbf{1}_{j^*(t)}(i) \;.
\ena
\eeq
Note that $W_m^*(t)$ is a sum of zero-mean r.v. The following lemma shows that $W_m^*(t)$ is sufficiently small. This result will be used in the proof of Lemma~\ref{lemma:Delta_S} to show that \\ $\mathbf{P}_m\left(\max_{1\leq t\leq n}{\Delta S_m(t)}\geq n\left(I^*(M,K)+\epsilon\right)\;|\;\Gamma\right)\rightarrow 0$ as $n\rightarrow\infty$. \vspace{0.3cm}
\begin{lemma}
\label{lemma:W_star}
For every fixed $\epsilon>0$ there exist $C>0$ and $\gamma>0$ such that
\beq
\label{eq:lemma_W_star}
\bea{l}
\displaystyle \mathbf{P}_m\left(\max_{1\leq t\leq n}{W_m^*(t)}\geq n\epsilon|\Gamma\right)\leq Ce^{-\gamma n}
\ena
\eeq
for all $m=1, ..., M$ and for any policy $\Gamma$.
\vspace{0.1cm}
\end{lemma}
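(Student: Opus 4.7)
The obstacle here is that the second sum in $W_m^*(t)$ is indexed by $j^*(t)$, which is itself a random function of $t$, so $W_m^*(t)$ is not literally a martingale. My first step is to decouple $j^*(t)$ from $t$ by a union bound. Since $j^*(t)\in\{1,\dots,M\}\setminus\{m\}$ is a specific index for each sample path, we have
\[
\max_{1\le t\le n} W_m^*(t) \;\le\; \max_{j\neq m}\;\max_{1\le t\le n}\tilde W_{m,j}(t),
\qquad
\tilde W_{m,j}(t)\triangleq\sum_{i=1}^{t}\bigl[\tilde\ell_m(i)\mathbf 1_m(i)-\tilde\ell_j(i)\mathbf 1_j(i)\bigr],
\]
and a union bound over the $M-1$ choices of $j$ reduces the lemma to showing that for each fixed $j\neq m$ there exist constants $C_j,\gamma_j>0$ (depending on $\epsilon$) with
$\mathbf P_m(\max_{1\le t\le n}\tilde W_{m,j}(t)\ge n\epsilon\mid\Gamma)\le C_j e^{-\gamma_j n}$.

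For the fixed-$j$ bound, I would verify that $\tilde W_{m,j}(t)$ is an $\mathcal F_t$-martingale under $\mathbf P_m$: under $\mathbf P_m$ the conditional mean of $\tilde\ell_m(i)$ given $\mathbf 1_m(i)=1$ is zero and the conditional mean of $\tilde\ell_j(i)$ given $\mathbf 1_j(i)=1$ is zero, while the selection indicators $\mathbf 1_m(i),\mathbf 1_j(i)$ are $\mathcal F_{i-1}$-measurable (they are determined by $\Gamma$ and past observations). Hence each increment has conditional mean zero. Assuming the moment generating function of $\tilde\ell_m$ under $g$ and of $\tilde\ell_j$ under $f$ is finite in a neighborhood of the origin (a standard regularity condition implicit in the KL-based asymptotics used throughout the paper), there exists a function $\psi(\lambda)=O(\lambda^2)$ such that $\mathbf E_m[\exp(\lambda(\tilde\ell_m(i)\mathbf 1_m(i)-\tilde\ell_j(i)\mathbf 1_j(i)))\mid\mathcal F_{i-1}]\le e^{\psi(\lambda)}$ uniformly in $i$. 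Consequently $Z_t\triangleq\exp(\lambda\tilde W_{m,j}(t)-t\psi(\lambda))$ is a nonnegative supermartingale with $Z_0=1$.

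Applying Doob's maximal inequality to $Z_t$ yields
\[
\mathbf P_m\!\left(\max_{1\le t\le n}\tilde W_{m,j}(t)\ge n\epsilon\,\middle|\,\Gamma\right)
\;\le\;\mathbf P_m\!\left(\max_{1\le t\le n}Z_t\ge e^{\lambda n\epsilon-n\psi(\lambda)}\,\middle|\,\Gamma\right)
\;\le\; \exp\!\bigl(-n(\lambda\epsilon-\psi(\lambda))\bigr).
\]
Since $\psi(\lambda)=O(\lambda^2)$, choosing $\lambda>0$ sufficiently small makes $\lambda\epsilon-\psi(\lambda)>0$, giving the desired exponential decay rate $\gamma_j=\lambda\epsilon-\psi(\lambda)$. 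Summing the resulting bounds over the $M-1$ indices $j\neq m$ produces constants $C,\gamma>0$ independent of $n$ (and of $\Gamma$) such that $\mathbf P_m(\max_{1\le t\le n}W_m^*(t)\ge n\epsilon\mid\Gamma)\le Ce^{-\gamma n}$, which is exactly \eqref{eq:lemma_W_star}.

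The main obstacle is the first paragraph: recognizing that although $j^*(t)$ varies with $t$, one may pass to the worst-case $j$ by a deterministic union bound without destroying the martingale structure, since for every sample path the identity $\tilde\ell_{j^*(t)}(i)\mathbf 1_{j^*(t)}(i)$ equals $\tilde\ell_j(i)\mathbf 1_j(i)$ for some $j\in\{1,\dots,M\}\setminus\{m\}$. Once this decoupling is in place, the argument is a routine exponential-supermartingale / Chernoff-type bound, and the constants depend only on $\epsilon$, $M$, and the tail behavior of $\ell_m,\ell_j$ under $f$ and $g$, not on the policy $\Gamma$.
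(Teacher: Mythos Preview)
Your argument is correct and takes a genuinely different route from the paper. The paper handles the maximum over $t$ by a crude union bound $\mathbf P_m(\max_{t\le n}W_m^*(t)\ge n\epsilon)\le\sum_{t=1}^n\mathbf P_m(W_m^*(t)\ge n\epsilon)$, then for each fixed $t$ decomposes according to the counts $N_m(t)=i$ and $N_{j^*(t)}(t)=j$, and finally applies a Chernoff bound to the resulting i.i.d.\ sums; it never needs a union bound over the identity of $j^*(t)$ because under $H_m$ every cell $j\ne m$ produces $\tilde\ell_j$ with the same law. This yields a triple sum $\sum_t\sum_i\sum_j$ of exponentially small terms. You instead union-bound over the $M-1$ possible values of $j^*(t)$, then for each fixed $j$ exploit the martingale structure of $\tilde W_{m,j}(t)$ to build an exponential supermartingale and invoke Doob's (Ville's) inequality. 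Your route is cleaner: it avoids the polynomial-in-$n$ prefactor coming from the triple sum, treats the time maximum in one stroke, and makes the policy-independence transparent (the conditional MGF bound is uniform over the four cases \emph{both/only $m$/only $j$/neither observed}). The paper's route is more elementary in that it never names a martingale and reduces everything to Chernoff bounds on i.i.d.\ sums, at the price of heavier bookkeeping. Both rely on the same implicit regularity assumption that the MGFs of $\ell_m$ under $g$ and $\ell_j$ under $f$ are finite near the origin.
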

\begin{proof}
Since that $N_m(t), N_{j^*(t)}(t)$ are r.v, we can upper bound (\ref{eq:lemma_W_star}) by summing over any possible values that $N_m(t), N_{j^*(t)}(t)$ can take:
\beq\label{eq:lemma_W_star_bound}
\bea{l}
\displaystyle \mathbf{P}_m\left(\max_{1\leq t\leq n}{W_m^*(t)}\geq n\epsilon|\Gamma\right) \vspace{0.2cm}\\
\leq\displaystyle \sum_{t=1}^{n}\mathbf{P}_m\left(\sum_{r=1}^{t}\tilde{\ell}_m(r)\mathbf{1}_m(r)
-\tilde{\ell}_{j^*(t)}(i)\mathbf{1}_{j^*(t)}(r)\geq n\epsilon|\Gamma\right) \vspace{0.2cm}\\
=\displaystyle \sum_{t=1}^{n}\;\sum_{i=0}^{t}\;\sum_{j=0}^{t}
\vspace{0.2cm}\\ \hspace{0.2cm}
\displaystyle\mathbf{P}_m\left(\sum_{r=1}^{t}\tilde{\ell}_m(r)\mathbf{1}_m(r)
                 +\sum_{r=1}^{t}-\tilde{\ell}_{j^*(t)}(r)\mathbf{1}_{j^*(t)}(r)\geq n\epsilon, \right.\vspace{0.2cm}\\ \hspace{4.5cm}\left.
                                                        N_m(t)=i,N_{j^*(t)}=j |\Gamma\right) \vspace{0.2cm}\\
\leq\displaystyle \sum_{t=1}^{n}\;\sum_{i=0}^{t}\;\sum_{j=0}^{t}
\displaystyle \left[\mathbf{E}_m\left(e^{s(\tilde{\ell}_m(1)-\epsilon/2)}\right)\right]^{i}
\times\vspace{0.2cm}\\ \hspace{2.5cm}
\displaystyle\left[\mathbf{E}_m\left(e^{s(-\tilde{\ell}_{j^*(t)}(1)-\epsilon/2)}\right)\right]^{j}
\times\vspace{0.2cm}\\ \hspace{3.5cm}
\displaystyle \exp\left\{-s\frac{\epsilon}{2}(2n-i-j)\right\} \;,
\ena
\eeq
for all $s>0$.\\
The last inequality follows due to the i.i.d. property of $\ell_k(t)$ across the time series and applying the Chernoff bound for each term in the summation on the RHS of the equality.
Note that we used the fact that the measure of the sample space that satisfies $\left\{\sum_{r=1}^{t}\tilde{\ell}_m(r)\mathbf{1}_m(r)+\sum_{r=1}^{t}-\tilde{\ell}_{j^*(t)}(r)\mathbf{1}_{j^*(t)}(r)\geq n\epsilon,\right.$ $\left. N_m(t)=i,N_{j^*(t)}=j\right\}$ under policy $\Gamma$ is smaller than the measure of the sample space that satisfies $\left\{\sum_{r=1}^{i}\tilde{\ell}_m(r)+\sum_{r=1}^{j}-\tilde{\ell}_{j^*(t)}(r)\geq n\epsilon\right\}$ (which is bounded by the Chernoff bound).
This fact follows since that any selection of $i, j$ observations from cells $m, j^*(t)$ (which have i.i.d distributions), yields the same distribution independent of the time they were taken. In particular, the intersection of the sample space $\left\{\sum_{r=1}^{t}\tilde{\ell}_m(r)\mathbf{1}_m(r)+\sum_{r=1}^{t}-\tilde{\ell}_{j^*(t)}(r)\mathbf{1}_{j^*(t)}(r)\geq n\epsilon,\right\}$ and $\left\{N_m(t)=i,N_{j^*(t)}=j\right\}$ under policy $\Gamma$ further decreases the measure.

Clearly, a moment generating function (MGF) is equal to one at $s=0$. Furthermore, since $\mathbf{E}_m(\tilde{\ell}_m(1)-\epsilon/2)=-\epsilon/2<0$ and $\mathbf{E}_m(-\tilde{\ell}_{j^*(t)}(1)-\epsilon/2)=-\epsilon/2<0$ are strictly negative, differentiating the MGFs of $\tilde{\ell}_m(1)-\epsilon/2$ and $-\tilde{\ell}_{j^*(t)}(1)-\epsilon/2$ with respect to $s$ yields strictly negative derivatives at $s=0$. Hence, there exist $s>0$ and $\gamma'>0$ such that $\mathbf{E}_m\left(e^{s(\tilde{\ell}_m(1)-\epsilon/2)}\right)$, $\mathbf{E}_m\left(e^{s(-\tilde{\ell}_{j^*(t)}(1)-\epsilon/2)}\right)$ and $e^{-s\epsilon/2}$ are strictly less than $e^{-\gamma'}<1$. Since $2n-i-j\geq 0$, there exist $C>0$ and $\gamma>0$, such that summing over $t, i, j$ yields (\ref{eq:lemma_W_star}).
\vspace{0.3cm}
\end{proof}
%
%
%
\begin{lemma}
\label{lemma:Delta_S}
For any fixed $\epsilon>0$,
\beq
\label{eq:lemma_Delta_S}
\bea{l}
\displaystyle \mathbf{P}_m\left(\max_{1\leq t\leq n}{\Delta S_m(t)}\geq n\left(I^*(M,K)+\epsilon\right)\;|\;\Gamma\right)\rightarrow 0 \vspace{0.2cm} \\ \hspace{6cm}
         \;\; \mbox{as} \;\; n\rightarrow\infty \;,
\ena
\eeq
for all $m=1, ..., M$ and for any policy $\Gamma$.
\vspace{0.1cm}
\end{lemma}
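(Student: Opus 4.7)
The plan is to upper bound $\Delta S_m(t)$ by decomposing it into a deterministic mean part plus a zero-mean stochastic fluctuation, show that the deterministic part is bounded by $t\cdot I^*(M,K)$ using the resource constraint on the $N_k(t)$'s, and then use Lemma~\ref{lemma:W_star} to absorb the fluctuation inside the $\epsilon$ slack.

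First, I would rewrite $\Delta S_m(t)=\min_{j\neq m}(S_m(t)-S_j(t))$ by noting the upper bound
\[
\Delta S_m(t)\;\leq\; S_m(t)-S_{j^*(t)}(t)
\;=\;N_m(t)\,D(g\|f)+N_{j^*(t)}(t)\,D(f\|g)+W_m^*(t),
\]
with $j^*(t)$ as in (\ref{eq:j_star}) and $W_m^*(t)$ as in (\ref{eq:W_star}). The equality uses $\mathbf{E}_m[\ell_m]=D(g\|f)$ and $\mathbf{E}_m[\ell_j]=-D(f\|g)$ for $j\neq m$, so $W_m^*(t)$ is precisely the zero-mean part handled by Lemma~\ref{lemma:W_star}. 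The inequality is the key reduction: because $j^*(t)$ is the cell (other than $m$) that has been probed least, intuitively $S_{j^*(t)}(t)$ is the least negative among the $\{S_j(t)\}_{j\neq m}$ on average, giving the smallest gap.

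Next, I would bound the deterministic drift. Since $K$ cells are probed per slot, $\sum_k N_k(t)=Kt$, hence $\sum_{j\neq m}N_j(t)=Kt-N_m(t)$, and by the defining minimality of $j^*(t)$,
\[
N_{j^*(t)}(t)\;\leq\;\frac{Kt-N_m(t)}{M-1}.
\]
Substituting, and writing $x=N_m(t)\in[0,t]$, the drift is bounded by
\[
h(x)\;\triangleq\;x\!\left(D(g\|f)-\tfrac{D(f\|g)}{M-1}\right)+\tfrac{Kt\,D(f\|g)}{M-1}.
\]
If $D(g\|f)\geq D(f\|g)/(M-1)$, then $h$ is nondecreasing in $x$ and attains its maximum at $x=t$, yielding $t\!\left(D(g\|f)+\tfrac{(K-1)D(f\|g)}{M-1}\right)$. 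If $D(g\|f)<D(f\|g)/(M-1)$ and $K<M$, then $h$ is decreasing and is maximized at $x=0$, yielding $\tfrac{Kt\,D(f\|g)}{M-1}$. In either case the bound matches $t\cdot I^*(M,K)$, and the degenerate case $K=M$ forces $N_k(t)=t$ for all $k$, giving $t(D(g\|f)+D(f\|g))=t\cdot I^*(M,M)$. Thus for every $t\leq n$,
\[
N_m(t)\,D(g\|f)+N_{j^*(t)}(t)\,D(f\|g)\;\leq\; n\,I^*(M,K).
\]

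Combining the two steps,
\[
\max_{1\leq t\leq n}\Delta S_m(t)\;\leq\; n\,I^*(M,K)+\max_{1\leq t\leq n}W_m^*(t),
\]
so the event in (\ref{eq:lemma_Delta_S}) is contained in $\{\max_{t\leq n}W_m^*(t)\geq n\epsilon\}$, which has probability at most $Ce^{-\gamma n}\to 0$ by Lemma~\ref{lemma:W_star}. The main obstacle is the case split in the deterministic bound: one must verify that the linear program maximizing $N_m D(g\|f)+N_{j^*} D(f\|g)$ under the constraints $N_{j^*}(t)\leq (Kt-N_m(t))/(M-1)$ and $0\leq N_m(t)\leq t$ has its optimum coincide with $I^*(M,K)$ in every regime defined in the definition of $I^*(M,K)$; beyond this case analysis, the argument is standard.
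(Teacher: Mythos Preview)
Your proposal is correct and follows essentially the same route as the paper: bound $\Delta S_m(t)$ by $S_m(t)-S_{j^*(t)}(t)$, split into the deterministic drift $N_m(t)D(g\|f)+N_{j^*(t)}(t)D(f\|g)$ plus the zero-mean $W_m^*(t)$, use $N_{j^*(t)}(t)\le (Kt-N_m(t))/(M-1)$ to cap the drift at $n\,I^*(M,K)$, and invoke Lemma~\ref{lemma:W_star}. The only cosmetic difference is that the paper handles the three regimes of $I^*(M,K)$ as separate cases (using an auxiliary monotonicity lemma, Lemma~\ref{lemma:inequality}, for the case $I^*(M,K)=D(g\|f)+\tfrac{(K-1)D(f\|g)}{M-1}$), whereas you compress Cases~2 and~3 into a single linear maximization of $h(x)$ over $x\in[0,t]$; these are the same computation.
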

\begin{proof}
It should be noted that a polynomial decay of a similar condition under a binary composite hypothesis testing was shown in~\cite[Lemma $5$]{Chernoff_1959_Sequential} using a variation of the Kolmogorov's inequality. Here, we use a different approach to show exponential decay of (\ref{eq:lemma_Delta_S}).
Let
\begin{center}
$\Delta S^*_m(t)\triangleq S_m(t)-S_{j^*(t)}(t)$.
\end{center}
Since $\Delta S_m(t)\leq \Delta S^*_m(t)$ for all $m$ and $t$, we have:
\beq
\label{eq:pr_lemma_Pr_Delta_S}
\bea{l}
\displaystyle \mathbf{P}_m\left(\max_{1\leq t\leq n}{\Delta S_m(t)}\geq n\left(I^*(M,K)+\epsilon\right)|\Gamma\right)\vspace{0.2cm}\\
\leq\displaystyle\mathbf{P}_m\left(\max_{1\leq t\leq n}{\Delta S^*_m(t)}\geq n\left(I^*(M,K)+\epsilon\right)|\Gamma\right)
\ena
\eeq

Next, we consider three cases: \\

\emph{\textbf{Case 1 :} $K=M$:} \\

In this case $I^*(M,K)=D(g||f)+D(f||g)$.
Furthermore, note that $N_j(t)=t$, for all $j$ and $t$. Thus,
\beq
\bea{l}
\Delta S^*_m(t)=W_m^*(t)+t \left(D(g||f)+D(f||g)\right)
\vspace{0.2cm}\\ \hspace{1cm}
\leq W_m^*(t)+n I^*(M,K) \;.
\ena
\eeq
Therefore,
\begin{center}
$\Delta S^*_m(t)\geq n\left(I^*(M,K)+\epsilon\right)$
\end{center}
implies
\begin{center}
$W_m^*(t)\geq n\epsilon$.
\end{center}
Applying Lemma~\ref{lemma:W_star} yields:
\beq\label{eq:pl_Delta_S_1}
\bea{l}
\displaystyle\mathbf{P}_m\left(\max_{1\leq t\leq n}{\Delta S_m(t)}\geq n\left(I^*(M,K)+\epsilon\right)|\Gamma\right)\vspace{0.2cm}\\
\displaystyle\leq\mathbf{P}_m\left(\max_{1\leq t\leq n}{W_m^*(t)}\geq n\epsilon|\Gamma\right) \vspace{0.2cm}\\
\leq Ce^{-\gamma n} \rightarrow 0 \;\;\;\mbox{as}\;\;\; n\rightarrow\infty \;.
\ena
\eeq

\emph{\textbf{Case 2 :} $K<M$ and $I^*(M,K)=\frac{K D(f||g)}{M-1}$:} \\

Note that:
\beq
\bea{l}
\Delta S^*_m(t)=W_m^*(t)+N_m(t)D(g||f)+N_{j^*(t)}(t)D(f||g)
\vspace{0.2cm}\\ \hspace{1cm}
\leq W_m^*(t)+D(f||g)\left[\frac{N_m(t)}{M-1}+N_{j^*(t)}(t)\right]
\ena
\eeq
The last inequality holds since $\frac{K D(f||g)}{M-1}\geq D(g||f)+\frac{(K-1)D(f||g)}{M-1}$ implies $D(g||f)\leq\frac{D(f||g)}{M-1}$.\\
Since that $j^*(t)=\arg\min_{j\neq m}N_j(t)$ and $Kt-N_m(t)$ is the total number of observations taken from $M-1$ cells $j\neq m$, we have:
\beq
\displaystyle N_{j^*(t)}(t)\leq\frac{Kt-N_m(t)}{M-1}\leq\frac{Kn-N_m(t)}{M-1}\;.
\eeq
Hence,
\beq
\bea{l}
\Delta S^*_m(t)
\leq W_m^*(t)+ D(f||g)\frac{K n}{M-1}  \vspace{0.2cm}\\ \hspace{2cm}
=W_m^*(t)+n I^*(M,K) \;.
\ena
\eeq
Therefore,
\begin{center}
$\Delta S^*_m(t)\geq n\left(I^*(M,K)+\epsilon\right)$
\end{center}
implies
\begin{center}
$W_m^*(t)\geq n\epsilon$.
\end{center}
The rest of the proof is similar to Case $1$. \\

\emph{\textbf{Case 3 :} $K<M$ and $I^*(M,K)=D(g||f)+\frac{(K-1)D(f||g)}{M-1}$:} \\

Note that:
\beq
\bea{l}
\displaystyle\Delta S^*_m(t)=W_m^*(t)+N_m(t)D(g||f)+N_{j^*(t)}(t)D(f||g)
\vspace{0.2cm}\\ \hspace{0.5cm}
\displaystyle\leq W_m^*(t)+N_m(t)D(g||f)+\frac{Kn-N_m(t)}{M-1}D(f||g)
\vspace{0.2cm}\\ \hspace{0.5cm}
\displaystyle = W_m^*(t)+N_m(t)\left[ D(g||f)+\frac{K\frac{n}{N_m(t)}-1}{M-1}D(f||g)\right]
\ena
\eeq
Since $0\leq N_m(t)\leq n$, by Lemma~\ref{lemma:inequality} we have:
\beq
\bea{l}
\displaystyle\Delta S^*_m(t)\leq W_m^*(t)+n\left[D(g||f)+\frac{K-1}{M-1}D(f||g)\right]
\vspace{0.2cm}\\ \hspace{2cm}
\displaystyle=W_m^*(t)+nI^*(M,K) \;.
\ena
\eeq
The rest of the proof is similar to Case $1$. Hence, (\ref{eq:lemma_Delta_S}) follows.
\vspace{0.3cm}
\end{proof}
The following theorem shows that in order to achieve a Bayes risk lower than $\frac{-c\log(c)}{I^*(M,K)}$ under any hypothesis, the risk must be of a greater order than $O(-c\log c)$ under some hypothesis. As a result, it provides a lower bound $\sim\frac{-c\log(c)}{I^*(M,K)}$ on the average Bayes risk (\ref{eq:Bayes_risk}): \\
\begin{theorem}
\label{th:lower_bound}
Any policy $\Gamma$ that satisfies $R_j(\Gamma)=O(-c\log c)$ for all $j=1, ..., M$ must satisfy:
\beq
\label{eq:lower_bound}
\displaystyle R_m(\Gamma)\geq -\left(1+o(1)\right)\frac{c\log(c)}{I^*(M,K)} \;.
\eeq
for all $m=1, ..., M$.
\vspace{0.2cm} \\
\end{theorem}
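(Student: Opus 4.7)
The plan is to combine Lemma~\ref{lemma:Delta_S_O} (a lower bound on $\Delta S_m(\tau)$ forced by the error constraint) with Lemma~\ref{lemma:Delta_S} (an almost-sure upper bound on the rate at which $\Delta S_m(t)$ can grow under any policy) to force $\tau$ to be at least $-\log c / I^*(M,K)$, up to a vanishing factor. Since $R_j(\Gamma)=\alpha_j(\Gamma)+c\mathbf{E}_j(\tau|\Gamma)=O(-c\log c)$, we first extract $\alpha_j(\Gamma)=O(-c\log c)$, which is exactly the hypothesis needed to invoke Lemma~\ref{lemma:Delta_S_O}.

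Fix an arbitrary $0<\epsilon<1$ and define
\[
n_c \;\triangleq\; \left\lfloor \frac{-(1-\epsilon)\log c}{I^*(M,K)+\epsilon} \right\rfloor .
\]
Note that $n_c\to\infty$ as $c\to 0$. The next step is to observe that on the event $\{\tau\le n_c\}$ we have $\Delta S_m(\tau)\le \max_{1\le t\le n_c}\Delta S_m(t)$, and by the choice of $n_c$ this maximum, when strictly less than $n_c(I^*(M,K)+\epsilon)$, is strictly less than $-(1-\epsilon)\log c$. Hence
\[
\{\tau\le n_c\}\cap\Bigl\{\max_{1\le t\le n_c}\Delta S_m(t)<n_c(I^*(M,K)+\epsilon)\Bigr\}
\;\subseteq\;\{\Delta S_m(\tau)<-(1-\epsilon)\log c\}.
\]
Lemma~\ref{lemma:Delta_S_O} guarantees that the right-hand event has $\mathbf{P}_m$-probability $O(-c^\epsilon\log c)=o(1)$, while Lemma~\ref{lemma:Delta_S} (applied at $n=n_c\to\infty$) guarantees that the complement of the second event on the left has $\mathbf{P}_m$-probability $o(1)$. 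A union bound therefore gives $\mathbf{P}_m(\tau\le n_c\,|\,\Gamma)\to 0$ as $c\to 0$.

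Markov's inequality, $\mathbf{E}_m(\tau|\Gamma)\ge n_c\,\mathbf{P}_m(\tau>n_c\,|\,\Gamma)$, then yields
\[
c\,\mathbf{E}_m(\tau|\Gamma)\;\ge\; c\,n_c\,(1-o(1))\;=\;-\frac{(1-\epsilon)\,c\log c}{I^*(M,K)+\epsilon}\,(1-o(1)),
\]
and since $R_m(\Gamma)\ge c\,\mathbf{E}_m(\tau|\Gamma)$, letting $\epsilon\to 0$ (after $c\to 0$) produces the claimed bound $R_m(\Gamma)\ge -(1+o(1))\,c\log c/I^*(M,K)$.

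The main obstacle in carrying this out is making the two ``high probability'' statements quantitatively compatible with the $c$-dependent sample size $n_c$. Lemma~\ref{lemma:Delta_S_O} gives a probability bound of order $-c^\epsilon\log c$, which decays polynomially in $c$, while Lemma~\ref{lemma:Delta_S} is stated as a limit ``as $n\to\infty$''; its proof however shows exponential decay $Ce^{-\gamma n}$, which at $n=n_c=\Theta(-\log c)$ translates into another polynomial-in-$c$ bound. Hence both error terms vanish in the $c\to 0$ limit uniformly enough to conclude $\mathbf{P}_m(\tau\le n_c)\to 0$, which is what drives the lower bound on $\mathbf{E}_m(\tau|\Gamma)$. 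Once that uniformity is verified, the remaining algebra (extracting $\alpha_j=O(-c\log c)$ from $R_j=O(-c\log c)$ and passing $\epsilon\to 0$) is routine.
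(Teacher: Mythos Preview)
Your proposal is correct and follows essentially the same approach as the paper's proof: fix $\epsilon>0$, set $n_c\approx -(1-\epsilon)\log c/(I^*(M,K)+\epsilon)$, split $\{\tau\le n_c\}$ according to whether $\Delta S_m(\tau)$ exceeds $-(1-\epsilon)\log c$, bound one piece via Lemma~\ref{lemma:Delta_S} and the other via Lemma~\ref{lemma:Delta_S_O}, and then use $\mathbf{E}_m(\tau)\ge n_c\,\mathbf{P}_m(\tau>n_c)$. Your set-inclusion/union-bound packaging and the remark on the quantitative decay rates are slightly more explicit than the paper's presentation, but the logical skeleton is identical.
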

%
\begin{proof}
To show the lower bound on the Bayes risk, we use a similar argument as in~\cite{Chernoff_1959_Sequential}.
For any $\epsilon>0$ let $\displaystyle n_c=-(1-\epsilon)\frac{\log c}{I^*(M,K)+\epsilon}$.
Note that
\beq
\bea{l}
\displaystyle\mathbf{P}_m\left(\tau\leq n_c\;|\;\Gamma\right) \vspace{0.2cm} \\
=\displaystyle\mathbf{P}_m\left(\tau\leq n_c \;,\; \Delta S_m(\tau)\geq -\left(1-\epsilon\right)\log c \;|\;\Gamma\right)   \vspace{0.2cm}\\
+\displaystyle\mathbf{P}_m\left(\tau\leq n_c \;,\; \Delta S_m(\tau)< -\left(1-\epsilon\right)\log c \;|\;\Gamma\right) \vspace{0.2cm} \\
\leq\displaystyle\mathbf{P}_m\left(\max_{t\leq n_c}\Delta S_m(t)\geq -\left(1-\epsilon\right)\log c \;|\;\Gamma\right)   \vspace{0.2cm}\\
+\displaystyle\mathbf{P}_m\left(\Delta S_m(\tau)< -\left(1-\epsilon\right)\log c \;|\;\Gamma\right) \vspace{0.2cm} \;.\\
\ena
\eeq
The first term in the last inequality approaches zero as $c\rightarrow 0$ by Lemma~\ref{lemma:Delta_S}.
Next, note that Lemma~\ref{lemma:Delta_S_O} requires $\alpha_j(\Gamma)=O(-c\log c)$ for all $j=1, ..., M$. Since the theorem requires $R_j(\Gamma)=O(-c\log c)$ for all $j=1, ..., M$ (and recall that $\alpha_j(\Gamma)\leq R_j(\Gamma)$), we can apply Lemma~\ref{lemma:Delta_S_O}. Thus, the second term in the last inequality approaches zero as $c\rightarrow 0$.
As a result, the expected sample size under policy $\Gamma$ satisfies:
\beq
\bea{l}
\displaystyle\mathbf{E}_m(\tau|\Gamma)\geq\sum_{n=n_c+1}^{\infty}n\mathbf{P}_m\left(\tau=n|\Gamma\right)
\vspace{0.2cm}\\\hspace{0.5cm}
\geq n_c\mathbf{P}_m\left(\tau\geq n_c+1|\Gamma\right)
\displaystyle\rightarrow n_c \mbox{\;\;\;as\;\;\;} c\rightarrow 0 \vspace{0.2cm}
\ena
\eeq
Since $\epsilon>0$ can be arbitrarily small we have $\mathbf{E}_m(\tau|\Gamma)\geq-\left(1+o(1)\right)\log(c)/I^*(M,K)$. Hence, $R_m(\Gamma)\geq c\mathbf{E}_m(\tau|\Gamma)\geq -\left(1+o(1)\right)c\log(c)/I^*(M,K)$.
\vspace{0.3cm}
\end{proof}

\subsubsection{Asymptotic Optimality of the DGF policy}
\label{ssec:asymptotic_policy1}
\hspace{0.0cm}\vspace{0.2cm}\\
In this section we show that the DGF policy achieves the lower bound on the Bayes risk (\ref{eq:lower_bound}) as $c\rightarrow 0$. We mainly focus on the more interesting case where $K\geq 2$ cells are observed at a time. The case where $K=1$ is simpler and follows with minor modifications. Below, the proof follows the structure discussed in Section~\ref{sec:performance}.  \vspace{0.2cm}
%
\begin{lemma}
\label{lemma:error_policy1}
Assume that the DGF policy is implemented. Then, the error probability is upper bounded by:
\beq
\label{eq:Pe_bound_policy1}
P_e\leq (M-1)c \;.
\eeq
\vspace{0.1cm}
\end{lemma}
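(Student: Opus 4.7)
The plan is a standard change-of-measure argument exploiting the particular structure of the stopping rule $\tau = \inf\{n : \Delta S(n) \geq -\log c\}$ together with the decision rule $\delta = m^{(1)}(\tau)$.

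First I would fix the true hypothesis $H_m$ and a false hypothesis $H_j$, $j \neq m$, and study the error event $\{\delta = j\}$ under $\mathbf{P}_m$. On this event the declared cell $j$ is $m^{(1)}(\tau)$, so its sum LLR is the largest. Since the stopping rule guarantees that $m^{(1)}(\tau)$ exceeds $m^{(2)}(\tau)$ by at least $-\log c$, and since $m$ is some other cell whose sum LLR is no larger than $S_{m^{(2)}(\tau)}(\tau)$, I obtain the pointwise bound
\begin{equation*}
S_j(\tau) - S_m(\tau) \;\geq\; -\log c \qquad \text{on } \{\delta = j\}.
\end{equation*}

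Next I would use the observation model to recognize the left-hand side as a log-likelihood ratio between the measures $\mathbf{P}_j$ and $\mathbf{P}_m$ restricted to the observed data. Indeed, switching from $H_m$ to $H_j$ only changes the distribution of observations from cells $m$ and $j$ (from $g$ to $f$ at cell $m$, from $f$ to $g$ at cell $j$, with all other cells unchanged), so the Radon--Nikodym derivative on the trajectory up to $\tau$ equals
\begin{equation*}
\frac{d\mathbf{P}_j}{d\mathbf{P}_m} \;=\; \exp\bigl(S_j(\tau) - S_m(\tau)\bigr) \;\geq\; c^{-1} \quad \text{on } \{\delta = j\}.
\end{equation*}
A direct change of measure then yields $\mathbf{P}_m(\delta = j) \leq c\,\mathbf{P}_j(\delta = j) \leq c$.

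Finally I would sum over the $M-1$ false hypotheses to obtain $\alpha_m(\Gamma) \leq (M-1)c$ for every $m$, and average against the prior to conclude $P_e = \sum_m \pi_m \alpha_m \leq (M-1)c$. No step should pose real difficulty; the only subtlety is justifying the likelihood-ratio identity rigorously when only $K$ cells are observed per time slot (so the product extends only over the actual observation indices $\{t : \mathbf{1}_k(t) = 1\}$ for $k \in \{m,j\}$), but this is immediate from the definition of $S_k(\tau)$ in \eqref{eq:sum_LLR} and the fact that the selection rule $\phi(t)$ depends only on past observations, so the measures $\mathbf{P}_m$ and $\mathbf{P}_j$ induce identical distributions on the action sequence given the common observations.
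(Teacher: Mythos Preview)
Your proposal is correct and follows essentially the same route as the paper: both arguments observe that on $\{\delta=j\}$ the stopping rule forces $\Delta S_{j,m}(\tau)=S_j(\tau)-S_m(\tau)\geq -\log c$, then apply the change of measure $d\mathbf{P}_j/d\mathbf{P}_m=\exp(\Delta S_{j,m}(\tau))$ to get $\mathbf{P}_m(\delta=j)\leq c$, and finally sum over $j\neq m$. The paper cites Chernoff's Lemma~3 for the change-of-measure step where you spell out the Radon--Nikodym derivative explicitly, but the content is identical.
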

%
\begin{proof}
Let $\alpha_{m,j}=\mathbf{P}_m(\delta=j)$ for all $j\neq m$. Thus, $\alpha_m=\sum_{j\neq m}\alpha_{m,j}$.
Note that accepting $H_j$ (i.e., $\Delta S_j(n)\geq-\log c$) implies $\Delta S_{j,m}\geq-\log c$.
By changing the measure, as in~\cite[Lemma $3$]{Chernoff_1959_Sequential}, we can show that for all $j\neq m$ the following holds:
\beq
\bea{l}
\alpha_{m,j}=\mathbf{P}_m\left(\delta=j\right)\vspace{0.2cm} \\
=\displaystyle\mathbf{P}_m\left(\Delta S_j(\tau)\geq-\log c\right)
\leq\mathbf{P}_m\left(\Delta S_{j,m}(\tau)\geq-\log c\right)  \vspace{0.2cm} \\
\leq c\displaystyle\mathbf{P}_j\left(\Delta S_{j,m}(\tau)\geq -\log c\right)\leq c \;.
\ena
\eeq
Finally,
\begin{center}
$\displaystyle\alpha_m=\sum_{j\neq m}\alpha_{m,j}\leq (M-1)c$\;.
\end{center}
Hence, (\ref{eq:Pe_bound_policy1}) follows.
\vspace{0.3cm}
\end{proof}

\begin{lemma}
\label{lemma:S_j_S_m_N_j}
Fix $0<q<1$. Then, there exist $C>0$ and $\gamma>0$ such that
\beq
\label{eq:lemma:S_j_S_m_N_j}
\displaystyle\mathbf{P}_m\left(S_j(n)\geq S_m(n), N_j(n)\geq qn\right)\leq C e^{-\gamma n} \;,
\eeq
and
\beq
\label{eq:lemma:S_j_S_m_N_m}
\displaystyle\mathbf{P}_m\left(S_j(n)\geq S_m(n), N_m(n)\geq qn\right)\leq C e^{-\gamma n} \;,
\eeq
hold under any policy for $m=1, 2, ..., M$ and $j\neq m$.
\vspace{0.2cm} \\
\end{lemma}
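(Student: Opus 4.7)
My plan is to reduce the claim to a Chernoff/exponential-Markov computation in the spirit of Lemma~\ref{lemma:W_star}, after reindexing the LLR observations by the order of visits to each cell. Concretely, I would construct the probability space with an infinite pool of i.i.d.\ potential observations $\{\ell_k^{(r)}\}_{r\geq 1}$ for each cell $k$, independent across cells and independent of any policy randomization, and regard the policy as merely revealing $\ell_k^{(r)}$ the $r$-th time cell $k$ is probed. Under $\mathbf{P}_m$ one then has $\mathbf{E}_m\,\ell_m^{(1)}=D(g\|f)>0$, $\mathbf{E}_m\,\ell_j^{(1)}=-D(f\|g)<0$ for $j\neq m$, and $S_k(n)=\sum_{r=1}^{N_k(n)}\ell_k^{(r)}$.

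With this representation I would decompose by the realized visit counts:
\begin{equation*}
\mathbf{P}_m\!\left(S_j(n)\geq S_m(n),\,N_j(n)\geq qn\right)\leq\sum_{i=\lceil qn\rceil}^{n}\sum_{k=0}^{n}\mathbf{P}_m\!\left(\sum_{r=1}^{i}\ell_j^{(r)}-\sum_{r=1}^{k}\ell_m^{(r)}\geq 0\right),
\end{equation*}
noting that the event inside the inner probability is now policy-free. Applying the exponential Markov inequality with some $s>0$ and using independence of the two pools, each summand is bounded by $[\mathbf{E}_m e^{s\ell_j^{(1)}}]^{i}[\mathbf{E}_m e^{-s\ell_m^{(1)}}]^{k}$. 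Both MGFs equal $1$ at $s=0$ with strictly negative first derivatives ($-D(f\|g)$ and $-D(g\|f)$ respectively), so I can select a single $s^{*}>0$ for which $\rho_1:=\mathbf{E}_m e^{s^{*}\ell_j^{(1)}}<1$ and $\rho_2:=\mathbf{E}_m e^{-s^{*}\ell_m^{(1)}}<1$, assuming the usual regularity that both MGFs are finite in a neighborhood of the origin.

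Summing the two geometric series then gives $\rho_1^{\lceil qn\rceil}/[(1-\rho_1)(1-\rho_2)]=Ce^{-\gamma n}$ with $\gamma=-q\log\rho_1>0$, which is~(\ref{eq:lemma:S_j_S_m_N_j}). The companion bound~(\ref{eq:lemma:S_j_S_m_N_m}) follows by exactly the same calculation with the roles of $i$ and $k$ swapped: the large index set becomes $k\geq qn$, producing the decay rate $-q\log\rho_2>0$ instead.

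The only delicate point is justifying the reindexing step for an arbitrary (possibly randomized and history-dependent) policy, since $N_j(n)$ and $N_m(n)$ are coupled to the pools through the policy. The pool construction resolves this cleanly: once $\{\ell_k^{(r)}\}$ has been laid out in advance, independent of all policy randomness, the inner event in the decomposition depends only on the pools and not on the policy, and the Chernoff bound is uniform in $(i,k)$. I therefore do not expect any genuine obstacle; the remaining steps are a routine geometric sum.
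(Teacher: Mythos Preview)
Your proposal is correct and follows essentially the same route as the paper: decompose over the realized visit counts $(i,k)$, apply the exponential Markov/Chernoff bound using the i.i.d.\ property of the LLRs, pick $s>0$ so that both $\mathbf{E}_m e^{s\ell_j(1)}<1$ and $\mathbf{E}_m e^{-s\ell_m(1)}<1$, and sum the resulting geometric series to extract the rate $\gamma=q\gamma_1$. Your explicit ``pool'' construction is just a cleaner articulation of the measurability step that the paper handles by reference to the argument of Lemma~\ref{lemma:W_star}.
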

\begin{proof}
We prove (\ref{eq:lemma:S_j_S_m_N_j}). Proving (\ref{eq:lemma:S_j_S_m_N_m}) applies with minor modifications.
Note that we can develop (\ref{eq:lemma:S_j_S_m_N_j}) by summing over any possible values that $N_j(n), N_m(n)$ can take (i.e., $N_j(n)=\lceil qn\rceil, \lceil qn\rceil+1, ...n$, and $N_m(n)=0, ..., n$). Similar to (\ref{eq:lemma_W_star_bound}), applying the Chernoff bound and using the i.i.d. property of $\ell_{j}(t), \ell_{m}(t)$ across time yield:
\beq
\bea{l}
\displaystyle\mathbf{P}_m\left(S_j(n)\geq S_m(n), N_j(n)\geq qn\right)\vspace{0.2cm}\\
\leq\displaystyle\sum_{r=\lceil qn\rceil}^{n}\;\sum_{k=0}^{n}\mathbf{P}_m\left(\sum_{i=1}^{r}\ell_{j}(i)
                                                              +\sum_{i=1}^{k}-\ell_{m}(i)\geq 0\right)
\vspace{0.2cm}\\
\leq\displaystyle\sum_{r=\lceil qn\rceil}^{n}\;\sum_{k=0}^{n}\left[\mathbf{E}_m\left(e^{s\ell_j(1)}\right)\right]^{r}
    \left[\mathbf{E}_m\left(e^{s(-\ell_m(1))}\right)\right]^{k}
\ena
\eeq
for all $s>0$.

Note that a moment generating function (MGF) is equal to one at $s=0$. Furthermore, since $\mathbf{E}_m(\ell_j(1))=-D(f||g)<0$ and $\mathbf{E}_m(-\ell_m(1))=-D(g||f)<0$ are strictly negative, differentiating the MGFs of $\ell_j(1), \ell_m(1)$ with respect to $s$ yields strictly negative derivatives at $s=0$. As a result, there exist $s>0$ and $\gamma_1>0$ such that $\mathbf{E}_m\left(e^{s\ell_j(1)}\right)$, and $\mathbf{E}_m\left(e^{s(-\ell_m(1))}\right)$ are strictly less than $e^{-\gamma_1}<1$. Hence, there exist $C>0$ and $\gamma=\gamma_1\cdot q>0$ such that
\beq
\bea{l}
\displaystyle\mathbf{P}_m\left(S_j(n)-S_m(n)\geq 0, N_j(n)\geq qn\right) \vspace{0.2cm}\\
\leq\displaystyle\sum_{r=\lceil qn\rceil}^{n} e^{-\gamma_1 r}  \sum_{k=0}^{n} e^{-\gamma_1 k}
\leq\displaystyle C e^{-\gamma n} \;.
\ena
\eeq
\vspace{0.1cm}
\end{proof}
%
%
\vspace{0.1cm}
For the following definition, recall that $S_m(n)$ is a random walk with positive expected increment $\mathbf{E}_m(\ell_m(n))=D(g||f)>0$, while $S_j(n)$, for $j\neq m$ is a random walk with negative expected increment $\mathbf{E}_m(\ell_j(n))=-D(f||g)<0$. As a result, ultimately, the sample path of $S_m(n)$ will dominate those of $S_j(n)$, $\forall j\neq m$, when $n$ (and also the number of samples taken from cells $m$ or $j$) is sufficiently large. Below, we define a random time~$\tau_1$, which is the last passage time where the sample path of $S_m(n)$ will dominate those of $S_j(n)$ for all $n\geq\tau_1$, i.e., $\tau_1$ is the last passage time in which $S_m(n)$ crosses $S_j(n)$. It should be noted that $\tau_1$ is not a stopping time (note that $\tau_1$ depends on the future by definition) and the decision maker does not know whether $\tau_1$ has arrived. In Lemma~\ref{lemma:tau_1_policy1} below we show that $\tau_1$ is sufficiently small with high probability. We will use this result later to upper bound the actual stopping time~$\tau$ under DGF.
\vspace{0.1cm}
\begin{definition}
$\tau_1$ is the smallest integer such that $S_m(n)>S_j(n)$ for all $j\neq m$ for all $n\geq\tau_1$. \vspace{0.2cm}
\end{definition}

\begin{remark}
In the following lemmas, when we say that the \emph{DGF policy is implemented indefinitely} we mean that DGF probes the cells indefinitely according to its selection rule, while the stopping rule is disregarded. \vspace{0.2cm}
\end{remark}

\begin{lemma}
\label{lemma:tau_1_policy1}
Assume that the DGF policy is implemented indefinitely.
Then, there exist $C>0$ and $\gamma>0$ such that
\beq
\label{eq:lemma_Pr_tau_1}
\mathbf{P}_m\left(\tau_1>n\right)\leq Ce^{-\gamma n}  \;,
\eeq
for $m=1, 2, ..., M$.
\vspace{0.2cm} \\
\end{lemma}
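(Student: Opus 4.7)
My plan is to reduce $\mathbf{P}_m(\tau_1 > n)$ to a sum of pairwise tail probabilities and then apply Lemma~\ref{lemma:S_j_S_m_N_j} after establishing a lower bound on the sampling frequency of each false cell under DGF. Concretely, I would first observe that the event $\{\tau_1 > n\}$ means there exist $t \geq n$ and $j \neq m$ with $S_j(t) \geq S_m(t)$, so the union bound yields
\begin{equation*}
\mathbf{P}_m(\tau_1 > n) \;\leq\; \sum_{j \neq m}\sum_{t=n}^{\infty} \mathbf{P}_m\bigl(S_j(t)\geq S_m(t)\bigr).
\end{equation*}
It thus suffices to show $\mathbf{P}_m(S_j(t)\geq S_m(t))\leq C_0 e^{-\gamma_0 t}$ uniformly in $j \neq m$, since the geometric tail then gives the claimed $Ce^{-\gamma n}$ with $\gamma = \gamma_0$.

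For each fixed $t$ and $j \neq m$, I would split according to the sampling count of cell $j$:
\begin{equation*}
\mathbf{P}_m\bigl(S_j(t)\geq S_m(t)\bigr) \;\leq\; \mathbf{P}_m\bigl(S_j(t)\geq S_m(t),\, N_j(t)\geq qt\bigr) + \mathbf{P}_m\bigl(N_j(t) < qt\bigr),
\end{equation*}
for a constant $q > 0$ to be chosen. The first term decays as $C_1 e^{-\gamma_1 t}$ directly by Lemma~\ref{lemma:S_j_S_m_N_j}. The entire remaining task is to bound the second term, and this is the main obstacle.

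To control $\mathbf{P}_m(N_j(t) < qt)$, I would exploit the self-balancing nature of DGF. In Case~1 ($D(g\|f)\geq D(f\|g)/(M-1)$ or $K=M$) the $M-1$ false cells share the $K-1$ slots left after probing cell $m$; in Case~2 ($D(g\|f)<D(f\|g)/(M-1)$, $K<M$) they share all $K$ slots; the subcase $K=M$ is trivial since $N_j(t)=t$ deterministically. Because $\mathbf{E}_m[\ell_j]=-D(f\|g)<0$ for every $j\neq m$, a false cell whose sample count $N_j(t)$ falls significantly below the average has, up to a zero-mean fluctuation, the largest sum LLR among its peers, and by the ordering rule of DGF it is therefore selected at the next step. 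I would formalize this feedback mechanism by noting that $\{N_j(t) < qt\}$ for small enough $q$ forces the partial sum $S_j(s) + N_j(s) D(f\|g)$ at some $s \leq t$ to be atypically large, or forces some other cell's zero-mean partial sum to be atypically small; in either case a Chernoff bound on i.i.d.\ increments (as used in Lemmas~\ref{lemma:W_star} and~\ref{lemma:S_j_S_m_N_j}) yields $\mathbf{P}_m(N_j(t) < qt) \leq C_2 e^{-\gamma_2 t}$ for any $q$ strictly less than the balanced rate ($(K-1)/(M-1)$ in Case~1, $K/(M-1)$ in Case~2). This is precisely the step that distinguishes DGF from the modified Chernoff test of~\cite{Nitinawarat_2013_Controlled}, where the deliberate injection of uniformly random actions produces only polynomial decay.

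Combining the two bounds gives $\mathbf{P}_m(S_j(t)\geq S_m(t))\leq (C_1+C_2)\, e^{-\min(\gamma_1,\gamma_2)\, t}$, and summing over $j\neq m$ and $t\geq n$ produces a geometric series whose tail is bounded by $C e^{-\gamma n}$, completing the proof.
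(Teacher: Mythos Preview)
Your opening reduction via the union bound,
\[
\mathbf{P}_m(\tau_1>n)\leq\sum_{j\neq m}\sum_{t=n}^\infty\mathbf{P}_m\bigl(S_j(t)\geq S_m(t)\bigr),
\]
matches the paper exactly, and the first half of your split is handled by Lemma~\ref{lemma:S_j_S_m_N_j} just as the paper does. The gap is in the second half. By dropping the event $\{S_j(t)\geq S_m(t)\}$ and bounding $\mathbf{P}_m(N_j(t)<qt)$ on its own, you are asserting something that is simply false in Case~1 with $K=1$: there DGF probes only $m^{(1)}(n)$, so under $H_m$ each false cell $j$ is \emph{eventually never selected} and $N_j(t)$ stays bounded. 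Your quoted balanced rate $(K-1)/(M-1)$ equals zero and no positive $q$ works; $\mathbf{P}_m(N_j(t)<qt)\to 1$, not $e^{-\gamma t}$. Even for $K\geq 2$, your heuristic that the $M-1$ false cells share $K-1$ slots ``left after probing cell $m$'' already presupposes $m=m^{(1)}$, which is precisely the conclusion $\tau_1$ encodes; the argument is circular as written.

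The paper avoids both issues by keeping the event $\{S_j(t)\geq S_m(t)\}$ and splitting on \emph{both} sample counts: it separates $\{N_j(n)\geq\rho n\}$, $\{N_m(n)\geq\rho n\}$ (each handled by Lemma~\ref{lemma:S_j_S_m_N_j}), and the residual $\{N_j(n)<\rho n,\,N_m(n)<\rho n\}$. On that residual, pigeonhole produces a third cell $r\neq j,m$ probed at least $\frac{(1-2\rho)n}{M-2}$ times while $j,m$ are idle; the DGF selection rule then forces $S_j\leq S_r$ or $S_m\leq S_r$ at those instants. Since $N_r$ is of order $n$ while $N_j,N_m<\rho n$, either inequality is an $O(n)$ deviation of a centred sum and decays exponentially via Chernoff. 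This third-cell comparison is the missing idea; your ``self-balancing'' sketch gestures at it but, by discarding the joint event and the count $N_m$, cannot close the argument.
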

%
\begin{proof}
We focus on the case where $M>2$. The case of $M=2$ is simpler and follows with minor modifications.
Note that:
\beq\label{eq:tau1}
\bea{l}
\displaystyle\mathbf{P}_m\left(\tau_1>n\right)\leq\mathbf{P}_m\left(\max_{j\neq m}\;\sup_{t\geq n}\;\left(S_j(t)-S_m(t)\right)\geq 0 \right) \vspace{0.2cm} \\ \hspace{2cm}
\leq\displaystyle\sum_{j\neq m}\;\sum_{t=n}^{\infty}\mathbf{P}_m\left(S_j(t)\geq S_m(t)\right)\;.
\ena
\eeq
Therefore, it suffices to show that there exist $C>0$ and $\gamma>0$ such that $\mathbf{P}_m\left(S_j(n)\geq S_m(n)\right)\leq Ce^{-\gamma n}$. \\

\emph{\textbf{Step 1:} Bounding each term in the summation on the RHS of (\ref{eq:tau1}):}

Let
\begin{center}
$\displaystyle\rho=\frac{1}{16(M-2)}$\;.
\end{center}
Note that $0<\rho\leq 1/16$. \\
Thus,
\beq
\bea{l}
\label{eq:l_tau_1_policy1_Sj_geq_Sm}
\mathbf{P}_m\left(S_j(n)\geq S_m(n)\right) \vspace{0.2cm} \\ 
\leq\mathbf{P}_m\left(S_j(n)\geq S_m(n), N_j(n)<\rho n , N_m(n)<\rho n\right)
\vspace{0.2cm} \\ 
+\mathbf{P}_m\left(S_j(n)\geq S_m(n), N_j(n)\geq \rho n\right) \vspace{0.2cm} \\ 
+\mathbf{P}_m\left(S_j(n)\geq S_m(n), N_m(n)\geq \rho n\right)
\ena
\eeq
By Lemma 6, there exist $\gamma_1>0$ and $D>0$ such that the second and the third terms on the RHS are upper bounded by $D e^{-\gamma_1 n}$. In the case of $K=M$ the first term on the RHS equals zero (since $N_j(n)=N_m(n)=n$ surely). Hence, it remains to show that the first term on the RHS decreases exponentially with $n$ for $K<M$.
Note that the event $(N_j(n)<\rho n, N_m(n)<\rho n)$ implies that at least $\tilde{n}=n-N_j(n)-N_m(n)\geq n\left(1-2\rho\right)$ times cells $j,m$ are not observed. Let $\widetilde{N}_r(n)$ be the number of times when cell $r\neq j, m$ has been observed and cells $j,m$ have not been observed up to time $n$. We refer to each such time as \textbf{\emph{$r_{\neq j,m}$-probing time}}. There exists a cell $r\neq j,m$ such that $\widetilde{N}_r(n)\geq\frac{\tilde{n}}{M-2}=\frac{n(1-2\rho)}{M-2}$. Hence, (\ref{eq:l_tau_1_policy1_Sj_geq_Sm}) can be upper bounded by:
\beq\label{eq:Pm_Sj_Sm}
\bea{l}
\mathbf{P}_m\left(S_j(n)\geq S_m(n)\right) \vspace{0.2cm} \\ 
\displaystyle\leq\sum_{r\neq j,m}\mathbf{P}_m\left(\tilde{N}_r(n)>\frac{n(1-2\rho)}{M-2}, \right. \vspace{0.2cm} \\ \hspace{3cm}
\displaystyle\left. N_j(n)<\rho n, N_m(n)<\rho n\right)
\vspace{0.2cm} \\ \hspace{3cm}
+2D e^{-\gamma_1 n}
\ena
\eeq
It remains to show that each term in the summation on the RHS of (\ref{eq:Pm_Sj_Sm}) decreases exponentially with $n$. \\

\emph{\textbf{Step 2:} Bounding each term in the summation on the RHS of (\ref{eq:Pm_Sj_Sm}):}

Let $\tilde{t}^r_1, \tilde{t}^r_2, ..., \tilde{t}^r_{\tilde{N}_r(n)}$ be the \emph{$r_{\neq j,m}$-probing time} indices and let
\begin{center}
$\displaystyle\zeta\triangleq\frac{1-2\rho}{2(M-2)} \;.$
\end{center}
Note that:
\begin{itemize}
    \item At every \emph{$r_{\neq j,m}$-probing time}, $S_j(n)\leq S_r(n)$ or $S_m(n)\leq S_r(n)$ must occur (otherwise, if $S_j(n)>S_r(n)$ and $S_m(n)>S_r(n)$ then $j$ or $m$ are observed).
  \item In particular, the event $\widetilde{N}_r(n)>\frac{n(1-2\rho)}{M-2}$ implies that at time $\tilde{t}^r_{\zeta n}$ the following holds: $S_j(\tilde{t}^r_{\zeta n})\leq S_r(\tilde{t}^r_{\zeta n})$ or $S_m(\tilde{t}^r_{\zeta n})\leq S_r(\tilde{t}^r_{\zeta n})$ must occur.
  \item Since $N_r\left(t\right)$ is the total number of observations taken from cell $r$ up to time $t$ (during both $r_{\neq j,m}$-probing times and all other times when cell $r$ was probed), then $N_r\left(\tilde{t}^r_{\zeta n}\right)\geq\zeta n$.
\end{itemize}

Therefore, using the i.i.d. property of the LLRs across time we have\footnote{For the ease of presentation, throughout the proof we assume that $\zeta n$, $\rho n$ are integers. This assumption does not affect the exponential decay of the Chernoff bound but only the exact value of $C>0$ in (\ref{eq:lemma_Pr_tau_1}) (since $\alpha n-1\leq\lfloor\alpha n\rfloor\leq\lceil\alpha n\rceil\leq\alpha n+1$ holds for all $\alpha\geq 0$ for all $n=0, 1, ... $).}:
\beq\label{eq:tilde_Nr}
\bea{l}
\mathbf{P}_m\left(\tilde{N}_r(n)>\frac{n(1-2\rho)}{M-2}, 
N_j(n)<\rho n, N_m(n)<\rho n\right) \vspace{0.2cm} \\
\leq\displaystyle\sum_{N'_r=\zeta n}^{n}\mathbf{P}_m\left(\inf_{n'\leq \rho n} \sum_{i=1}^{n'}\ell_j(i)\leq \sum_{i=1}^{N'_r}\ell_r(i)\right) \vspace{0.2cm} \\ \hspace{0.5cm}
+\displaystyle\sum_{N'_r=\zeta n}^{n}\mathbf{P}_m\left(\inf_{n'\leq \rho n} \sum_{i=1}^{n'}\ell_m(i)\leq \sum_{i=1}^{N'_r}\ell_r(i)\right) \vspace{0.2cm} \\
\leq\displaystyle\sum_{N'_r=\zeta n}^{n}\;\sum_{n'=0}^{\rho n}
\mathbf{P}_m\left(\sum_{i=1}^{n'}\ell_j(i)\leq \sum_{i=1}^{N'_r}\ell_r(i)\right) \vspace{0.2cm} \\ \hspace{0.5cm}
+\displaystyle\sum_{N'_r=\zeta n}^{n}\;\sum_{n'=0}^{\rho n}
\mathbf{P}_m\left(\sum_{i=1}^{n'}\ell_m(i)\leq \sum_{i=1}^{N'_r}\ell_r(i)\right) \vspace{0.2cm} \\
=\displaystyle\sum_{q=0}^{n-\zeta n}\;\sum_{n'=0}^{\rho n}
\mathbf{P}_m\left(\sum_{i=1}^{n'}\ell_j(i)\leq \sum_{i=1}^{\zeta n+q}\ell_r(i)\right) \vspace{0.2cm} \\ \hspace{0.5cm}
+\displaystyle\sum_{q=0}^{n-\zeta n}\;\sum_{n'=0}^{\rho n}
\mathbf{P}_m\left(\sum_{i=1}^{n'}\ell_m(i)\leq \sum_{i=1}^{\zeta n+q}\ell_r(i)\right) \vspace{0.2cm} \\
\ena
\eeq

\emph{\textbf{Step 3:} Bounding the first term on the RHS of (\ref{eq:tilde_Nr}):}

Note that
\beq
\bea{l}
\displaystyle \sum_{i=1}^{\zeta n+q}\ell_r(i)+\sum_{i=1}^{n'}-\ell_j(i) \vspace{0.2cm}\\ 
=\displaystyle \sum_{i=1}^{\zeta n+q}\tilde{\ell}_r(i)+\sum_{i=1}^{n'}-\tilde{\ell}_j(i) 
                    -D(f||g)\left(\zeta n+q-n'\right)
                    \;.
\ena
\eeq
and
\begin{center}
$\bea{l}
\displaystyle\zeta n+q-n'\geq \zeta n+q-n'-2\left(\rho n-n'\right)\vspace{0.1cm}\\
\displaystyle=n\left(\zeta-2\rho\right)+q+n'\geq\frac{1}{4(M-2)}n+q+n'\vspace{0.1cm}\\
\displaystyle\geq\frac{1}{4(M-2)}(n+q+n')\;,
\vspace{0.1cm}\\
\ena$
\end{center}
for all $n'\leq\rho n$. \\
Therefore,
\beq
\bea{l}
\displaystyle\sum_{i=1}^{\zeta n+q}\ell_r(i)+\sum_{i=1}^{n'}-\ell_j(i)\geq 0
\ena
\eeq
implies
\beq
\bea{l}
\displaystyle\sum_{i=1}^{\zeta n+q}\tilde{\ell}_r(i)+\sum_{i=1}^{n'}-\tilde{\ell}_j(i)\geq C_1\left(n+q+n'\right)\;.
\ena
\eeq
where $C_1=\frac{D(f||g)}{4(M-2)}>0$. \\
Applying the Chernoff bound and using the i.i.d. property of $\ell_r(t), \ell_j(t)$ across the time series yield:
\beq\label{eq:Chernoff}
\bea{l}
\displaystyle\mathbf{P}_m\left(\sum_{i=1}^{n'}\ell_j(i)\leq \sum_{i=1}^{\zeta n+q}\ell_r(i)\right) \vspace{0.2cm}\\
\displaystyle\leq\mathbf{P}_m\left(\sum_{i=1}^{\zeta n+q}\tilde{\ell}_r(i)+\sum_{i=1}^{n'}-\tilde{\ell}_j(i)\geq C_1\left(n+q+n'\right)\right) \vspace{0.2cm}\\
\leq\displaystyle\left[\mathbf{E}_m\left(e^{s\tilde{\ell}_r(1)}\right)\right]^{\zeta n+q}
    \left[\mathbf{E}_m\left(e^{s(-\tilde{\ell}_j(1))}\right)\right]^{n'} \times\vspace{0.2cm}\\ \hspace{3cm} \displaystyle e^{-sC_1\left(n+q+n'\right)} \vspace{0.2cm}\\
=\displaystyle\left[\mathbf{E}_m\left(e^{s\left(\tilde{\ell}_r(1)-C_1\right)}\right)\right]^{\zeta n+q} \times\vspace{0.2cm}\\ \hspace{1cm}
    \displaystyle\left[\mathbf{E}_m\left(e^{s\left(-\tilde{\ell}_j(1)-C_1\right)}\right)\right]^{n'} \times\vspace{0.2cm}\\ \hspace{3cm}
    \displaystyle e^{-sC_1\left(n-\zeta n\right)}
\;.
\ena
\eeq
for all $s>0$.\\
Since $\mathbf{E}_m(\tilde{\ell}_r(1)-C_1)=-C_1<0$ and $\mathbf{E}_m(-\tilde{\ell}_j(1)-C_1)=-C_1<0$ are strictly negative, by applying a similar argument as at the end of the proof of Lemma \ref{lemma:S_j_S_m_N_j}, there exist $s>0$ and $\gamma_2>0$ such that $\mathbf{E}_m\left(e^{(s\tilde{\ell}_r(1)-C_1)}\right)$, $\mathbf{E}_m\left(e^{s(-\tilde{\ell}_j(1)-C_1)}\right)$ and $e^{-sC_1}$ are strictly less than $e^{-\gamma_2}<1$.
Hence,
\beq\label{eq:Chernoff}
\bea{l}
\displaystyle\mathbf{P}_m\left(\sum_{i=1}^{n'}\ell_j(i)\leq \sum_{i=1}^{\zeta n+q}\ell_r(i)\right)
    \leq\displaystyle e^{-\gamma_2\left(n+q+n'\right)}
\;.
\ena
\eeq
and
\beq
\bea{l}
\displaystyle\sum_{q=0}^{n-\zeta n}\;\sum_{n'=0}^{\rho n}
\mathbf{P}_m\left(\sum_{i=1}^{n'}\ell_j(i)\leq \sum_{i=1}^{\zeta n+q}\ell_r(i)\right) \vspace{0.2cm} \\ \hspace{0.2cm}
\displaystyle\leq e^{-\gamma_2 n} \sum_{q=0}^{n-\zeta n} e^{-\gamma_2 q}
\sum_{n'=0}^{\rho n} e^{-\gamma_2 n'}\leq C_2 e^{-\gamma_2 n} \;,
\ena
\eeq
where $C_2=\left(1-e^{-\gamma_2}\right)^{-2}$.\\

\emph{\textbf{Step 4:} Bounding the second term on the RHS of (\ref{eq:tilde_Nr}):}

Applying the Chernoff bound and using the i.i.d. property of $\ell_r(t), \ell_m(t)$ across the time series yield:
\beq\label{eq:Chernoff}
\bea{l}
\displaystyle\mathbf{P}_m\left(\sum_{i=1}^{\zeta n+q}\ell_r(i)+\sum_{i=1}^{n'}-\ell_m(i)\geq 0\right) \vspace{0.2cm}\\
\leq\displaystyle\left[\mathbf{E}_m\left(e^{s\ell_r(1)}\right)\right]^{\zeta n+q}
    \left[\mathbf{E}_m\left(e^{s(-\ell_m(1))}\right)\right]^{n'}
\;.
\ena
\eeq
for all $s>0$.\\

Since $\mathbf{E}_m(\ell_r(1))=-D(f||g)<0$ and $\mathbf{E}_m(-\ell_m(1))=-D(g||f)<0$ are strictly negative, there exist $s>0$ and $\gamma'_3>0$ such that $\mathbf{E}_m\left(e^{s\ell_r(1)}\right)$, $\mathbf{E}_m\left(e^{s(-\ell_m(1))}\right)$ are strictly less than $e^{-\gamma'_3}<1$. Hence,
\beq\label{eq:Chernoff}
\bea{l}
\displaystyle\mathbf{P}_m\left(\sum_{i=1}^{\zeta n+q}\ell_r(i)+\sum_{i=1}^{n'}-\ell_m(i)\geq 0\right)
\leq e^{-\gamma'_3(\zeta n+q+n')}
\;.
\ena
\eeq
Finally, there exists $\gamma_3=\zeta\gamma'_3>0$ such that
\beq
\bea{l}
\displaystyle\sum_{q=0}^{n-\zeta n}\;\sum_{n'=0}^{\rho n}
\mathbf{P}_m\left(\sum_{i=1}^{n'}\ell_m(i)\leq \sum_{i=1}^{\zeta n+q}\ell_r(i)\right) \vspace{0.2cm} \\ \hspace{0.2cm}
\displaystyle\leq e^{-\gamma_3 n} \sum_{q=0}^{n-\zeta n} e^{-\gamma_3 q/\zeta}
\sum_{n'=0}^{\rho n} e^{-\gamma_3 n'/\zeta}\leq \frac{e^{-\gamma_3 n}}{\left(1-e^{-\gamma_3/\zeta}\right)^2} \;,
\ena
\eeq
which completes the proof.
\vspace{0.2cm}
\end{proof}
It should be noted that differing from~\cite{Nitinawarat_2013_Controlled}, where only a polynomial decay of $\mathbf{P}_m(\tau_1>n)$ was shown to handle indistinguishable hypotheses under some (but not all) actions when applying the extended randomized Chernoff test, Lemma~\ref{lemma:tau_1_policy1} shows exponential decay of $\mathbf{P}_m(\tau_1>n)$ under DGF. \vspace{0.1cm}

For the next lemmas we define
\beq\label{eq:D_prime_fg}
\displaystyle D'(f||g)\triangleq(K-1)D(f||g)/(M-1) \;.  \vspace{0.2cm}
\eeq
In what follows we define the second random time~$\tau_2\geq\tau_1$. $\tau_2$ can be viewed as the time where sufficient information has been gathered to distinguish hypothesis $m$ from at least one false hypothesis $j\neq m$. We point out that $\tau_2$ is not a stopping time. However, it serves us later in upper bounding the stopping time~$\tau$ under DGF. Lemma~\ref{lemma:tau_2_policy1} shows that in the asymptotic regime the total time between $\tau_1$ and $\tau_2$ approaches $\sim-\log c/I^*(M,K)$. \vspace{0.2cm}
\begin{definition}
$\tau_2$ is defined as follows:
\begin{enumerate}
  \item If $K=M$, $\tau_2$ denotes the smallest integer such that $\sum_{i=\tau_1+1}^{n}\ell_m(i)\mathbf{1}_m(i)\geq-\frac{D(g||f)}{I^*(M,K)}\log c$ and $\sum_{i=\tau_1+1}^{n}\ell_{j_n}(i)\mathbf{1}_{j_n}(i)\leq\frac{D(f||g)}{I^*(M,K)}\log c$ for some $j_n\neq m$ for all $n\geq\tau_2\geq\tau_1$.\vspace{0.2cm}
  \item If $K<M$ and $I^*(M,K)=KD(f||g)/(M-1)$, $\tau_2$ denotes the smallest integer such that $\sum_{i=\tau_1+1}^{n}\ell_{j_n}(i)\mathbf{1}_{j_n}(i)\leq\log c$ for some $j_n\neq m$ for all $n\geq\tau_2\geq\tau_1$.  \vspace{0.2cm}
  \item If $K<M$ and $I^*(M,K)=D(g||f)+(K-1)D(f||g)/(M-1)$, $\tau_2$ denotes the smallest integer such that $\sum_{i=\tau_1+1}^{n}\ell_m(i)\mathbf{1}_m(i)\geq-\frac{D(g||f)}{I^*(M,K)}\log c$ and $\sum_{i=\tau_1+1}^{n}\ell_{j_n}(i)\mathbf{1}_{j_n}(i)\leq\frac{D'(f||g)}{I^*(M,K)}\log c$ for some $j_n\neq m$ for all $n\geq\tau_2\geq\tau_1$.\vspace{0.2cm}
\end{enumerate}
\end{definition}
\begin{definition}
$n_2\triangleq\tau_2-\tau_1$ denotes the total amount of time between $\tau_1$ and $\tau_2$. \vspace{0.2cm}
\end{definition}
\begin{lemma}
\label{lemma:tau_2_policy1}
Assume that the DGF policy is implemented indefinitely.
Then, for every fixed $\epsilon>0$ there exist $C>0$ and $\gamma>0$ such that
\beq
\mathbf{P}_m\left(n_2>n\right)\leq C e^{-\gamma n} \;\;\;\; \forall n>-(1+\epsilon)\log c/I^*(M,K)\;,
\eeq
for all $m=1, 2, ..., M$.
\vspace{0.2cm}
\end{lemma}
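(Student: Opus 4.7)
The plan is to treat the three cases in the definition of $\tau_2$ separately, exploiting the fact that after $\tau_1$ we have $m^{(1)}(n)=m$ for all $n\geq\tau_1$, so the DGF selection rule becomes deterministic relative to the true cell $m$. Letting $N_j^{(2)}(n)$ denote the number of observations of cell $j$ in the window $[\tau_1+1,\tau_1+n]$, the first step is to record the following observation-count guarantees, which follow immediately from DGF's selection rule together with pigeonhole: in Case~1 ($K=M$), $N_j^{(2)}(n)=n$ for every $j$; in Case~2 ($K<M$ and $I^*=KD(f||g)/(M-1)$), DGF selects $m^{(2)},\ldots,m^{(K+1)}$, so $\sum_{j\neq m}N_j^{(2)}(n)=Kn$ and at least one $j^*\neq m$ has $N_{j^*}^{(2)}(n)\geq \lceil Kn/(M-1)\rceil$; in Case~3 ($K<M$ and $I^*=D(g||f)+(K-1)D(f||g)/(M-1)$), DGF selects $m^{(1)},\ldots,m^{(K)}$, so $N_m^{(2)}(n)=n$ while $\sum_{j\neq m}N_j^{(2)}(n)=(K-1)n$ and at least one $j^*\neq m$ has $N_{j^*}^{(2)}(n)\geq\lceil(K-1)n/(M-1)\rceil$.

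The second step is to convert these counts into Chernoff tail bounds, in the spirit of Lemma~\ref{lemma:W_star} and Lemma~\ref{lemma:S_j_S_m_N_j}. For the ``cell-$m$'' thresholds (Cases 1 and 3), $\sum_{i=\tau_1+1}^{\tau_1+n}\ell_m(i)\mathbf{1}_m(i)$ is a sum of $n$ i.i.d.\ LLRs with positive mean $D(g||f)$, and a short calculation shows that for $n>-(1+\epsilon)\log c/I^*$ the threshold $-D(g||f)(\log c)/I^*$ lies below this mean with a per-sample margin of at least $\epsilon D(g||f)/(1+\epsilon)$; the Chernoff bound on the lower tail therefore yields failure probability $\leq C_1 e^{-\gamma_1 n}$. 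For the ``some-$j_n$'' thresholds, I would apply the Chernoff bound to the single pigeonhole-guaranteed cell $j^*$: its sum has mean $-N_{j^*}^{(2)}(n)D(f||g)\leq-\kappa nD(f||g)$ with $\kappa\in\{1,K/(M-1),(K-1)/(M-1)\}$, and again this lies strictly on the favorable side of the relevant threshold by a per-sample constant for $n>-(1+\epsilon)\log c/I^*$. A union bound over the $M-1$ candidate cells preserves the $Ce^{-\gamma n}$ form.

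Because $\tau_2$ is the \emph{smallest} integer after which the conditions hold for \emph{every} subsequent time, the event $\{n_2>n\}$ requires failure at some $n'\geq\tau_1+n$; I would handle this by a union bound $\mathbf{P}_m(n_2>n)\leq\sum_{n'\geq n}\mathbf{P}_m(\text{conditions fail at time }\tau_1+n')$. Since the per-time failure probability decays exponentially in $n'$ (the cell-$m$ sum has positive drift $D(g||f)$ carrying it further above threshold, and each non-target cell has drift $-D(f||g)$ carrying it further below), the geometric sum is dominated by its first term $\leq Ce^{-\gamma n}$.

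The main obstacle is that $\tau_1$ is not a stopping time, so the observations and DGF selections inside $[\tau_1+1,\tau_1+n]$ are not a priori independent of $\tau_1$. I would handle this exactly as in the proofs of Lemmas~\ref{lemma:tau_1_policy1} and~\ref{lemma:S_j_S_m_N_j}: sum unconditionally over the possible values of the counts $N_j^{(2)}(n)$, using that for each fixed count $k$ the conditional distribution of a cell's sum is that of $k$ i.i.d.\ $\ell_j$'s regardless of the random indices at which the samples are taken. Combining this with the exponential tail bound $\mathbf{P}_m(\tau_1>n_0)\leq Ce^{-\gamma n_0}$ from Lemma~\ref{lemma:tau_1_policy1} (for $n_0$ much smaller than $-\log c/I^*$), the polynomial overhead from the union bound is absorbed into the exponential decay, yielding the desired $Ce^{-\gamma n}$ bound uniformly for $n>-(1+\epsilon)\log c/I^*$.
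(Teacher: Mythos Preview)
Your proposal is correct and follows essentially the same route as the paper's proof: the same three-case split, the same pigeonhole argument to guarantee a most-sampled non-target cell $j^*$ with $N_{j^*}^{(2)}(n)\ge \kappa n$ (with $\kappa=1,\ K/(M-1),\ (K-1)/(M-1)$ in the three cases), the same per-sample margin $\epsilon_1=D(g||f)\epsilon/(1+\epsilon)$ for the cell-$m$ threshold, and the same reduction of $\{n_2>n\}$ to a union over $t\ge n$ of one-time failures handled by the Chernoff bound. The only substantive difference is that you explicitly flag the measurability issue arising from $\tau_1$ not being a stopping time and propose to resolve it by summing over counts and invoking Lemma~\ref{lemma:tau_1_policy1}; the paper simply writes ``by applying the Chernoff bound'' to $\mathbf{P}_m\bigl(\sum_{i=\tau_1+1}^{t+\tau_1}-\tilde\ell_m(i)\ge t\epsilon_1\bigr)$ without further comment, so your version is, if anything, more careful on this point.
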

%
\begin{proof}
We prove the lemma for three cases: \vspace{0.1cm}\\

\emph{\textbf{Case 1 :} $K=M$:} \\
In this case $I^*(M,K)=D(g||f)+D(f||g)$ and $\mathbf{1}_k(t)=1$ for all $k, t$. Let $\tau_2^m$ and $\tau_2^j$ for $j\neq m$ be the smallest integers such that $\sum_{i=\tau_1+1}^{n}\ell_m(i)\geq-\frac{D(g||f)}{I^*(M,K)}\log c$ for all $n>\tau_2^m$ and $\sum_{i=\tau_1+1}^{n}\ell_j(i)\leq\frac{D(f||g)}{I^*(M,K)}\log c$ for all $n>\tau_2^j$ for $j\neq m$, respectively. Similarly, $n_2^k$ denotes the total amount of time between $\tau_1$ and $\tau_2^k$. Clearly, $n_2\leq\max_k(n_2^k)$. As a result,
\beq
\bea{l}
\displaystyle\mathbf{P}_m\left(n_2>n\right)
\displaystyle\leq\sum_{k=1}^{M}\mathbf{P}_m\left(n_2^k>n\right) \;.
\ena
\eeq
Thus it remains to show that $\mathbf{P}_m\left(n_2^k>n\right)$ decreases exponentially with $n$.
Next, we prove the lemma for cell $m$. The proof for cell $j\neq m$ follows with minor modifications.
Let $\epsilon_1=D(g||f)\epsilon/(1+\epsilon)>0$. Thus,
\beq
\bea{l}
\displaystyle\sum_{i=\tau_1+1}^{t+\tau_1}\ell_m(i)+\frac{D(g||f)}{I^*(M,K)}\log c \vspace{0.1cm}\\
=\displaystyle\sum_{i=\tau_1+1}^{t+\tau_1}\tilde{\ell}_m(i)+t D(g||f)+\frac{D(g||f)}{I^*(M,K)}\log c
\vspace{0.1cm}\\ 
\geq\displaystyle\sum_{i=\tau_1+1}^{t+\tau_1}\tilde{\ell}_m(i)+t\epsilon_1
\;,
\ena
\eeq
for all $t\geq n>-(1+\epsilon)\log c/I^*(M,K)$. \\
As a result,
\beq
\bea{l}
\displaystyle\sum_{i=\tau_1+1}^{t+\tau_1}\ell_m(i)\leq-\frac{D(g||f)}{I^*(M,K)}\log c \;.
\ena
\eeq
implies
\beq
\bea{l}
\displaystyle\sum_{i=\tau_1+1}^{t+\tau_1}\tilde{\ell}_m(i)\leq-t\epsilon_1 \;.
\ena
\eeq
Hence, for any $\epsilon>0$ there exists $\epsilon_1>0$ such that
\beq
\bea{l}
\displaystyle\mathbf{P}_m\left(n_2^m>n\right)\vspace{0.1cm}\\
\displaystyle\leq\mathbf{P}_m\left(\inf_{t\geq n}\sum_{i=\tau_1+1}^{t+\tau_1}\ell_m(i)\leq-\frac{D(g||f)}{I^*(M,K)}\log c\right)\vspace{0.1cm}\\
\displaystyle\leq\sum_{t=n}^{\infty}\mathbf{P}_m\left(\sum_{i=\tau_1+1}^{t+\tau_1}\ell_m(i)\leq-\frac{D(g||f)}{I^*(M,K)}\log c\right)\vspace{0.1cm}\\
\displaystyle\leq\sum_{t=n}^{\infty}\mathbf{P}_m\left(\sum_{i=\tau_1+1}^{t+\tau_1}-\tilde{\ell}_m(i)\geq t\epsilon_1\right) \;,
\ena
\eeq
for all $t\geq n>-(1+\epsilon)\log c/I^*(M,K)$.\\
By applying the Chernoff bound, it can be shown that there exists $\gamma_1>0$ such that: $\mathbf{P}_m\left(\sum_{i=\tau_1+1}^{t+\tau_1}-\tilde{\ell}_m(i)\geq t\epsilon_1\right)\leq e^{-\gamma_1 t}$ for all $t\geq n>-(1+\epsilon)\log c/I^*(M,K)$. Hence, there exist $C_1>0$ and $\gamma_1>0$ such that $\mathbf{P}_m\left(n_2^m>n\right)\leq C_1e^{-\gamma_1 n}$ for all $n>-(1+\epsilon)\log c/I^*(M,K)$.  \\

\emph{\textbf{Case 2 :} $K<M$ and $I^*(M,K)=\frac{K D(f||g)}{M-1}$:} \\
In this case, the cell with the highest index is not observed for all $n$. As a result, cell $m$ is not observed for all $n\geq\tau_1$ since $S_m(n)>S_j(n)$ for all $j\neq m$ for all $n\geq\tau_1$.
Let
$N'_j(\tau_1+t)\triangleq\sum_{i=\tau_1+1}^{\tau_1+t}\mathbf{1}_j(i)$.
Let $j^*(\tau_1+t)=\arg\;\max_{j\neq m}N'_j(\tau_1+t)$ be the cell index that was observed the largest number of times since $\tau_1$ has occurred up to time~$\tau_1+t$. Note that if $\sum_{i=\tau_1+1}^{\tau_1+t}\ell_{j^*(\tau_1+t)}(i)\mathbf{1}_{j^*(\tau_1+t)}(i)\leq\log c$ for all $t\geq n$, then $n_2\leq n$. Hence,
\beq
\bea{l}
\displaystyle\mathbf{P}_m\left(n_2>n\right)\vspace{0.1cm}\\\hspace{0.5cm}
\displaystyle\leq\mathbf{P}_m\left(\sup_{t\geq n}\sum_{i=\tau_1+1}^{\tau_1+t}\ell_{j^*(\tau_1+t)}(i)\mathbf{1}_{j^*(\tau_1+t)}(i)\geq\log c\right) \;.
\ena
\eeq
Note that $N'_{j^*(\tau_1+t)}(\tau_1+t)\geq \frac{Kt}{M-1}$ (since $Kt$ observations are taken from $M-1$ cells). Thus, for any $\epsilon>0$ there exists $\epsilon_1>0$ such that:
\beq
\label{eq:lemma:tau_2_policy1_1}
\bea{l}
\displaystyle\sum_{i=\tau_1+1}^{\tau_1+t}\ell_{j^*(\tau_1+t)}(i)\mathbf{1}_{j^*(\tau_1+t)}(i)-\log c \vspace{0.1cm}\\
=\displaystyle\sum_{i=\tau_1+1}^{\tau_1+t}\tilde{\ell}_{j^*(\tau_1+t)}(i)\mathbf{1}_{j^*(\tau_1+t)}(i)
\vspace{0.1cm}\\\hspace{1.5cm}
-N'_{j^*(\tau_1+t)}(\tau_1+t)D(f||g)-\log c
\vspace{0.1cm}\\
\leq\displaystyle\sum_{i=\tau_1+1}^{\tau_1+t}\tilde{\ell}_{j^*(\tau_1+t)}(i)\mathbf{1}_{j^*(\tau_1+t)}(i) \vspace{0.1cm}\\ \hspace{2cm}
\displaystyle-\frac{tD(f||g)K}{M-1}\left(1-\frac{-(M-1)\log c}{tKD(f||g)}\right)
\vspace{0.1cm}\\
\leq\displaystyle\sum_{i=\tau_1+1}^{\tau_1+t}\tilde{\ell}_{j^*(\tau_1+t)}(i)\mathbf{1}_{j^*(\tau_1+t)}(i)-t\epsilon_1
\;,
\ena
\eeq
for all $t\geq n>-(1+\epsilon)\log c/I^*(M,K)=-(1+\epsilon)(M-1)\log c/(KD(f||g)$. \\
The rest of the proof follows by by applying the Chernoff bound.  \\

\emph{\textbf{Case 3 :} $K<M$ and $I^*(M,K)=D(g||f)+\frac{(K-1)D(f||g)}{M-1}$:} \\
In this case, the cell with the highest index is observed for all $n$. As a result, cell $m$ is observed for all $n\geq\tau_1$ since $S_m(n)>S_j(n)$ for all $j\neq m$ for all $n\geq\tau_1$. Therefore, a similar argument as in Case $1$ applies to cell $m$.
Next, we focus on cell $j^*(\tau_1+t)\neq m$ as in Case $2$. Note that in this case $N_{j^*(\tau_1+t)}(\tau_1+t)\geq \frac{(K-1)t}{M-1}$ (since $t$ observations are taken from cell $m$ and $(K-1)t$ observations are taken from $M-1$ cells (for $j\neq m$)). Thus,  (\ref{eq:lemma:tau_2_policy1_1}) can be developed for this case as well with minor modifications.
\vspace{0.3cm}
\end{proof}
In what follows we define the dynamic range of the false hypotheses in terms of their sum LLRs. Note that the dynamic range at time~$\tau_2$ (which is the time where sufficient information has been gathered to distinguish $H_m$ from at least one false hypothesis) can be viewed as the total amount of information remains to gather in order to distinguish $H_m$ from all the false hypotheses. Lemma~\ref{lemma:DR_tau1_policy1} shows that the dynamic range at time~$\tau_2$ is sufficiently small. \vspace{0.3cm}
\begin{definition}
The dynamic range of the false hypotheses at time~$t$ is defined as follows:
\beq
\displaystyle \DR(t)\triangleq\max_{j\neq m}S_j(t)-\min_{j\neq m}S_j(t) \;. \vspace{0.2cm}
\eeq
\end{definition}
%
\begin{lemma}
\label{lemma:DR_tau1_policy1}
Assume that the DGF policy is implemented indefinitely.
Then, for every fixed $\epsilon_1>0, \epsilon_2>0$ there exist $C>0$ and $\gamma>0$ such that
\beq
\bea{l}
\displaystyle\mathbf{P}_m\left(\DR(\tau_2)>\epsilon_1 n\right)\leq C e^{-\gamma n}
\vspace{0.1cm} \\ \hspace{3cm}
\displaystyle\forall n>-(1+\epsilon_2)\log c/I^*(M,K)\;,
\ena
\eeq
for all $m=1, 2, ..., M$.
\vspace{0.2cm}
\end{lemma}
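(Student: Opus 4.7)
The plan is to decompose the sum LLR of each false cell into a zero-mean random walk plus a deterministic drift driven by observation counts, bound each piece on a high-probability event, and exploit the DGF selection rule to force the observation counts across false cells to remain balanced. I first restrict attention to the event $\{\tau_1\leq n\}\cap\{n_2\leq n\}$, whose complement has probability $\leq C e^{-\gamma n}$ by Lemmas~\ref{lemma:tau_1_policy1} and~\ref{lemma:tau_2_policy1}, so that $\tau_2\leq 2n$. For $j\neq m$ I write $S_j(t)=W_j(t)-D(f||g)\,N_j(t)$ with $W_j(t)\triangleq\sum_{i=1}^{t}\tilde\ell_j(i)\mathbf{1}_j(i)$ a zero-mean sum, which yields the deterministic inequality
\[
\DR(t)\;\leq\;2\max_{j\neq m}|W_j(t)|+D(f||g)\bigl[\max_{j\neq m}N_j(t)-\min_{j\neq m}N_j(t)\bigr].
\]
A Chernoff / moment-generating-function argument identical in flavour to Lemma~\ref{lemma:W_star}, together with union bounds over the $M-1$ false cells and over $t\leq 2n$, gives $\mathbf{P}_m\!\left(\max_{j\neq m,\,t\leq 2n}|W_j(t)|>\epsilon_1 n/8\right)\leq C' e^{-\gamma' n}$.

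The central ingredient I intend to prove is the pairwise balancing inequality: for any two false cells $j,k$ and any $t\geq\tau_1$,
\[
|N_j(t)-N_k(t)|\;\leq\;\frac{2\max_{r\neq m}|W_r(t)|}{D(f||g)}+1.
\]
This follows from the DGF selection rule: after $\tau_1$ cell $m$ is ranked first (Lemma~\ref{lemma:tau_1_policy1}), so among the false cells DGF samples a contiguous top block in the $S$-ranking at every step, and whenever $j$ is sampled while $k$ is not we must have $S_j(t)>S_k(t)$, i.e., $W_j(t)-W_k(t)>D(f||g)(N_j(t)-N_k(t))$. Tracking the instants at which $N_j-N_k$ attains a new maximum (and symmetrically $N_k-N_j$) yields the displayed bound. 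Substituting into the earlier decomposition gives $\DR(t)\leq 4\max_{j\neq m}|W_j(t)|+D(f||g)$ uniformly over $t\leq 2n$, which is dominated by $\epsilon_1 n$ once the Chernoff bound on $|W_j|$ holds and $n$ is large enough that $D(f||g)=o(\epsilon_1 n)$; combining with the good-event bounds on $\tau_1,n_2$ and evaluating at the random time $t=\tau_2\leq 2n$ completes the proof.

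The main obstacle I expect is the rigorous verification of the pairwise balancing inequality in the multi-sample regime $K\geq 2$, since when $j$ and $k$ are simultaneously sampled $N_j-N_k$ stays constant while the identity of the sampled block drifts, and the implication ``$N_j-N_k$ strictly increases $\Rightarrow S_j(t)>S_k(t)$'' must be verified case by case with ties in the $S$-ranking carefully handled. Because DGF selects a contiguous block in the $S$-ranking, any cell inside the block has $S$-value at least that of any cell outside, so ties contribute only an $O(1)$ term which is absorbed into the additive constant, and an induction on $t$ closes the argument.
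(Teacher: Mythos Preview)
Your approach is sound and reaches the goal, but it differs from the paper's in a noteworthy way. The paper fixes the specific pair $\bar{j}=\arg\max_{j\neq m}S_j(\tau_2)$, $\underline{j}=\arg\min_{j\neq m}S_j(\tau_2)$, introduces the \emph{last crossing time} $t_0$ before $\tau_2$ at which $S_{\underline{j}}$ last exceeded $S_{\bar{j}}$, and observes that on $(t_0,\tau_2]$ (once $t_0\geq\tau_1$) whenever $\underline{j}$ is sampled so is $\bar{j}$; this forces $\overline{N}\geq\underline{N}$ over that interval, so the drift contribution is nonpositive and only a zero-mean sum remains to Chernoff-bound. The residual case $t_0<\tau_1$ is handled by further restricting to $\tau_1\leq\epsilon_3 n$ with $\epsilon_3=\epsilon_1/(4D(f||g))$. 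Your route instead proves a uniform pairwise count-balancing inequality by tracking new maxima of $N_j-N_k$; this is a more reusable statement (it controls $\DR(t)$ at \emph{every} $t$, not just $\tau_2$), whereas the paper's last-crossing device is more targeted and avoids the uniform detour.

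Two small gaps remain in your write-up. First, the balancing inequality as stated is not correct: at $t=\tau_1$ the imbalance $|N_j(\tau_1)-N_k(\tau_1)|$ can be as large as $\tau_1$ while $W_r(\tau_1)$ is arbitrarily small, and for $t>\tau_1$ the ``new maximum'' argument bounds $N_j(t)-N_k(t)$ by the running supremum $\sup_{s\le t}\max_r|W_r(s)|$, not by $|W_r(t)|$ at the current time. The corrected bound reads $|N_j(t)-N_k(t)|\le \max\bigl(\tau_1,\,2\sup_{s\le t}\max_{r\neq m}|W_r(s)|/D(f||g)+1\bigr)$. Second, and consequently, restricting merely to $\{\tau_1\le n\}$ is too coarse: the $\tau_1$ term then contributes $D(f||g)\cdot n$ to your bound on $\DR$, which need not be dominated by $\epsilon_1 n$. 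You must instead restrict to $\{\tau_1\le \epsilon' n\}$ with $\epsilon'$ small relative to $\epsilon_1/D(f||g)$; Lemma~\ref{lemma:tau_1_policy1} still gives exponential decay of the complement (this is exactly the role of the paper's $\epsilon_3$). With these two fixes your argument closes.
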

%
\begin{proof}
Note that
\beq\label{eq:pl_DR_tau1_policy1_1}
\bea{l}
\displaystyle\mathbf{P}_m\left(\DR(\tau_2)>\epsilon_1 n\right) \vspace{0.2cm} \\
\leq\displaystyle\mathbf{P}_m\left(\tau_2>n\right) 
+\displaystyle\mathbf{P}_m\left(\DR(\tau_2)>\epsilon_1 n, \tau_2\leq n\right) \vspace{0.2cm} \\
\ena
\eeq
Since $\tau_2=\tau_1+n_2$, applying Lemmas~\ref{lemma:tau_1_policy1},~\ref{lemma:tau_2_policy1} implies that the first term on the RHS of (\ref{eq:pl_DR_tau1_policy1_1}) decreases exponentially with $n$ for all $n>-(1+\epsilon_2)\log c/I^*(M,K)$ for every fixed $\epsilon_2>0$. It remains to show that the second term on the RHS of (\ref{eq:pl_DR_tau1_policy1_1}) decreases exponentially with $n$.
Let $\bar{j}=\arg\;\max_{j\neq m}S_j(\tau_2), \underline{j}=\arg\;\min_{j\neq m}S_j(\tau_2)$. Let $t_0$ be the smallest integer such that $S_{\underline{j}}(t)\leq S_{\bar{j}}(t)$ for all $t_0<t\leq\tau_2$. As a result, $\DR(\tau_2)>\epsilon_1 n$ implies
\begin{center}
$\displaystyle\sum_{t=t_0}^{\tau_2}\ell_{\bar{j}}\mathbf{1}_{\bar{j}}(t)
                        -\ell_{\underline{j}}\mathbf{1}_{\underline{j}}(t)>\epsilon_1 n$\;.
\end{center}
Note that the second term on the RHS of (\ref{eq:pl_DR_tau1_policy1_1}) can be rewritten as:
\beq\label{eq:pl_DR_tau1_policy1_2}
\bea{l}
\displaystyle\mathbf{P}_m\left(\DR(\tau_2)>\epsilon_1 n, \tau_2\leq n\right) \vspace{0.2cm} \\
=\displaystyle\mathbf{P}_m\left(\DR(\tau_2)>\epsilon_1 n, \tau_2\leq n, t_0\geq\tau_1\right) \vspace{0.2cm} \\
\displaystyle+\mathbf{P}_m\left(\DR(\tau_2)>\epsilon_1 n, \tau_2\leq n, t_0<\tau_1\right)
\ena
\eeq
Let $\underline{N}=\sum_{t=t_0}^{\tau_2}\mathbf{1}_{\underline{j}}(t), \overline{N}=\sum_{t=t_0}^{\tau_2}\mathbf{1}_{\bar{j}}(t)$.

First, we upper bound the first term on the RHS of (\ref{eq:pl_DR_tau1_policy1_2}). Note that for all $\tau_1\leq t_0<t\leq\tau_2$, if $\mathbf{1}_{\underline{j}}(t)=1$ then $\mathbf{1}_{\bar{j}}(t)=1$ (since $S_m(t)>S_j(t)$ for all $t\geq\tau_1$ and the decision maker observes either the $K$ cells with the top $K$ highest sum LLRs or those with the second to the $(K+1)^{th}$ highest sum LLRs). Hence, $\underline{N}\leq\overline{N}$.
Thus,
\beq
\label{eq:pl_DR_tau1_policy1_2_half}
\bea{l}
\displaystyle\displaystyle\sum_{t=t_0}^{\tau_2}\ell_{\bar{j}}\mathbf{1}_{\bar{j}}(t)
                        -\ell_{\underline{j}}\mathbf{1}_{\underline{j}}(t)
                    \vspace{0.2cm} \\
\displaystyle=\sum_{t=t_0}^{\tau_2}\left[\tilde{\ell}_{\bar{j}}\mathbf{1}_{\bar{j}}(t)
                        -\tilde{\ell}_{\underline{j}}\mathbf{1}_{\underline{j}}(t)\right]
                        -D(f||g)\left(\overline{N}-\underline{N}\right)
                    \vspace{0.2cm} \\
\displaystyle\leq\sum_{t=t_0}^{\tau_2}\tilde{\ell}_{\bar{j}}\mathbf{1}_{\bar{j}}(t)
                        -\tilde{\ell}_{\underline{j}}\mathbf{1}_{\underline{j}}(t)
%
\ena
\eeq
Similar to (\ref{eq:lemma_W_star_bound}), applying the Chernoff bound completes the proof for this case.

Next, we upper bound the second term on the RHS of (\ref{eq:pl_DR_tau1_policy1_2}).
Let
$\epsilon_3\triangleq\frac{\epsilon_1}{4D(f||g)}>0$.
Note that
\beq\label{eq:pl_DR_tau1_policy1_3}
\bea{l}
\displaystyle\mathbf{P}_m\left(\DR(\tau_2)>\epsilon_1 n, \tau_2\leq n, t_0<\tau_1\right) \vspace{0.2cm} \\
\leq\displaystyle\mathbf{P}_m\left(\tau_1>\epsilon_3 n\right) \vspace{0.2cm} \\
\displaystyle+\mathbf{P}_m\left(\DR(\tau_2)>\epsilon_1 n, \tau_2\leq n, t_0<\tau_1, \tau_1\leq\epsilon_3 n\right) \;.
\ena
\eeq
The first term on the RHS of (\ref{eq:pl_DR_tau1_policy1_3}) decreases exponentially with $n$ by Lemma~\ref{lemma:tau_1_policy1}. Thus, it remains to show that the second term on the RHS of (\ref{eq:pl_DR_tau1_policy1_3}) decreases exponentially with $n$.
Note that $\DR(\tau_2)>\epsilon_1 n$ implies
$\bea{l}
\displaystyle\left(\sum_{t=t_0}^{\tau_1}\ell_{\bar{j}}\mathbf{1}_{\bar{j}}(t)
                        -\ell_{\underline{j}}\mathbf{1}_{\underline{j}}(t)\right)
+\left(\sum_{t=\tau_1+1}^{\tau_2}\ell_{\bar{j}}\mathbf{1}_{\bar{j}}(t)
                        -\ell_{\underline{j}}\mathbf{1}_{\underline{j}}(t)\right)  \vspace{0.2cm} \\ \hspace{5cm}
                       \displaystyle >\epsilon_1 n \;. \vspace{0.2cm}
\ena$
Therefore, the second term on the RHS of (\ref{eq:pl_DR_tau1_policy1_3}) can be rewritten as:
\beq\label{eq:pl_DR_tau1_policy1_4}
\bea{l}
\displaystyle\mathbf{P}_m\left(\DR(\tau_2)>\epsilon_1 n, \tau_2\leq n, t_0<\tau_1, \tau_1\leq\epsilon_3 n\right) \vspace{0.2cm} \\
\displaystyle\leq\mathbf{P}_m\left(\sum_{t=t_0}^{\tau_1}\ell_{\bar{j}}\mathbf{1}_{\bar{j}}(t)
-\ell_{\underline{j}}\mathbf{1}_{\underline{j}}(t)
>\frac{\epsilon_1 n}{2}, \right. \\ \vspace{0.2cm} \hspace{3cm}\left.
\tau_2\leq n, t_0<\tau_1, \tau_1\leq\epsilon_3 n\right) \vspace{0.2cm} \\
\displaystyle+\mathbf{P}_m\left(\sum_{t=\tau_1+1}^{\tau_2}\ell_{\bar{j}}\mathbf{1}_{\bar{j}}(t)
-\ell_{\underline{j}}\mathbf{1}_{\underline{j}}(t)
>\frac{\epsilon_1 n}{2}, \right. \\ \vspace{0.2cm} \hspace{3cm}\left.
\tau_2\leq n, t_0<\tau_1, \tau_1\leq\epsilon_3 n\right)
\ena
\eeq
Note that for all $t_0<\tau_1+1\leq t\leq\tau_2$, if $\mathbf{1}_{\underline{j}}(t)=1$ then $\mathbf{1}_{\bar{j}}(t)=1$. As a result, cell $\bar{j}$ is probed more frequently between $\tau_1+1$ and $\tau_2$. Thus, the second term on the RHS of (\ref{eq:pl_DR_tau1_policy1_4}) decreases exponentially with $n$ using a similar argument as in (\ref{eq:pl_DR_tau1_policy1_2_half}).
Next, it remains to show that the first term on the RHS of (\ref{eq:pl_DR_tau1_policy1_4}) decreases exponentially with $n$.
Note that
\beq
\bea{l}
\displaystyle\displaystyle\sum_{t=t_0}^{\tau_1}\ell_{\bar{j}}\mathbf{1}_{\bar{j}}(t)
                        -\ell_{\underline{j}}\mathbf{1}_{\underline{j}}(t)
                    \vspace{0.2cm} \\
\displaystyle\leq\sum_{t=t_0}^{\tau_1}\left[\tilde{\ell}_{\bar{j}}\mathbf{1}_{\bar{j}}(t)
                        -\tilde{\ell}_{\underline{j}}\mathbf{1}_{\underline{j}}(t)\right]
                        +D(f||g)\tau_1
                    \vspace{0.2cm} \\
\displaystyle\leq\sum_{t=t_0}^{\tau_1}\left[\tilde{\ell}_{\bar{j}}\mathbf{1}_{\bar{j}}(t)
                        -\tilde{\ell}_{\underline{j}}\mathbf{1}_{\underline{j}}(t)\right]
                        +\frac{\epsilon_1}{4}n
                    \vspace{0.2cm} \\
\ena
\eeq
for all $\tau_1\leq\epsilon_3 n$.\\
As a result,
\beq
\bea{l}
\displaystyle\displaystyle\sum_{t=t_0}^{\tau_1}\ell_{\bar{j}}\mathbf{1}_{\bar{j}}(t)
                        -\ell_{\underline{j}}\mathbf{1}_{\underline{j}}(t) >\frac{\epsilon_1}{2} n
\ena
\eeq
implies
\beq
\bea{l}
\displaystyle\sum_{t=t_0}^{\tau_1}\left[\tilde{\ell}_{\bar{j}}\mathbf{1}_{\bar{j}}(t)
                        -\tilde{\ell}_{\underline{j}}\mathbf{1}_{\underline{j}}(t)\right]
                        >\frac{\epsilon_1}{4}n
                    \vspace{0.2cm} \\
\ena
\eeq
for all $\tau_1\leq\epsilon_3 n$.\\
Similar to (\ref{eq:lemma_W_star_bound}), applying the Chernoff bound completes the proof.
\vspace{0.2cm}
\end{proof}
\begin{definition}
The dynamic range $\DR_j(t)$ for $j\neq m$ at time~$t$ is defined as follows:
\beq
\displaystyle \DR_j(t)\triangleq\max_{k\neq m}S_k(t)-S_j(t) \;. \vspace{0.2cm}
\eeq
\end{definition}
Let
\beq
\label{eq:selection_policy1}\bea{l}
\displaystyle \eta\triangleq \vspace{0.2cm}\\ \hspace{0.0cm}
\begin{cases} \frac{D(f||g)}{I^*(M,K)}\log c \;\;, \;\;
                                                \mbox{if \;$K=M$}   \vspace{0.3cm}\\
              \log c \;\;, \;\;
              \mbox{if \;$K<M$ and $I^*(M,K)=K\frac{D(f||g)}{M-1}$}   \vspace{0.3cm}\\
              \frac{D'(f||g)}{I^*(M,K)}\log c \;\;, \vspace{0.1cm}\\ \hspace{0.3cm}
              \mbox{if \;$K<M$ and $I^*(M,K)=D(g||f)+\frac{(K-1)D(f||g)}{M-1}$}
\end{cases}
\ena
\eeq
where $D'(f||g)$ is defined in (\ref{eq:D_prime_fg}).
\begin{definition}
$\tau_3^j$ denotes the smallest integer such that $\sum_{i=\tau_1+1}^{n}\ell_j(i)\mathbf{1}_j(i)\leq\eta+\DR_j(\tau_1)$ for $j\neq m$ for all $n\geq\tau_3^j\geq\tau_2$. We also define $\tau_3\triangleq\max_{j\neq m}\tau_3^j$. \vspace{0.2cm}
\end{definition}
Note that $\tau_3^j\geq\tau_2$ by definition (i.e., both $\tau_2$ has passed and the inequality holds for all $n\geq\tau_3^j$). \vspace{0.2cm}
\begin{remark}
Using some algebraic manipulations, it can be verified that $\Delta S_{m,j}(n)\geq-\log c$ for all $j\neq m$ for all $n\geq\tau_3^j$. Since $\tau_3=\max_{j\neq m}\tau_3^j$ we have $\Delta S(n)=S_m(n)-S_{m^{(2)}(n)}(n)\geq-\log c$ for all $n\geq\tau_3$. It should be noted that $\tau_3$ depends on the future and is not a stopping time. The decision maker does not know whether it has arrived. However, it is used to upper bound the actual stopping time under DGF. Note that the stopping time under DGF stops the sampling once $\Delta S(n)\geq-\log c$ \emph{first} occurs. If $\Delta S(n)\geq-\log c$ first occurs once $\tau_3$ occurs, then $\tau=\tau_3$. Otherwise, $\tau<\tau_3$. \vspace{0.2cm}
\end{remark}
\begin{definition}
$n_3\triangleq\tau_3-\tau_2$ denotes the total amount of time between $\tau_2$ and $\tau_3$. \vspace{0.2cm}
\end{definition}
%
\begin{lemma}
\label{lemma:tau_3_policy1}
Assume that the DGF policy is implemented indefinitely.
Then, for every fixed $\epsilon>0$ there exist $C>0$ and $\gamma>0$ such that
\beq\label{eq:l_tau3}
\mathbf{P}_m\left(n_3>n\right)\leq C e^{-\gamma n} \;\;\;\; \forall n>-\epsilon\log c/I^*(M,K)\;,
\eeq
for all $m=1, 2, ..., M$.
\vspace{0.2cm}
\end{lemma}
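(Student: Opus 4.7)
The plan is to reduce the bound on $\mathbf{P}_m(n_3 > n)$ to an exponentially-small tail probability of a zero-mean random walk via a union bound and a Chernoff argument. First, since $n_3 = \max_{j\neq m}(\tau_3^j - \tau_2)$, I would apply a union bound $\mathbf{P}_m(n_3 > n) \leq \sum_{j\neq m}\mathbf{P}_m(\tau_3^j - \tau_2 > n)$. Because $\tau_3^j$ is a last-passage time, the event $\{\tau_3^j - \tau_2 > n\}$ implies that $\sum_{i=\tau_1+1}^{\tau_2+n'}\ell_j(i)\mathbf{1}_j(i) > \eta + \DR_j(\tau_1)$ for some $n' \geq n$, and a second union bound over $n'$ reduces the problem to bounding a single-time exceedance event.

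Next, I would condition on the favorable event $\{\tau_1 \leq \epsilon_0 n\} \cap \{\DR(\tau_2) \leq \epsilon_0 n^\star\}$, where $n^\star = \max(n,(1+\epsilon_0)(-\log c/I^*(M,K)))$ and $\epsilon_0 > 0$ is chosen small relative to the target $\epsilon$. By Lemmas~\ref{lemma:tau_1_policy1} and~\ref{lemma:DR_tau1_policy1}, the complementary event has probability at most $Ce^{-\gamma n}$ whenever $n > -\epsilon\log c/I^*(M,K)$. Moreover $\DR_j(\tau_1)$ is dominated by a sum of $\tau_1$ i.i.d.\ LLRs and so is $O(\epsilon_0 n)$ with high probability. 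On this event, $\sum_{i=\tau_1+1}^{\tau_2}\ell_j(i)\mathbf{1}_j(i) \leq \eta + \DR(\tau_2) + \DR(\tau_1) = \eta + O(\epsilon_0 n)$, so the exceedance event reduces to the requirement $\sum_{i=\tau_2+1}^{\tau_2+n'}\ell_j(i)\mathbf{1}_j(i) > -O(\epsilon_0 n)$.

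Decomposing $\sum_{i=\tau_2+1}^{\tau_2+n'}\ell_j(i)\mathbf{1}_j(i) = -D(f||g)\, N_j[\tau_2{+}1,\tau_2{+}n'] + W_j'$, where $W_j'$ is zero-mean, the exceedance event forces $W_j' > \alpha D(f||g)\, n' - O(\epsilon_0 n)$, with $\alpha \in \{1,\,K/(M-1),\,(K-1)/(M-1)\}$ being the lower bound on the fraction of steps at which cell $j$ is observed under DGF after $\tau_1$, matching the three cases in the definition of $I^*(M,K)$. For $n' \geq n$ and $\epsilon_0$ chosen sufficiently small, the right-hand side is $\Theta(n')$, so a Chernoff bound on the zero-mean $W_j'$ (as in Lemma~\ref{lemma:W_star}) gives exponential decay in $n'$. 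Summing over $n' \geq n$ and over $j \neq m$ yields~(\ref{eq:l_tau3}).

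The main obstacle is justifying the linear-in-time lower bound on $N_j$ rigorously, because the observation schedule under DGF is coupled with the entire history of past LLRs: cell $j$ may be temporarily excluded from the probed set when its sum LLR is among the lowest. The cleanest route is a rotation-style argument analogous to the LRPT-scheduling analogy in Section~\ref{sssec:comparison}: DGF always probes the $K$ (or $K{-}1$) cells with the largest sum LLR among $j\neq m$, and the observed cells' sum LLRs decay under negative drift, so the bounded dynamic range ensures that no cell $j\neq m$ can remain unobserved for more than $O(\DR(\tau_2)/D(f||g))$ consecutive steps, yielding the required proportionality $N_j[\tau_2{+}1,\tau_2{+}n'] \geq \alpha n' - O(\DR(\tau_2))$.
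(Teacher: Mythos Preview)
Your approach diverges from the paper's in a way that creates the very obstacle you identify at the end. You try to bound the \emph{time} $\tau_3^j-\tau_2$ directly, which forces you to lower-bound the number of observations $N_j[\tau_2{+}1,\tau_2{+}n']$ of each individual cell $j$ by a linear function of $n'$. Your claimed constants $\alpha\in\{1,K/(M-1),(K-1)/(M-1)\}$ are only \emph{average} sampling fractions across the $M-1$ cells, not per-cell deterministic lower bounds; a specific cell $j$ whose sum LLR happens to be low can go unobserved for long stretches under DGF. Your proposed rotation fix in the last paragraph is plausible in spirit but is not worked out, and making it rigorous would require controlling the random fluctuations of all the competing sum-LLR paths simultaneously.

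The paper sidesteps this difficulty entirely by counting \emph{observations} rather than time. It defines $N_3^j$ as the number of observations of cell $j$ between $\tau_2$ and $\tau_3^j$, observes that $n_3\le\sum_{j\neq m}N_3^j$, and then bounds each $\mathbf{P}_m(N_3^j>n)$. The point is that the increments along the subsequence of times at which cell $j$ is actually probed are i.i.d.\ with mean $-D(f\|g)$, so conditioned on $\DR(\tau_2)\le nD(f\|g)/2$ (which is handled by Lemma~\ref{lemma:DR_tau1_policy1}), a straight Chernoff bound on these i.i.d.\ increments shows that more than $n$ observations are needed only with exponentially small probability. No lower bound on how frequently cell $j$ is visited is ever required. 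This is the missing idea in your proposal: reparametrize from calendar time to observation count for each cell, and use $n_3\le\sum_{j\neq m}N_3^j$ to tie the two together.
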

%
\begin{proof}
Let $N_3^j$ for $j\neq m$ denote the total number of observations, taken from cell $j$ between $\tau_2$ and $\tau_3^j$.
Note that $n_3\leq\sum_{j\neq m}N_3^j$. Thus, it suffices to show that $\mathbf{P}_m\left(N_3^j>n\right)$ decreases exponentially with $n$.
Note that
\beq\label{eq:pl_tau3_1}
\bea{l}
\mathbf{P}_m\left(N_3^j>n\right) 
\leq\mathbf{P}_m\left(\DR(\tau_2)>n\frac{D(f||g)}{2}\right) \vspace{0.2cm} \\ \hspace{2.2cm}
+\mathbf{P}_m\left(N_3^j>n \; | \; \DR(\tau_2)\leq n\frac{D(f||g)}{2}\right)
\;.
\ena
\eeq
By Lemma~\ref{lemma:DR_tau1_policy1}, the first term on the RHS of (\ref{eq:pl_tau3_1}) decreases exponentially with $n$ for all $n>-\epsilon\log c/I^*(M,K)$. Thus, it remains to show that the second term on the RHS of (\ref{eq:pl_tau3_1}) decreases exponentially with $n$. \\
Let $t_1, t_2, ...$ denote the time indices when cell $j$ is observed between $\tau_2$ and $\tau_3^j$. Since $\tau_2$ has occurred and $\DR(\tau_2)\leq n\frac{D(f||g)}{2}$, $\tau_3^j$ occurs once $\sum_{i=1}^{r}-\ell_{j}(t_i)\geq n\frac{D(f||g)}{2}$ holds for all $r\geq N_3^j$.
As a result,
\beq
\bea{l}
\displaystyle\mathbf{P}_m\left(N_3^j>n \; | \; \DR(\tau_2)\leq n\frac{D(f||g)}{2}\right)\vspace{0.1cm} \\ %
\displaystyle\leq\mathbf{P}_m\left(\inf_{r>n}\sum_{i=1}^{r}-\ell_j(t_i)<n\frac{D(f||g)}{2}\right)
\vspace{0.1cm} \vspace{0.1cm} \\
\displaystyle\leq\sum_{r=n}^{\infty}\mathbf{P}_m\left(\sum_{i=1}^{r}\tilde{\ell}_j(t_i)>r\frac{D(f||g)}{2}\right)
\vspace{0.1cm} \vspace{0.1cm}\;.
\ena
\eeq
Thus, it suffices to show that there exists $\gamma>0$ such that $\mathbf{P}_m\left(\sum_{i=1}^{n}\tilde{\ell}_j(t_i)>n\frac{D(f||g)}{2}\right)\leq e^{-\gamma n}$. Applying the Chernoff bound and using the i.i.d. property of $\tilde{\ell}_j(t_i)$ completes the proof.
\vspace{0.2cm}
\end{proof}

\begin{lemma}
\label{lemma:expected_time_policy1}
The expected detection time $\tau$ under the DGF policy is upper bounded by:
\beq\label{eq:lemma_expected_time}
\mathbf{E}_m(\tau)\leq -\left(1+o(1)\right)\frac{\log(c)}{I^*(M,K)} \;,
\eeq
for $m=1, ..., M$.
\vspace{0.2cm}
\end{lemma}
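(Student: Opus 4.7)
The plan is to use the almost-sure bound $\tau \le \tau_3 = \tau_1 + n_2 + n_3$, which follows from the remark preceding the lemma: once $n \ge \tau_3$, the condition $\Delta S(n) \ge -\log c$ holds, so the DGF stopping rule in~(\ref{eq:stopping_policy1}) must already have triggered. Taking expectations under $\mathbf{P}_m$ gives
\[
\mathbf{E}_m(\tau) \;\le\; \mathbf{E}_m(\tau_1) + \mathbf{E}_m(n_2) + \mathbf{E}_m(n_3),
\]
so it suffices to estimate each of the three pieces using the tail bounds in Lemmas~\ref{lemma:tau_1_policy1}--\ref{lemma:tau_3_policy1}, respectively.

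First I would dispose of $\mathbf{E}_m(\tau_1)$. Lemma~\ref{lemma:tau_1_policy1} gives $\mathbf{P}_m(\tau_1>n)\le Ce^{-\gamma n}$ with constants independent of $c$, so by the standard identity $\mathbf{E}_m(\tau_1)=\sum_{n\ge 0}\mathbf{P}_m(\tau_1>n) = O(1)$. This contribution is negligible at the target scale $-\log c/I^*(M,K)$.

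For $\mathbf{E}_m(n_2)$ I would use a two-block splitting. Fix $\epsilon>0$ and set $n_\epsilon = \lceil -(1+\epsilon)\log c/I^*(M,K)\rceil$. Writing $\mathbf{E}_m(n_2)=\sum_{n\ge 0}\mathbf{P}_m(n_2>n)$, bound the initial block of length $n_\epsilon$ trivially by $n_\epsilon$, and apply Lemma~\ref{lemma:tau_2_policy1} to the tail; since the exponential bound in that lemma kicks in precisely at $n_\epsilon$ and has constants that do not depend on $c$, the tail sum contributes $O(1)$. This yields
\[
\mathbf{E}_m(n_2) \;\le\; -(1+\epsilon)\log c/I^*(M,K) + O(1),
\]
and since $\epsilon>0$ is arbitrary, $\mathbf{E}_m(n_2) \le -(1+o(1))\log c/I^*(M,K)$ as $c\to 0$. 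The identical splitting applied to Lemma~\ref{lemma:tau_3_policy1} with threshold $-\epsilon\log c/I^*(M,K)$ gives $\mathbf{E}_m(n_3) \le -\epsilon\log c/I^*(M,K) + O(1)$ for every fixed $\epsilon>0$, hence $\mathbf{E}_m(n_3) = o(-\log c/I^*(M,K))$.

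Summing the three bounds delivers~(\ref{eq:lemma_expected_time}). There is no serious obstacle here: all the real work was done in the three preceding tail lemmas, and the only point requiring care is the uniformity in $c$ of the constants appearing in Lemmas~\ref{lemma:tau_2_policy1} and~\ref{lemma:tau_3_policy1}. That uniformity must be verified explicitly to ensure that the geometric tail beyond $n_\epsilon$ is genuinely $O(1)$ independent of $c$, so that the final $o(1)$ can legitimately absorb both the arbitrary $\epsilon$ used for $n_2$ and the $\epsilon$-prefactor arising for $n_3$.
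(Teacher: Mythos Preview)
Your proposal is correct and follows exactly the approach of the paper: the paper's proof consists of the single observation that $\tau\le\tau_3=\tau_1+n_2+n_3$ and then invokes Lemmas~\ref{lemma:tau_1_policy1}, \ref{lemma:tau_2_policy1}, and~\ref{lemma:tau_3_policy1}. You have simply spelled out the routine details of converting those tail bounds into expectation bounds via $\mathbf{E}_m(X)=\sum_{n\ge 0}\mathbf{P}_m(X>n)$ and the two-block splitting, together with the (correct and relevant) remark that the constants $C,\gamma$ in Lemmas~\ref{lemma:tau_2_policy1} and~\ref{lemma:tau_3_policy1} are uniform in $c$.
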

%
\begin{proof}
Note that $\tau\leq\tau_3=\tau_1+n_2+n_3$. Thus, combining Lemmas~\ref{lemma:tau_1_policy1},~\ref{lemma:tau_2_policy1} and~\ref{lemma:tau_3_policy1} completes the proof.
\vspace{0.2cm}
\end{proof}

Combining Lemmas~\ref{lemma:error_policy1},~\ref{lemma:expected_time_policy1} and Theorem~\ref{th:lower_bound} yields the asymptotic optimality property of the DGF policy, presented in Theorem~\ref{th:optimality_policy1}.

\subsection{Proof of Theorem~\ref{th:optimality_Chernoff}}
\label{app:proof_Chernoff}

Following the same argument as in~\cite{Nitinawarat_2013_Controlled}, it suffices to show that $\mathbf{P}_m\left(\tau_1>n\right)$ decreases polynomially with $n$ to prove the theorem. Since $D(g||f)>0$, the KL divergence between the true hypothesis $m$ and any false hypothesis $j\neq m$ is strictly positive under any observed cell. For the ease of presentation, consider the case where $K=1$. In the case where $D(g||f)\geq D(f||g)/(M-1)$, the Chernoff test selects $m^{(1)}(n)$ for all $n$. As a result, exponential decay of $\mathbf{P}_m\left(\tau_1>n\right)$ follows directly from Lemma~\ref{lemma:tau_1_policy1}. In the case where $D(g||f)<D(f||g)/(M-1)$, the Chernoff test selects $m^{j}(n)$ for $j\neq 1$ randomly for all $n$. As a result, polynomial decay of $\mathbf{P}_m\left(\tau_1>n\right)$ follows by a similar argument as in~\cite{Nitinawarat_2013_Controlled} for the extended Chernoff test. Note that the proof directly applies to the case where $K>1$ since $P_m(\tau_1>n)$ decreases as the number $K$ of observations collected at a time increases.

\newcommand*{\QEDA}{\hfill\ensuremath{\blacksquare}}
\QEDA

\subsection{Proof of Theorem~\ref{th:optimality_policy2}}
\label{app:proof_policy2}

The proof follows a similar line of arguments as in the proof of Theorem~\ref{th:optimality_policy1}. Hence, we provide here only a sketch of the proof. First, similar to Lemma~\ref{lemma:error_policy1}, it can be verified that declaring the target locations once $S_{m^{(L)}}-S_{m^{(L+1)}}\geq-\log c$ occurs achieves an error probability $O(c)$. Second, similar to Lemma~\ref{lemma:expected_time_policy1}, it can be verified that the detection time approaches $-\log c/I^*(M,K)$. For example, if $\frac{D(g||f)}{L}\geq\frac{D(f||g)}{M-L}$ and $K\geq L$ then all the $L$ targets and a fraction $r=\frac{K-L}{M-L}$ of the false hypotheses are observed at each given time in the asymptotic regime. Therefore, the detection time approaches $\frac{-\log c}{D(g||f)+rD(f||g))}$. Similar arguments apply to the rest of the cases.

\QEDA

\subsection{Proof of Theorem~\ref{th:optimality_policy3}}
\label{app:proof_policy3}

The proof follows a similar line of arguments as in the proofs of Theorems~\ref{th:optimality_policy1} and~\ref{th:optimality_policy2}. Hence, we provide only a sketch of the proof. With minor modifications to Theorem~\ref{th:optimality_policy2}, it can be verified that (\ref{eq:asymptotic_performance_policy3}) is the asymptotic lower bound on the Bayes risk when the number of targets $\ell$ is known, $K=1$ and (\ref{eq:M}) holds. Similar to Lemma~\ref{lemma:error_policy1}, it can be verified that declaring a target once $S_m(n)\geq -\log c$ occurs achieves an error probability $O(c)$. Following a similar argument as in Lemma~\ref{lemma:tau_1_policy1}, it can be verified that the $\ell$ targets are tested before testing the $M-\ell$ normal processes with high probability in the asymptotic regime. Since the decision maker declares the target locations once $S_m(n)\geq -\log c$ for any $m$, the Bayes risk approaches (\ref{eq:asymptotic_performance_policy3}) as $c\rightarrow 0$.

\QEDA

\bibliographystyle{ieeetr}

%
%
\end{document}